\documentclass[aps,prl,twocolumn,superscriptaddress]{revtex4-2}
\usepackage{fullpage}
\usepackage{tikz}
\usepackage{amsthm,amsmath,amssymb}
\usepackage[hidelinks,draft=false,colorlinks=true]{hyperref}
\usepackage[nameinlink]{cleveref} % After hyperref, otherwise errors!
\usepackage{comment}
\usepackage{tikz-3dplot}
\usetikzlibrary{decorations.pathreplacing}

\newtheorem{theorem}{Theorem}

\theoremstyle{definition}

\newcommand{\tr}{\operatorname{Tr}}
\newcommand{\id}{\mathrm{Id}}

\newcommand{\ket}[1]{\vert #1 \rangle}
\newcommand{\bra}[1]{\langle #1 \vert}

\tikzset{
  tensor/.style={
    inner sep = 0.055cm,
    shape = circle,
    draw,
    fill
  },
  t/.style={
    inner sep = 0.03cm,
    shape = circle,
    draw,
    fill
  },
  every picture/.style = {
    baseline=-1mm,
    font=\scriptsize
    },
}

\begin{document}

\title{The local characterization of global tensor network eigenstates}

\author{Jos\'e Garre Rubio}
\affiliation{Instituto de F\'isica Te\'orica, UAM-CSIC, C. Nicol\'as Cabrera 13-15, Cantoblanco, 28049 Madrid, Spain}
\author{Andr\'as Moln\'ar}
\affiliation{\mbox{University of Vienna, Faculty of Mathematics, Oskar-Morgenstern-Platz 1, 1090 Vienna, Austria}}
\author{Norbert Schuch}
\affiliation{\mbox{University of Vienna, Faculty of Mathematics, Oskar-Morgenstern-Platz 1, 1090 Vienna, Austria}}
\affiliation{University of Vienna, Faculty of Physics, Boltzmanngasse 5, 1090 Vienna, Austria}
\author{Frank Verstraete}
\affiliation{ Department of Physics and Astronomy, Ghent University, Krijgslaan 299, 9000 Gent, Belgium}
\affiliation{ Department of Applied Mathematics and Theoretical Physics, University of Cambridge, Wilberforce Road, Cambridge, CB3 0WA, United Kingdom}
%\date{\today}

\begin{abstract}
We study the conditions under which Matrix Product States (MPS) or Matrix Product Operators are exact eigenvectors of an extensive local operator, such as a Hamiltonian. By suitably choosing the local operator, this covers a wide range of settings: Exact  eigenstates of Hamiltonians, including scar states, exact MPS trajectories for driven quantum systems, steady states of local Lindbladians, generalized symmetries of either Hamiltonians or density matrices, and many more. Our key result is that a local, fixed-size equation---namely, how a single term in the operator acts on a block of tensors---provides a necessary and sufficient condition for exact solutions. 
This local characterization allows us to obtain the full space of MPS solutions to all of the aforementioned problems; in particular, we exemplify the power of our results by recovering the quantum group symmetries of the XXZ model.
We also discuss applications to numerical algorithms with MPS and the generalization of our results to 2D, i.e., projected entangled pair states (PEPS). 

\end{abstract}

\maketitle

\emph{Introduction.---}Understanding the spectrum and the eigenstates of quantum systems is fundamental to condensed matter physics and quantum information. A central question is how to characterize the eigenstates of Hamiltonians that appear in nature: those consisting of local interactions (given by a sum of local terms). 

The characterization of the eigenstates in these systems, both ground states and excited states, is essential because they provide direct access to observable physical properties and enable the classification of quantum phases. Yet solving eigenvalue problems for local Hamiltonians in general is computationally difficult—already determining the ground state energy is QMA-hard \cite{kempe2005complexH}—and a direct solution becomes intractable for large systems.

Given this computational difficulty, physicists have long sought systematic
methods to find exact eigenstates by assuming they belong to a restricted class of states with some inherent structure. The arguably most successful such class are matrix product states (MPS), where the many-body wavefunction coefficients are constructed as a trace over a product of matrices, $\bra{i_1\cdots
i_N}\psi_N\rangle = \tr(A^{i_1}\cdots A^{i_N})$, where $A^i$ is a $D\times D$ matrix for every $i=1,\dots,d$ of the local Hilbert space~\cite{reviewPEPS}. The matrix indices are denoted by Greek letters $(\alpha,\beta,\cdots)$ and are dubbed {\it virtual} indices. Graphically, the state can be represented as 
\begin{equation*}
    \ket{\psi_N}  =
    \begin{tikzpicture}[xscale=0.7, yscale=0.4, font=\scriptsize]
        \draw[rounded corners] (-0.5,0) rectangle (3,-1.2);
        \draw (0,0)--(2.5,0);
        \node[tensor,label=below:$A$] (l) at (0,0) {};
        \draw (l)--++(0,1);
        \node[tensor,label=below:$A$] (m) at (1,0) {};
        \draw (m)--++(0,1);
        \node[tensor,label=below:$A$] (r) at (2.5,0) {};
        \draw (r)--++(0,1);
        \node[fill=white] at (1.75,0) {$\dots$};
    \end{tikzpicture} \ ,
\end{equation*}
where lines denote indices, and connected lines are summed over (contracted).
This ansatz is remarkably powerful for several reasons. First, gapped ground
states of local Hamiltonians in one dimension (1D) satisfy an area law for
the entanglement entropy, meaning their entanglement scales with the boundary of a
region rather than its volume \cite{Hastings07B}: in 1D this results in a constant
entanglement entropy. MPS naturally respect this area law and provide efficient
representations of area law states \cite{PhysRevB.73.094423,Hastings07A}. Second,
the MPS ansatz forms the foundation of the highly successful Density Matrix
Renormalization Group (DMRG) algorithm \cite{White92A}, which has become the
standard numerical tool for studying 1D quantum systems. Third,
many relevant models—such as the AKLT model and its generalizations—have ground
states that admit exact MPS representations with small bond dimension, providing
paradigmatic examples.

The defining advantage of the MPS ansatz is that the global properties of a many-body state—even those in the thermodynamic limit—are encoded within the structure of its individual, local tensors. This local encoding \cite{PhysRevLett.100.167202,Molnar18A} has been the cornerstone for classifying symmetry protected topological (SPT) phases \cite{Chen11, Schuch11}, describing topologically ordered states \cite{Bultinck17A}, understanding density matrix RG fixed points \cite{Cirac17A} and many other applications \cite{reviewPEPS}. It has long been ``folklore'' that not only local and global symmetries can be represented by local constraints, but also whether an MPS is an eigenstate of a local operator (as in Baxter's work \cite{BAXTER19731} for tensor product states). This has been used, for example, to find symmetries in spin chains 
\cite{MingruSym,fendley2025xyzintegrabilityeasyway,fukai2026matrixproductoperatorrepresentations,yashin2026twoparameterfamiliesmpointegrals},
also in non-equilibrium statistical physics \cite{Derrida_1993,Blythe_2007}, even if the MPS involved there have infinite bond dimension, to uncover duality transformations \cite{frassek2026}, in finding scar states \cite{Lin19,Sanada_2023}, as well as in many other examples. Despite this, a rigorous proof of necessity---i.e., that any symmetry can be represented locally---has been missing. Establishing a complete, ``if and only if'' local characterization of symmetries is essential to transform MPS from a successful numerical tool into a robust analytical framework for exploring and characterizing such scenarios in quantum many-body spin systems.

This is precisely what we achieve in this work: We derive a necessary and sufficient local equation---a ``complete characterization''---that describes how a $k$-local Hamiltonian acts on a block of $k$ adjacent MPS tensors. This result provides a unified
framework that covers a remarkably diverse range of physical scenarios. Beyond simply identifying ground states, our characterization allows for the systematic discovery of exact quantum many-body scars, the determination of steady states in dissipative Lindbladian dynamics, and the identification of all generalized matrix product operator (MPO) symmetries for a given Hamiltonian. By reducing a complex global problem to a simple local identity, our framework provides a rigorous foundation for existing numerical methods and offers a powerful new tool for discovering exact solutions in both 1D and 2D quantum systems.

\emph{Injective MPS.---}Our result applies to \emph{injective} MPS \cite{fannes1992finitely}, a central object in the theory of tensor networks. Their importance is highlighted by the following facts: (1) Any MPS, after blocking, decomposes into a sum of injective MPS. (2) Injective MPS are orthogonal to each other in the thermodynamic limit. (3) Injective MPS are unique ground states of local, gapped Hamiltonians. (4) An injective MPS is normalizable in the thermodynamic limit; this distinguishes it from almost all ``excited'' states which are non-normalizable plane waves of quasiparticles (with the exception of scar states). 
%An MPS is called injective if the map $\Gamma_L: X\mapsto \sum_i \tr(XA^{i_1} \dots A^{i_L}) \ket{i_1 \dots i_L}$ is injective for some $L$. If $\Gamma_L$ is injective, then so is $\Gamma_M$ for any $M\geq L$. The minimal such $L$ is called the injectivity length of $A$. For this $L$, there is a left inverse of $A$: a tensor $\tilde{A}$ on $L$ sites defined by the matrices $\tilde{A}^{i_1 \dots i_L}$ such that $\sum_i \tilde{A}^{i_1 \dots i_L}_{\gamma \epsilon} \cdot (A^{i_1} \dots A^{i_L})_{\alpha \beta}  = \delta_{\alpha\gamma} \cdot \delta_{\beta \epsilon}$~\cite{reviewPEPS}. The principal result of this work is the characterization of injective MPS that are the eigenvectors of local operators. 
An MPS is injective, with injectivity length $L$, if there is a left inverse of $A$: a tensor $\tilde{A}$ on $L$ sites defined by the matrices $\tilde{A}^{i_1 \dots i_L}$ such that $\sum_i \tilde{A}^{i_1 \dots i_L}_{\gamma \epsilon} \cdot (A^{i_1} \dots A^{i_L})_{\alpha \beta}  = \delta_{\alpha\gamma} \cdot \delta_{\beta \epsilon}$~\cite{reviewPEPS}. 

Our key result is the characterization of injective MPS that are eigenstates of local operators.

\emph{Main theorem.---}Let $\mathcal{O}_N$ be an operator acting on $N$ particles that is the sum of all translations of a $k$-local operator $O_{[i]}$ acting on sites from $i$ to $i+k-1$, $\mathcal{O}_N= \sum_{i=1}^N O_{[i]}$. An injective MPS $\ket{\psi_N}$ on $N$ sites defined by the matrices $A^i$ with injectivity length $L$ such that $N\geq 2L+2k-1$ is an exact eigenstate of $\mathcal{O}_N$,
\begin{equation}\label{eq:HGS}
    \mathcal{O}_N \ket{\psi_N} = E_N \ket{\psi_N},
\end{equation}
if and only if there exist ($d^k$ of size $D\times D$) matrices $B^{i_1 \dots i_k}$ such that
\begin{equation}
\begin{aligned}\label{eq:hAB}
   &\sum_{j\gamma} (O^{i_1\cdots i_k}_{j_1\cdots j_k}-\epsilon) A^{j_1}_{\alpha\gamma_1}\cdots A^{j_k}_{\gamma_k\beta} = \\
   &\sum_{\gamma} A^{i_1}_{\alpha\gamma}B^{i_2\cdots i_k}_{\gamma\beta}
  -\sum_{\gamma}B^{i_1\cdots i_{k-1}}_{\alpha\gamma}A^{i_k}_{\gamma\beta},
\end{aligned}
\end{equation}
where $\epsilon = E_N/N$. In graphical notation~\cite{reviewPEPS},
\begin{equation*}
    \begin{aligned}
    \begin{tikzpicture}[yscale=0.7, xscale=0.6]
        \draw (-0.5,0)--(3,0);
        \node[tensor,label=below:$A$] (l) at (0,0) {};
        \draw (l)--++(0,1);
        \node[tensor,label=below:$A$] (m) at (1,0) {};
        \draw (m)--++(0,1);
        \node[tensor,label=below:$A$] (r) at (2.5,0) {};
        \draw (r)--++(0,1);
        \draw[line width=1.7mm,line cap=round] (0,0.5) -- (2.5,0.5);
        \node[fill=white,inner sep = 1pt] at (1.75,0) {$\dots$};
        \node[anchor=south,inner sep=5pt] at (1.75,0.5) {$O-\epsilon$};
    \end{tikzpicture}  \:=\:
    \begin{tikzpicture}[yscale = 0.7, xscale=0.6]
        \draw (-0.5,0)--(3,0);
        \draw (1,0)--++(0,0.5);
        \draw (2.5,0)--++(0,0.5);
        \node[tensor,label=below:$A$] (r) at (0,0) {};
        \draw (r)--++(0,0.5);
        \draw[line width=2.1mm,line cap=round] (1,0) -- (2.5,0);
        \draw[red,line width=1.7mm,line cap=round] (1,0) -- (2.5,0);
        \node[anchor=north,inner sep = 1.7mm] at (1.75,0) {$B$};
        \node at (1.75,0.3) {$\dots$};
    \end{tikzpicture}  \:-\:  
    \begin{tikzpicture}[yscale=0.7,xscale=0.6]
        \draw (-0.5,0)--(3,0);
        \node[tensor,label=below:$A$] (l) at (2.5,0) {};
        \draw (l)--++(0,0.5);
        \draw (0,0)--++(0,0.5);
        \draw (1.5,0)--++(0,0.5);
        \draw[line width=2.1mm,line cap=round] (0,0) -- (1.5,0);
        \draw[red,line width=1.7mm,line cap=round] (0,0) -- (1.5,0);
        \node[anchor=north,inner sep = 1.7mm] at (0.75,0) {$B$};
        \node at (0.75,0.3) {$\dots$};
    \end{tikzpicture} \ .
    \end{aligned}
\end{equation*}

Let us emphasize some facets of this theorem. First, it is enough to assume that
Eq.~\eqref{eq:HGS} holds for \emph{one} given system size to conclude that
Eq.~\eqref{eq:hAB} holds. Then, as Eq.~\eqref{eq:hAB} is independent of $N$, the
MPS is an eigenstate for any other system size: 
$\mathcal{O}_K\ket{\psi_K} = E_K
\ket{\psi_K}, \ \forall K$, where $E_K = \epsilon\cdot K$. Second, $\mathcal{O}$
does not need to be Hermitian, and $E_N$ does not need to be its smallest (or largest) eigenvalue. Moreover, we do not assume that $\ket{\psi_N}$ is the unique eigenstate with eigenvalue $E_N$. Third we note that neither the MPS nor the local operator need to be translation invariant; the exact statement is presented below. Fourth, the tensor $B$ is not unique: any tensor of the form $B^{i_1 \dots i_k} + \lambda A^{i_1} \cdots A^{i_k}$ satisfies Eq.~\eqref{eq:hAB}; and this is the only ambiguity in $B$ (see the Supplemental Material (SM)). Fifth, we note that in most applications $O$ is given and the equation provides solutions for $A$ and $B$. Sixth, the result is also valid for one-body local operator in which case $B$ is just a virtual matrix, see SM. Finally, the action of $\mathcal{O}$ in a region $\Lambda$ at the physical boundary is:
\begin{equation} \Big(\sum_{i\in \Lambda} O_{[i]}
-\epsilon|\Lambda|\Big)\,\ket{\psi_N} = (Q_{[i]}-Q_{[i+|\Lambda|]})\,\ket{\psi_N}\ ,
\end{equation} where $\bra{l}Q\ket{j}= \sum_{\alpha,\beta}
B^l_{\alpha\beta}\tilde{A}^j_{\alpha\beta}$ assuming $L=1=k-1$ for simplicity. This form (local boundary action) has been used to classify Hamiltonian types in quantum many-body scar systems \cite{gioia2025}. 

\emph{A simple proof.---} A rigorous proof of the previous theorem can be found in the SM. Here, we provide a simplified and more intuitive argument (restricting for simplicity to the translation invariant case with $k=2$): The operator $e^{-i\mathcal{O}t}\cdot S$, where $S$ is the shift operator and $t$ is small, has an MPO description with the following MPO tensor $T$:
\begin{equation}\label{eq:mpo tensor expansion}
   \begin{tikzpicture}
        \draw (-0.5,0)--(0.5,0);
        \draw (0,-0.5)--(0,0.5);
        \node[tensor,label=below left:$T$] at (0,0) {};
   \end{tikzpicture} = 
    \begin{tikzpicture}
        \draw[rounded corners = 2mm] (-0.5,0)--(0,0)--(0,0.5);
        \draw[rounded corners = 2mm] (0,-0.5)--(0,0)--(0.5,0);
    \end{tikzpicture} - it\cdot
    \begin{tikzpicture}
        \draw[rounded corners = 2mm] (-0.5,0)--(0,0)--(0,0.5);
        \draw[rounded corners = 2mm] (0,-0.5)--(0,0)--(0.5,0);
        \draw[line width =1mm, line cap=round] (-0.15,0.15)--(0.15,-0.15);
        \node at (0.22,0.22) {$O$};
    \end{tikzpicture} + O(t^2)\:.
\end{equation}
As the MPS in the theorem is translation invariant and (after shifting by $\epsilon$) an eigenstate with eigenvalue zero of $\mathcal{O}$, $e^{-i\mathcal{O}t}\cdot S\ket{\psi_N} = \ket{\psi_N}$. This equality between MPSs can be characterized locally by using the ``fundamental theorem'' of MPS \cite{reviewPEPS} that implies the existence of a rank-three fusion tensor $F$ such that~\footnote{This is a simplification. For the exact statement see the Appendix}
\begin{equation}\label{eq:fundamental theorem}
    \begin{tikzpicture}[baseline=1.5mm]
        \draw (-0.5,0)--(0.5,0);
        \draw (-0.5,0.5) --(0.5,0.5);
        \draw (0,0)--(0,1);
        \draw (-1,0.25)--(-0.5,0.25);
        \draw[line width=1.5mm, line cap=round] (-0.5,0)--(-0.5,0.5);
        \node[anchor=north,inner sep = 1.7mm] at (-0.5,1) {$F$};
        \node[tensor,label=below:$A$] at (0,0) {};
        \node[tensor,label=above right:$T$] at (0,0.5) {};
    \end{tikzpicture} = 
    \begin{tikzpicture}[baseline=1.5mm]
        \draw (-0.5,0.25)--(0.5,0.25);
        \draw (0,0.25)--(0,0.75);
        \draw (0.5,0)--(1,0);
        \draw (0.5,0.5)--(1,0.5);
        \draw[line width=1.5mm, line cap=round] (0.5,0)--(0.5,0.5);
        \node[anchor=north,inner sep = 1.7mm] at (0.5,1) {$F$};
        \node[tensor,label=below:$A$] at (0,0.25) {};
    \end{tikzpicture} \ .
\end{equation}
Let us expand Eq.\eqref{eq:fundamental theorem} in orders of $t$ using the decomposition of Eq.\eqref{eq:mpo tensor expansion}. In zeroth order, we obtain that 
\begin{equation*}
    \begin{tikzpicture}[baseline=1.5mm]
        \draw (-0.5,0)--(0.5,0);
        \draw[rounded corners = 2mm] (-0.5,0.5) --(0,0.5) -- (0,1);
        \draw[rounded corners = 2mm] (0,0)--(0,0.5)--(0.5,0.5);
        \draw (-1,0.25)--(-0.5,0.25);
        \draw[line width=1.5mm, line cap=round] (-0.5,0)--(-0.5,0.5);
        \node[anchor=north,inner sep = 1.7mm] at (-0.5,1) {$F$};
        \node[tensor,label=below:$A$] at (0,0) {};
    \end{tikzpicture} = 
    \begin{tikzpicture}[baseline=1.5mm]
        \draw (-0.5,0.25)--(0.5,0.25);
        \draw (0,0.25)--(0,0.75);
        \draw (0.5,0)--(1,0);
        \draw (0.5,0.5)--(1,0.5);
        \draw[line width=1.5mm, line cap=round] (0.5,0)--(0.5,0.5);
        \node[anchor=north,inner sep = 1.7mm] at (0.5,1) {$F$};
        \node[tensor,label=below:$A$] at (0,0.25) {};
    \end{tikzpicture} \ ,
\end{equation*}
and thus, by injectivity, in zeroth order, the fusion tensor is nothing but the MPS tensor. We
can thus write \begin{equation*}
    \begin{tikzpicture}[baseline=1.5mm]
        \draw (0,0.25)--(0.5,0.25);
        \draw (0.5,0)--(1,0);
        \draw (0.5,0.5)--(1,0.5);
        \draw[line width=1.5mm, line cap=round] (0.5,0)--(0.5,0.5);
        \node[anchor=north,inner sep = 1.7mm] at (0.5,1) {$F$};
    \end{tikzpicture} \  = 
    \begin{tikzpicture}
        \draw (-0.5,0)--(0.5,0);
        \draw[rounded corners=2mm] (0,0)--(0,0.5)--(0.5,0.5);
        \node[tensor,label=below:$A$] at (0,0) {};
    \end{tikzpicture} - it \cdot \ 
    \begin{tikzpicture}
        \draw (-0.5,0)--(0.5,0);
        \draw[rounded corners=2mm] (0,0)--(0,0.5)--(0.5,0.5);
        \node[tensor, fill=red,label=below:$B$] at (0,0) {};
    \end{tikzpicture} + O(t^2) .
\end{equation*}
Plugging this back to Eq.~\eqref{eq:fundamental theorem}, expanding in $t$ again, and considering the first order terms, we obtain that
\begin{equation*}
    \begin{tikzpicture}
        \draw (-0.5,0)--(1,0);
        \draw[rounded corners=2mm] (0,0) -- (0,0.5) -- (0.5,0.5) -- (0.5,1);
        \draw[rounded corners=2mm] (0.5,0) -- (0.5,0.5)--(1,0.5);
        \node[tensor,label=below:$A$] at (0,0) {};
        \node[tensor,label=below:$A$] at (0.5,0) {};        
        \draw[line width =1mm, line cap=round] (0.5-0.15,0.5+0.15)--(0.5+0.15,0.5-0.15);
        \node at (0.5+0.22,0.5+0.22) {$O$};
    \end{tikzpicture} = 
    \begin{tikzpicture}
        \draw (-0.5,0)--(1,0);
        \draw (0,0) -- (0,0.5);
        \draw[rounded corners=2mm] (0.5,0) -- (0.5,0.5)--(1,0.5);
        \node[tensor,label=below:$A$] at (0,0) {};
        \node[tensor, fill=red,label=below:$B$] at (0.5,0) {};    
    \end{tikzpicture}
    -
       \begin{tikzpicture}
        \draw (-0.5,0)--(1,0);
        \draw[rounded corners=2mm] (0,0) -- (0,0.5) -- (0.5,0.5) -- (0.5,1);
        \draw[rounded corners=2mm] (0.5,0) -- (0.5,0.5)--(1,0.5);
        \node[tensor, fill=red,label=below:$B$] at (0,0) {};
        \node[tensor,,label=below:$A$] at (0.5,0) {};        
    \end{tikzpicture} ,    
\end{equation*}
which is the desired equation.

\emph{Different applications.---}Our main theorem has a plethora of applications, some of which are already used in the literature where the central principle of telescopic sums has been applied to great success. In these applications Eq.~\eqref{eq:hAB} is used as an \emph{ad hoc} \emph{ansatz}, that is, only the implication Eq.~\eqref{eq:hAB} $\Rightarrow$ Eq.~\eqref{eq:HGS} is used; the result presented in this paper provides the formal justification of these applications, by showing that this indeed presents the most general MPS solution to such an eigenvalue equation.
At the same time, our result provides a unified framework which allows to treat these problems as well as others which we now present.

\emph{(1) Time evolution.} Our theorem can be applied to find MPS solutions of
the (time-dependent) Schrödinger equation $H\ket{\psi} = i \partial_t
\ket{\psi}$. To see this, note that the time derivative of an MPS can be written
as $\partial_t\ket{\psi} = \sum_{j,\{i_k\}} \tr(A^{i_1}\cdots
(\partial_tA^{i_j}) \cdots  A^{i_N}) \ket{i_1\cdots i_N}$. For injective MPS, we
can construct a local operator $V$, $V^i_j= \sum_{\alpha,\beta}
(\partial_t A)^i_{\alpha\beta}\tilde{A}^j_{\alpha\beta}$ where $\tilde{A}$ is the left-inverse tensor, such that $\partial_t
\ket{\psi} = \sum_j V_{[j]} \ket{\psi}$ 
\cite{haegeman2011time,vanderstraeten2019tangent}. Then the Schrödinger equation reads
$(H-i\sum_j V_{[j]})\ket{\psi}= 0$, which is of the form of Eq.~\eqref{eq:HGS}. Our
Theorem then implies that the following must be met by any solution (shown here for 2-body
Hamiltonians):
    \begin{equation*}
        \begin{tikzpicture}[yscale=0.8, xscale=0.7]
            \draw (-0.5,0)--(1.5,0);
            \node[tensor,label=below:$A$] (l) at (1,0) {};
            \draw (l)--++(0,1);
            \node[tensor,label=below:$A$] (r) at (0,0) {};
            \draw (r)--++(0,1);
            \draw[line width=2mm,line cap=round] (0,0.5) -- (1,0.5);
            \node[anchor=south,inner sep=5pt] at (0.5,0.5) {$h$};
        \end{tikzpicture} = i
        \begin{tikzpicture}[yscale=0.8, xscale=0.7]
        \draw (-0.5,0)--(1.5,0);
            \node[tensor,label=below:$ A$] (r) at (1,0) {};
            \node[tensor,label=below:$\partial_t A$] (l) at (0,0) {};
            \draw (r)--++(0,0.5);
            \draw (l)--++(0,0.5);
        \end{tikzpicture}
        +
        \begin{tikzpicture}[yscale=0.8, xscale=0.7]
            \draw (-0.5,0)--(1.5,0);
            \node[tensor,fill=red,label=below:$B$] (l) at (1,0) {};
            \draw (l)--++(0,0.5);
            \node[tensor,label=below:$A$] (r) at (0,0) {};
            \draw (r)--++(0,0.5);
        \end{tikzpicture} - 
        \begin{tikzpicture}[yscale=0.8, xscale=0.7]
            \draw (-0.5,0)--(1.5,0);
            \node[tensor,label=below:$A$] (l) at (1,0) {};
            \draw (l)--++(0,0.5);
            \node[tensor,fill=red,label=below:$B$] (r) at (0,0) {};
            \draw (r)--++(0,0.5);
        \end{tikzpicture} \ ,       
    \end{equation*}
where now $A$ and $B$ (and potentially also $h$) depend on time. For larger injectivity lengths, the operator $V$ will correspondingly act on more sites.

\emph{(2) Continuous symmetries of matrix product density operators (MPDO).} 
Given a unitary symmetry $U$, a density matrix can be weakly symmetric, i.e., $[U,\rho]=0$, or strongly symmetric, i.e., $U\rho=\rho$, according to Ref.~\cite{de_Groot_2022}. Let $U$ be a continuous symmetry generated by 2-local operators, i.e., $U = e^{i\epsilon \sum_i G_{[i,i+1]}}$, and $\rho$  an MPDO, i.e., a density matrix with an MPS form~\cite{verstraete2004matrix,haegeman2017diagonalizing}. Then for weak symmetries,
$$\sum_i(G_{[i,i+1]}\otimes \id -\id\otimes G_{[i,i+1]}^T) |\rho \rangle\! \rangle =
0\ ,$$ while for strong symmetries, $\sum_i (G_{[i,i+1]}\otimes \id) \ket{\rho}\! \rangle=0 $,
where $\ket{\rho}\!\rangle$ denotes the vectorized form, as an MPS, of $\rho$.
Both of these are equations of the form of Eq.~\eqref{eq:HGS}, and thus allow for the application of our Theorem (the resulting ansatz has, for example, been used \emph{ad hoc} in Ref.~\cite{MingruSym}).

\emph{(3) Matrix product operator symmetries of local Hamiltonians.} 
Let us now consider MPO symmetries $O$ of local Hamiltonians: $[H,O]=0$ \cite{Bultinck17A,Lootens2021,Lootens2023dualities,Lootens2024sectors}. By vectorizing the injective MPO into an injective MPS, $O \rightarrow |O\rangle \!\rangle$, we can write $\sum_i(h_{[i]}\otimes \id -\id\otimes h_{[i]}^T) |O\rangle\! \rangle = 0$, whose structure mimics Eq.~\eqref{eq:HGS}. Our Theorem then gives the following local characterization of Hamiltonians with MPO symmetries:
    \begin{equation}\label{MPOsym}
        \begin{tikzpicture}[scale=0.6]
            \draw (-0.5,0)--(1.5,0);
            \node[tensor,label=below right:$T$] (l) at (1,0) {};
            \draw (l)--++(0,1);
            \draw (l)--++(0,-0.5);
            \node[tensor,label=below right:$T$] (r) at (0,0) {};
            \draw (r)--++(0,1);
            \draw (r)--++(0,-0.5);
            \draw[line width=2mm,line cap=round] (0,0.5) -- (1,0.5);
            \node[anchor=south,inner sep=5pt] at (0.5,0.5) {$h$};
        \end{tikzpicture} 
         - 
        \begin{tikzpicture}[scale=0.6]
            \draw (-0.5,0)--(1.5,0);
            \node[tensor,label=above right:$T$] (l) at (1,0) {};
            \draw (l)--++(0,-1);
            \draw (l)--++(0,0.5);
            \node[tensor,label=above right:$T$] (r) at (0,0) {};
            \draw (r)--++(0,-1);
            \draw (r)--++(0,0.5);
            \draw[line width=2mm,line cap=round] (0,-0.5) -- (1,-0.5);
            \node[anchor=north,inner sep=5pt] at (0.5,-0.5) {$h$};
        \end{tikzpicture}
        =
        \begin{tikzpicture}[scale=0.6]
            \draw (-0.5,0)--(1.5,0);
            \node[tensor,fill=red,label=below right:$B$] (l) at (1,0) {};
            \draw (l)--++(0,0.5);
            \draw (l)--++(0,-0.5);
            \node[tensor,label=below right:$T$] (r) at (0,0) {};
            \draw (r)--++(0,0.5);
            \draw (r)--++(0,-0.5);
        \end{tikzpicture} - 
        \begin{tikzpicture}[scale=0.6]
            \draw (-0.5,0)--(1.5,0);
            \node[tensor,label=below right:$T$] (l) at (1,0) {};
            \draw (l)--++(0,0.5);
            \draw (l)--++(0,-0.5);
            \node[tensor,fill=red,label=below right:$B$] (r) at (0,0) {};
            \draw (r)--++(0,0.5);
            \draw (r)--++(0,-0.5);
        \end{tikzpicture} ,       
    \end{equation} 
where $T$ is the tensor of the MPO symmetry $O$, and $B$ is the auxiliary tensor of Eq.~\eqref{eq:hAB}.\\
\emph{(3.1) An example: The quantum group symmetries of the XXZ model.} Let us find MPO symmetries of the XXZ model with the two-body term $h= XX+ YY +{\Delta}ZZ$. We write the MPO tensor $T$ as
\begin{equation*}
    \begin{tikzpicture}[baseline=-1mm, scale=0.3]
        \draw (-1,0) -- (1,0);
        \draw (0,-1) -- (0,1);
        \node[tensor] at (0,0) {};
    \end{tikzpicture} = \ket{0}\bra{0} \otimes a + \ket{0}\bra{1}\otimes b + \ket{1}\bra{0}\otimes c + \ket{1}\bra{1}\otimes d,
\end{equation*}
where the matrices $a,b,c,d$ act on the virtual d.o.f. We express the red tensor in Eq.~\eqref{MPOsym} in a similar form, with virtual matrices $\tilde{a},\tilde{b},\tilde{c},\tilde{d}$. The resulting equations coming from Eq.~\eqref{MPOsym} are given in the SM. Special solutions of these equations are given when $\tilde{a}=\tilde{d}=0$, $\tilde{b}= \lambda b$, $\tilde{c}= -\lambda c$, so that we get
\begin{align*}
    ab = & q^{-1} ba \ , \quad ac = q^{-1} ca \ , \quad bd = q^{-1} db \ , \\
    cd =& q^{-1}dc \ ,\quad bc = cb \ , \quad ad-da = -(q-q^{-1})bc \ ,
\end{align*} 
where $q$ and $\lambda$ satisfy $\Delta = \tfrac{q+q^{-1}}{2}$ and $\lambda = q-q^{-1}$. These equations correspond to the matrices defining the matrix quantum group $\mathrm{SL}(2)_q$, whose explicit connection with the quantum group $U_q(sl_2)$ as symmetries of the XXZ chain is shown in the SM (see also Ref. \cite{yashin2026twoparameterfamiliesmpointegrals}). In a similar way, we also show how the (higher) transfer matrices arise from solutions of Eq.~\eqref{MPOsym}, see SM. 

\emph{(4) Steady states of local Lindbladians.} Let us now consider the problem of finding the steady state, $\mathcal{L}(\rho) = 0$, of a 2-local Lindbladian $\mathcal{L}= \sum_i\mathcal{L}_{[i,i+1]}$. We assume that $\rho$ can be written as a MPDO, see Ref.~\cite{liu25lindbladiansMPO}. An MPDO can be decomposed into injective and locally orthogonal MPDOs $\rho=\sum_a \rho_a$, where we denote the injective tensor of $\rho_a$ by $T_a$. Then, $\mathcal{L}(\rho) = 0$ if and only if there are MPO tensors $M_a$ for each $a$ such that
\begin{align*}
   &\sum_{i'j'k'l',\gamma} \mathcal{L}^{ijkl}_{i'j'k'l'} (T_a)^{i'k'}_{\alpha\gamma}(T_a)^{j'l'}_{\gamma\beta} =  \\ &\qquad \sum_{\gamma,\delta} (T_a)^{ik}_{\alpha\gamma}(M_a)^{jl}_{\gamma\beta}-(M_a)^{ik}_{\alpha\delta}(T_a)^{jl}_{\delta\beta}\:.
\end{align*}

\emph{(5) Local operators and telescopic series.} Given a local operator that adds up to zero, $\sum O_{[i,i+1]}=0 $, we can use our result by applying the operator to the identity operator (which is an injective MPO), which yields that $O_{[i,i+1]}= Q_{[i]}\otimes \id_{[i+1]} - \id_{[i]}\otimes Q_{[i+1]}$, where $Q$ is a one-body operator; see also Ref.~\cite{rai2024}.

\emph{(6) Numerical applications: VUMPS.} Non fine-tuned Hamiltonians do not have exact MPS eigenstates, but the success of tensor networks is based on the fact that ground states can be faithfully approximated by MPS \cite{PhysRevB.73.094423}. One of the standard algorithms to find those MPS approximations is VUMPS~\cite{Zauner18,vanderstraeten2019tangent}.
Interestingly, the fixed points of this recursive algorithm can be understood to satisfy a relaxation of the telescopic identity in the main theorem: instead of satisfying this equality exactly, we only require it to be satisfied when projected on the tangent plane, see SM. The two error terms then correspond to the effective Hamiltonian contributions of the left and right part of the chain, respectively. 
Our central theorem hence gives a justification for the working of that algorithm.

\emph{Generalizations.---}The statement of our main theorem can be generalized in a variety of ways. Here we discuss two possible generalizations: non-translational systems and two-dimensional systems (which can be combined to state our result for non-translational two-dimensional systems). 

\emph{(7) Breaking translation invariance.} We now assume a setting where both
the local terms of the 2-body Hamiltonian $H=\sum h_{[n,n+1]}$  and the MPS
tensors  $A^{[n]}$  depend on the lattice site  $n$. Then the following holds:
\begin{align*}
     \sum_{i'j',\gamma} (h_{[n,n+1]})^{ij}_{i'j'} &A^{[n],i'}_{\alpha\gamma}A^{[n+1],j'}_{\gamma\beta} = \\ & \sum_{\gamma,\delta} A^{[n],i}_{\alpha\gamma}B^{[n+1],j}_{\gamma\beta}-B^{[n],i}_{\alpha\delta}A^{[n+1],j}_{\delta\beta}.
\end{align*}
This setting covers two interesting scenarios. The first one is where the MPS is composed of the 
same tensor everywhere, but a matrix is placed at the virtual indices when closing the boundaries:
this class of MPS describes coalgebras, see Ref.~\cite{molnar22}. The second case is where translation invariance is broken only inside a bigger unit cell. For example, Ref.~\cite{Lin19} constructed scar states with a $2$-site unit cell for the PXP model.

\begin{figure*}[t]
    \centering
    \includegraphics[width=\textwidth]{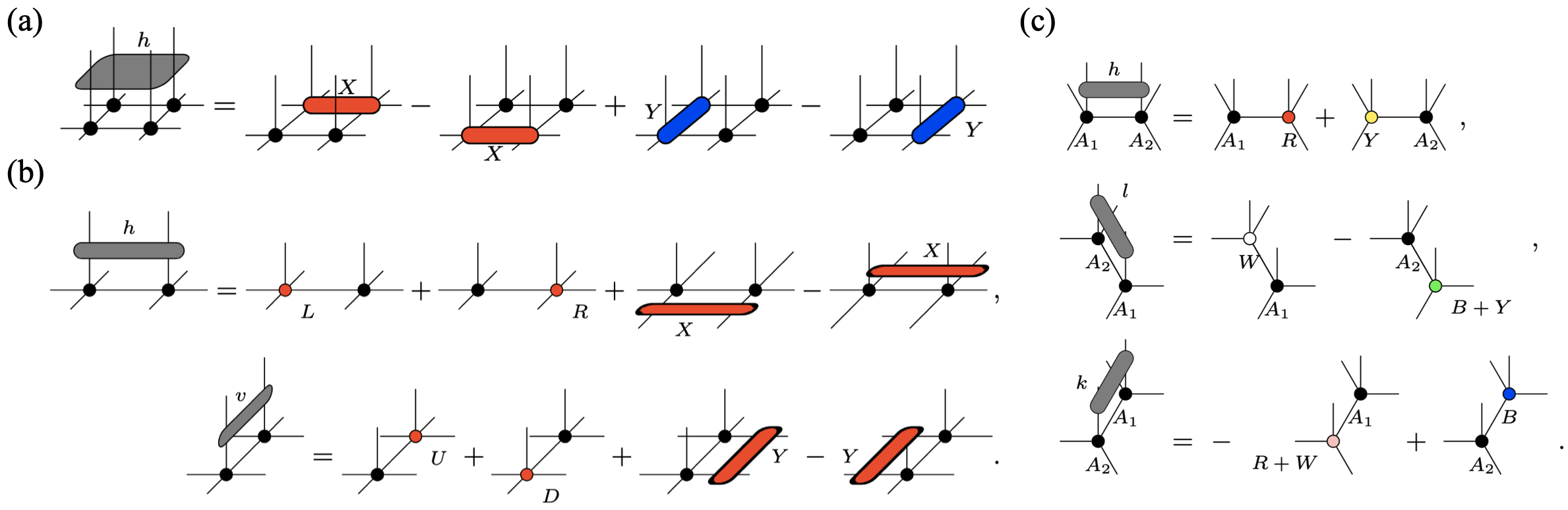}
    \caption{2D local Hamiltonian terms acting on injective PEPS as its ground state. (a) Plaquette term on square lattice, resulting in two virtual tensors $X$ and $Y$. (b) Two-body terms on square lattice, resulting in four tensors satisfying $R+D+L+U=0$ and two virtual tensors $X$ and $Y$.  (c) Two-body terms on hexagonal lattice, given by two site tensors $A_1$ and $A_2$ resulting in four tensors $R,W,B,Y$.}
    \label{fig:2dcases}
\end{figure*}

\emph{(8) Beyond 1D.} Our theorem can be generalized to 2D injective PEPS, see SM for the proof, in particular for two-body and plaquette local terms on the square lattice and two-body terms on the hexagonal lattice. These results are summarized in Fig. \ref{fig:2dcases}.

\emph{Conclusions and outlook.---}In this work, we have derived a necessary and sufficient local condition for an injective MPS to be an exact eigenstate of a finite-range operator, including local Hamiltonians. The resulting equation is formulated as the action of the $k$-local Hamiltonian term on a block of \(k\) adjacent MPS tensors, which is independent of the system size. It gives rise to a telescopic series, and is especially relevant for finding non-frustration free Hamiltonians (the parent Hamiltonian is always frustration free).

Our framework unifies several constructions into a single local principle, covering the time evolution of MPS, MPO symmetries, Lindbladian steady states, zero-sum local operators, and giving a conceptual justification of VUMPS. It further extends beyond translationally invariant 1D systems to site-dependent tensors (including PXP scar eigenstates), and to 2D injective PEPS as ground states of local (plaquette and two-body) Hamiltonians on the square and hexagonal lattices.

An important open direction is to extend these results to more general tensor network structures in two dimensions. In particular, proving analogues of our characterizing equation for generalized-injective \cite{Andras18B} and \(G\)-injective \cite{Schuch10} PEPS would allow to extend our framework to applications in symmetry protected and topologically ordered phases in 2D. From the numerical side, a robust 2D version of our characterization could provide new guidance for designing and diagnosing optimization schemes for PEPS and related ansatzes, potentially addressing some of the core difficulties in variational simulations of two and higher-dimensional quantum systems. 

\emph{Acknowledgments.---} We thank Yuan Miao and Hosho Katsura for helpful comments on quantum groups and pointing out Baxter's work \cite{BAXTER19731}. This research was funded in part by the Austrian
Science Fund (FWF) [Grant DOIs 10.55776/COE1, 10.55776/P36305, 10.55776/F71 and
10.55776/J4796], and the European Union through NextGenerationEU, as well as
through the European Research Council (ERC) under the European Union’s Horizon
2020 research and innovation programme (ERC-CoG SEQUAM, grant agreement
No.~863476).  FV acknowledges funding from the UKRI grant EP/Z003342/1, BOFGOA
(Grant No. BOF23/GOA/021), EOS (Grant No.~40007526), IBOF (Grant No.~IBOF23/064).

\bibliography{bibliography}

\clearpage
\newpage
\onecolumngrid
\appendix
\begin{center}
{\Large\bfseries SUPPLEMENTARY MATERIAL/APPENDICES}
\end{center}

\tableofcontents

\section{Rigorous proof of the Main Theorem}

Let us assume w.l.o.g.\ that the spectral radius of the transfer matrix $T$ of the injective tensor $A$, given by $T(X) = \sum_i A^i X (A^i)^\dagger$, is $1$. For an injective (normal) MPS tensor $A$ there is an operator such that $\rho_R = \sum_i A^i \rho_R (A^i)^\dagger$; this $\rho_R$ is unique up to a scalar multiplication, it is invertible, and it can be chosen to be positive definite. Similarly, there is an operator $\rho_L$ such that $\rho_L = \sum_i (A^i)^\dagger \rho_L A^i$, and this $\rho_L$ is unique up to a scalar multiplication, it is invertible, and it can be chosen to be positive definite. We assume w.l.o.g.\ that the operators $\rho_L$ and $\rho_R$ satisfy $\tr(\rho_L \rho_R) = 1$. Diagrammatically, we have
\begin{equation*}
    \begin{tikzpicture}[xscale=0.6,yscale=0.8,baseline=0.3cm]
        \draw (1,0) -- (-1,0) -- (-1,1) -- (1,1);
        \draw (0,0) -- (0,1);
        \node[tensor,label=above:$\bar{A}$] at (0,1) {};
        \node[tensor,label=below:$A$] at (0,0) {};
        \node[tensor, label=left:$\rho_L$] at (-1,0.5) {};
    \end{tikzpicture} = 
    \begin{tikzpicture}[xscale=0.6,yscale=0.8,baseline=0.3cm]
        \draw (0,0) -- (-1,0) -- (-1,1) -- (0,1);
        \node[tensor, label=left:$\rho_L$] at (-1,0.5) {};
    \end{tikzpicture}, \quad
    \begin{tikzpicture}[xscale=-0.6,yscale=0.8,baseline=0.3cm]
        \draw (1,0) -- (-1,0) -- (-1,1) -- (1,1);
        \draw (0,0) -- (0,1);
        \node[tensor,label=above:$\bar{A}$] at (0,1) {};
        \node[tensor,label=below:$A$] at (0,0) {};
        \node[tensor, label=right:$\rho_R$] at (-1,0.5) {};
    \end{tikzpicture} = 
    \begin{tikzpicture}[xscale=-0.6,yscale=0.8,baseline=0.3cm]
        \draw (0,0) -- (-1,0) -- (-1,1) -- (0,1);
        \node[tensor, label=right:$\rho_R$] at (-1,0.5) {};
    \end{tikzpicture}, \quad
    \begin{tikzpicture}[xscale=0.6,yscale=0.8,baseline=0.3cm]
        \draw (0,0) rectangle (-1,1);
        \node[tensor, label=right:$\rho_R$] at (0,0.5) {};
        \node[tensor, label=left:$\rho_L$] at (-1,0.5) {};
    \end{tikzpicture} = 1.
\end{equation*}
If $A$ is injective with injectivity length $L$, then there is a tensor $X$ such that 
\begin{equation}\label{eq:Ainv}
    \begin{tikzpicture}[xscale=0.6,yscale=0.6,baseline=0.2cm]
        \draw (-0.8,0) -- (3.8,0);
        \foreach \x in {0,1,3} {
            \draw (\x,0) --++(0,1);
            \node[tensor, label=below:$A$] at (\x,0) {};
        }
        \draw (-0.8,1) -- (3.8,1);
        \draw[line width = 1.7mm, line cap=round] (0,1) -- (3,1);
        \node[anchor=south] at (1.5,1.1) {$X$};
        \node[fill=white] at (2,0) {$\dots$};
    \end{tikzpicture} = 
    \begin{tikzpicture}[xscale=0.6,yscale=0.6,baseline=0.2cm]
        \draw (2,0) -- (1,0) -- (1,1) -- (2,1);
        \draw (-1,0) -- (0,0) -- (0,1) -- (-1,1);
        \node[tensor, label=right:$\rho_L$] at (1,0.5) {};
        \node[tensor, label=left:$\rho_R$] at (0,0.5) {};
    \end{tikzpicture} .
\end{equation}
Consequently, for any number of tensors on the left and right, we obtain
\begin{equation}\label{eq:Ainv_with_transfer}
    \begin{tikzpicture}[xscale=0.6,yscale=0.6,baseline=0.2cm]
        \draw (-3.8,0) -- (6.8,0);
        \draw (-3.8,1) -- (6.8,1);
        \foreach \x in {0,1,3} {
            \draw (\x,0) --++(0,1);
            \node[tensor, label=below:$A$] at (\x,0) {};
        }
        \foreach \x in {-1,-3,4,6} {
            \draw (\x,0) --++(0,1);
            \node[tensor, label=below:$A$] at (\x,0) {};
            \node[tensor, label=above:$\bar{A}$] at (\x,1) {};
        }
        \draw[line width = 1.7mm, line cap=round] (0,1) -- (3,1);
        \node[anchor=south] at (1.5,1.1) {$X$};
        \foreach \x in {(-2,0),(2,0),(5,0),(-2,1),(5,1)} {
            \node[fill=white] at \x {$\dots$};
        }
    \end{tikzpicture} = 
    \begin{tikzpicture}[xscale=0.6,yscale=0.6,baseline=0.2cm]
        \draw (2,0) -- (1,0) -- (1,1) -- (2,1);
        \draw (-1,0) -- (0,0) -- (0,1) -- (-1,1);
        \node[tensor, label=right:$\rho_L$] at (1,0.5) {};
        \node[tensor, label=left:$\rho_R$] at (0,0.5) {};
    \end{tikzpicture} .
\end{equation}

\begin{theorem}\label{thm:main}
Let $\mathcal{O}_N$ be an operator acting on $N$ particles that is the sum of the translations of a $k$-local operator $O$, $\mathcal{O}_N= \sum_{i=1}^N O_i$, where $o_i$ denotes the operator $O$ acting on the particles
\footnote{
we index the particles mod $N$, that is, the particle $N+l$ is the same as the particle $l$
}
$(i,i+1 \dots i+k-1)$. An injective MPS on $N$ sites defined by the matrices $A^i$ with injectivity length $L$ such that $N>2L+2k-1$, is an exact eigenstate of $\mathcal{O}_N$,
\begin{equation}\label{eq:HGSAp}
    \mathcal{O}_N \ket{\psi_N} = E_N \ket{\psi_N},
\end{equation}
if and only if there are matrices $B^{i_2 \dots i_k}$ such that
\begin{equation} \label{eq:hABAp}
   \sum_{j\gamma} (O^{i_1\cdots i_k}_{j_1\cdots j_k}-\epsilon) A^{j_1}_{\alpha\gamma_1}\cdots A^{j_k}_{\gamma_k\beta} = 
   \sum_{\gamma} A^{i_1}_{\alpha\gamma}B^{i_2\cdots i_k}_{\gamma\beta}
  -\sum_{\gamma}B^{i_1\cdots i_{k-1}}_{\alpha\gamma}A^{i_k}_{\gamma\beta},
\end{equation}
where $\epsilon = E_N/N$. In graphical notation,
\begin{equation}\label{eq:hABAp graphical}
    \begin{aligned}
    \begin{tikzpicture}[yscale=0.7, xscale=0.6]
        \draw (-0.5,0)--(3,0);
        \node[tensor,label=below:$A$] (l) at (0,0) {};
        \draw (l)--++(0,1);
        \node[tensor,label=below:$A$] (m) at (1,0) {};
        \draw (m)--++(0,1);
        \node[tensor,label=below:$A$] (r) at (2.5,0) {};
        \draw (r)--++(0,1);
        \draw[line width=1.7mm,line cap=round] (0,0.5) -- (2.5,0.5);
        \node[fill=white,inner sep = 1pt] at (1.75,0) {$\dots$};
        \node[anchor=south,inner sep=5pt] at (1.75,0.5) {$O-\epsilon$};
    \end{tikzpicture}  =
    \begin{tikzpicture}[yscale = 0.7, xscale=0.6]
        \draw (-0.5,0)--(3,0);
        \draw (1,0)--++(0,0.5);
        \draw (2.5,0)--++(0,0.5);
        \node[tensor,label=below:$A$] (r) at (0,0) {};
        \draw (r)--++(0,0.5);
        \draw[red,line width=1.7mm,line cap=round] (1,0) -- (2.5,0);
        \node[anchor=north,inner sep = 1.7mm] at (1.75,0) {$B$};
        \node at (1.75,0.3) {$\dots$};
    \end{tikzpicture}  -  
    \begin{tikzpicture}[yscale=0.7,xscale=0.6]
        \draw (-0.5,0)--(3,0);
        \node[tensor,label=below:$A$] (l) at (2.5,0) {};
        \draw (l)--++(0,0.5);
        \draw (0,0)--++(0,0.5);
        \draw (1.5,0)--++(0,0.5);
        \draw[red,line width=1.7mm,line cap=round] (0,0) -- (1.5,0);
        \node[anchor=north,inner sep = 1.7mm] at (0.75,0) {$B$};
        \node at (0.75,0.3) {$\dots$};
    \end{tikzpicture} \ .
    \end{aligned}
\end{equation}    
\end{theorem}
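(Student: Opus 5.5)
The easy direction is sufficiency, and I would settle it first. Insert \eqref{eq:hABAp} at each position $i$ in $\sum_i (O_i-\epsilon)\ket{\psi_N}$ and close the trace. Each term becomes the trace of a product of $N$ tensors in which a $B$-block occupies sites $i+1,\dots,i+k-1$, minus the same product with the $B$-block on sites $i,\dots,i+k-2$. Shifting the index in the second sum gives exactly the first sum, so the periodic sum telescopes to zero and $\mathcal{O}_N\ket{\psi_N}=E_N\ket{\psi_N}$. This direction holds for every $N$.

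For necessity I would take $O$ to mean $O-\epsilon$, so that $\sum_i O_i\ket{\psi_N}=0$, and let $Z^{i_1\dots i_k}_{\alpha\beta}$ be the left-hand side of \eqref{eq:hABAp}, a $k$-site tensor with open virtual legs. The eigen-equation then reads $\sum_{i=1}^N \tr(A\cdots A\, Z_{[i]}\, A\cdots A)=0$ as a vector. The plan is to use the left inverse $X$ of Eq.~\eqref{eq:Ainv} to isolate one $Z$ and read off $B$. Concretely, I would contract the state with $X$ on a block $R$ of $L$ sites, say sites $1,\dots,L$. For every term whose support $[i,i+k-1]$ is disjoint from $R$, the contraction cuts the trace into an open chain with $\rho_R,\rho_L$ placed on the virtual cut, as in Eq.~\eqref{eq:Ainv_with_transfer}. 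This produces a map from $D\times D$ matrices into the remaining $N-L$ sites. The terms that overlap $R$ do not simplify this way.

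I would then apply a second left inverse on another block $R'$ of length $L$, placed far from $R$. Because $N>2L+2k-1$, I can choose $R'$ so that the complement of $R\cup R'$ splits into two segments, one of which has length at least $k$ and can contain a full $Z$-window. After the double contraction, every term becomes an open MPS on each segment, and that MPS is either pure $A$'s or contains a single $Z$. Terms touching $R$ or $R'$ contribute unknown tensors supported near the edges of the segments. The result is a relation of the form $\sum_{i\in\text{seg}}(A\cdots Z_{[i]}\cdots A)\,=\,(\text{edge term at left}) + (\text{edge term at right})$ on a segment of fixed length. Here the left edge term has the form $C\cdot A\cdots A$ and the right edge term has the form $A\cdots A\cdot C'$, where $C$ and $C'$ are tensors on at most $k-1$ sites coming from the partially overlapping terms. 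The transfer-matrix normalizations $\rho_L,\rho_R$ and $\tr(\rho_L\rho_R)=1$ are what make the pure-$A$ pieces factor cleanly.

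Next I would take the difference between this relation for a segment of length $m$ and for a segment of length $m+1$, obtained by shifting $R'$ by one site; this is possible because the length bound leaves one spare site. Injectivity then lets me strip the common $A$'s with further left inverses and isolate a single $Z$ on $k$ sites. The outcome expresses $Z$ as $A\cdot B - B'\cdot A$ for some $(k-1)$-site tensors $B,B'$.

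The main obstacle is showing that the left and right boundary tensors can be taken to be the same $B$ up to the $\lambda A\cdots A$ ambiguity. I would try to get this by comparing two placements via translation invariance: the tensor appearing on the right of one window and on the left of the next are the same physical object, a partially contracted overlapping term. The mismatch $B-B'$ then satisfies $A(B-B')=(B-B')A$ as tensors on $k$ sites. Injectivity of $A$ should force $B-B'=\lambda A^{\otimes(k-1)}+\mu\,\id$-type terms, and these can be absorbed, possibly after adjusting $\epsilon$ by a consistency check against the closed trace. Keeping careful track of the lengths so that $N>2L+2k-1$ suffices is the delicate bookkeeping.
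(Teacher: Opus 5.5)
You have the paper's architecture: two $X$ blocks, the far segment contracted with $\bar A$'s (it has to be contracted, not left open as your wording suggests), and an open gap. The relation on the gap contains a left-edge tensor $C$ and a right-edge tensor $C'$, coming from the terms that overlap the two blocks. You then compare gaps of consecutive lengths; the paper uses lengths $k-1$ and $k$.

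The gap is the step you call the main obstacle, because the argument you give for it does not work. $C$ and $C'$ come from terms overlapping different blocks, so they are not ``the same physical object'' under translation. Nothing in your sketch produces $A(B-B')=(B-B')A$ in the first place.

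Your relation also omits a term. Every operator lying entirely in the $\bar A$ segment contributes $c\,A^{\otimes m}$, where $c$ is the fixed-point expectation value of $O-\epsilon$. The paper closes your obstacle with the contraction whose gap has exactly $k-1$ sites. No term fits inside, so that contraction reads $C+C'+(N-2L-2k+2)\,c\,A^{\otimes(k-1)}=0$, i.e.\ $\tilde B=-B$ once $c=0$.

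In fact your own difference needs nothing more if you write it out. Keep $R$ fixed and shift $R'$. The left-edge term $C A^{\otimes(m-k+2)}$ is identical in the two relations and cancels. $C'$ appears once as $A^{\otimes(m-k+2)}C'$ and once as $A^{\otimes(m-k+1)}C'A$. The number of far terms drops by one. Hence
\begin{equation*}
A^{\otimes(m-k+1)}\bigl(Z+AC'-C'A-c\,A^{\otimes k}\bigr)=0 .
\end{equation*}
Stripping by injectivity gives $Z=AB-BA+c\,A^{\otimes k}$ with the single tensor $B=-C'$, so there is no $B$ versus $B'$ issue at all.

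What must still be shown is $c=0$. A leftover multiple of $A^{\otimes k}$ cannot be absorbed into $B$, since $\lambda A^{\otimes(k-1)}$ drops out of $AB-BA$. The paper gets $c=0$ from the closed one-$X$ and two-$X$ contractions, $0=d+(N-k-L+1)c$ and $0=2d+(N-2L-2k+2)c$, which together give $Nc=0$.

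Your closed-trace consistency check is a valid shortcut for this, but it is not an adjustment of $\epsilon=E_N/N$, which is fixed by the statement. Telescoping $Z=AB-BA+c\,A^{\otimes k}$ around the ring gives $0=\sum_i(O_i-\epsilon)\ket{\psi_N}=Nc\ket{\psi_N}$. Hence $c=0$, because $\ket{\psi_N}\neq 0$ for an injective MPS with $N\ge L$.
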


\begin{proof}
Setting $\epsilon = E_N/N$, the eigenvalue equation in graphical notation reads as
\begin{equation}\label{eq:eigenvalue_graphical}
    \sum_i \ 
    \begin{tikzpicture}
        \draw (-2.5,0) rectangle (6.5,-0.5);
        \foreach \x in {1,2,4}{
            \draw (\x,0)--(\x,1);
        }
        \draw[line width=2mm, line cap=round] (1,0.5)--(4,0.5);
        \node at (3,0.8) {$O-\epsilon$};
        \node at (1,1.2) {$i$};
        \node at (2,1.2) {$i+1$};
        \node at (4,1.2) {$i+k-1$};
        \foreach\x in {-2,0,1,2,4, 6}{
            \node[tensor] (t) at (\x,0) {};
            \draw (t)--++(0,0.5);
        }
        \foreach \x in {-1,3,5}{
            \node[fill=white] at (\x,0) {$\dots$};
        }
    \end{tikzpicture} = 0.
\end{equation}
We will contract this equation with different tensors, in four different ways. First, we contract \cref{eq:eigenvalue_graphical} with the contraction of a single tensor $X$ and $N-L$ tensors $\bar A$, contracted as 
\begin{equation*}
    \begin{tikzpicture}[yscale=0.6,xscale=1.1]
        \draw  (-0.6,0) rectangle (6.6,1);
        \foreach \x/\l in {0/1,1/2,3/{L},4/{L+1}, 6/{N}}{
            \draw (\x,0) --++ (0,-1);
            \node[anchor=north] at (\x,-1) {$\l$};
        }
        \draw[line width=2mm, line cap=round] (0,0)--(3,0) node[midway,anchor=south] {$X$};
        \foreach \x in {4,6}{
            \node[tensor, label=above:$\bar{A}$] at (\x,0) {};
        }
        \foreach \x in {(2,-0.7),(5,0)} {
            \node[fill=white] at \x {$\dots$};
        }
    \end{tikzpicture} \ .
\end{equation*}
Evaluating the contraction, we obtain an equation in the form
\begin{equation}\label{eq:cd1}
    0 = d + (N-k-L+1) c,
\end{equation}
where $d$ is the following sum of the $L+k-1$ terms where the support of $o_i$ overlaps with the support of $X$, i.e., the terms with $i=N-k+1,...,L$:
\begin{equation*}
    d = 
        \begin{tikzpicture}[xscale=0.6,baseline=0.4cm]
        \draw (-0.8,0) rectangle (6.8,1);
        \foreach \x in {0,1,3,4,6} {
            \draw (\x,0) --++(0,1);
            \node[tensor,label=below:$A$] at (\x,0) {};
        }
        \foreach \x in {0,1} {
            \node[tensor,label=above:$\bar{A}$] at (\x,1) {};
        }
        \node[fill=white] at (2,0) {$\dots$};
        \node[fill=white] at (2,1) {$\dots$};
        \node[fill=white] at (5,0) {$\dots$};
        \node[fill=white] at (5,0.5) {$\dots$};
        \node[anchor=south] at (2,0.5) {$O-\epsilon$};
        \draw[line width=2mm, line cap=round] (0,0.5) -- (3,0.5);
        \draw[line width=2mm, line cap=round] (3,1) -- (6,1) node[midway,anchor=south] {$X$};
        \node[tensor,label=left:$\rho_L$] at (-0.8,0.5) {};
        \node[tensor,label=right:$\rho_R$] at (6.8,0.5) {};
    \end{tikzpicture} + \dots +
    \begin{tikzpicture}[xscale=0.6,baseline=0.4cm]
        \draw (-3.8,0) rectangle (3.8,1);
        \foreach \x in {-3,-1,0,1,3} {
            \draw (\x,0) --++(0,1);
            \node[tensor,label=below:$A$] at (\x,0) {};
        }
        \foreach \x in {1,3} {
            \node[tensor,label=above:$\bar{A}$] at (\x,1) {};
        }
        \node[fill=white] at (2,0) {$\dots$};
        \node[fill=white] at (2,1) {$\dots$};
        \node[fill=white] at (-2,0) {$\dots$};
        \node[fill=white] at (-2,0.5) {$\dots$};
        \node[anchor=south] at (2,0.5) {$O-\epsilon$};
        \draw[line width=2mm, line cap=round] (0,0.5) -- (3,0.5);
        \draw[line width=2mm, line cap=round] (0,1) -- (-3,1) node[midway,anchor=south] {$X$};
        \node[tensor,label=left:$\rho_L$] at (-3.8,0.5) {};
        \node[tensor,label=right:$\rho_R$] at (3.8,0.5) {};
    \end{tikzpicture},
\end{equation*}
and $c$ is defined as
\begin{equation*}
    c =  
    \begin{tikzpicture}[xscale=0.6,baseline=0.4cm]
        \draw (-0.8,0) rectangle (3.8,1);
        \foreach \x in {0,1,3} {
            \draw (\x,0) --++(0,1);
            \node[tensor,label=below:$A$] at (\x,0) {};
            \node[tensor,label=above:$\bar{A}$] at (\x,1) {};
        }
        \node[fill=white] at (2,0) {$\dots$};
        \node[fill=white] at (2,1) {$\dots$};
        \node[anchor=south] at (2,0.5) {$O-\epsilon$};
        \draw[line width=2mm, line cap=round] (0,0.5) -- (3,0.5);
        \node[tensor,label=left:$\rho_L$] at (-0.8,0.5) {};
        \node[tensor,label=right:$\rho_R$] at (3.8,0.5) {};
    \end{tikzpicture}.
\end{equation*}

Second, we contract \cref{eq:eigenvalue_graphical} with a contraction of two tensors $X$ and $N-2L$ tensors $\bar{A}$ such that the two tensors $X$ are separated from each other by at least $k-1$ sites, in both direction. This is possible as we have assumed $N\geq 2L+2k-1$. One such configuration is depicted below:
\begin{equation*}
    \begin{tikzpicture}[yscale=0.6,xscale=1.1]
        \draw  (-0.6,0) rectangle (13.6,1);
        \foreach \x/\l in {0/1,1/2,3/{L},4/{L+1}, 6/{L+k-1}, 7/{L+k},8/{L+k+1},10/{2L+k-1\; },11/{\; 2L+k},13/N}{
            \draw (\x,0) --++ (0,-1);
            \node[anchor=north] at (\x,-1) {$\l$};
        }
        \draw[line width=2mm, line cap=round] (0,0)--(3,0) node[midway,anchor=south] {$X$};
        \draw[line width=2mm, line cap=round] (7,0)--(10,0) node[midway,anchor=south] {$X$};
        \foreach \x in {11,13,4,6}{
            \node[tensor, label=above:$\bar{A}$] at (\x,0) {};
        }
        \foreach \x in {(2,-0.7),(6,-0.7),(8,-0.7),(12,-0.7),(12,0),(5,0)} {
            \node[fill=white] at \x {$\dots$};
        }
    \end{tikzpicture} \ .
\end{equation*}
After the contraction, we obtain that 
\begin{equation*}
    0 = 2d+ (N-2L-2k+2)c,
\end{equation*}
where $d$ and $c$ are as above. Comparing this with \cref{eq:cd1}, we obtain that $c=d=0$.

Third, we contract \cref{eq:eigenvalue_graphical} with a contraction of the tensor product of two tensors $X$, and $N-2L-k+1\geq k$  tensors $\bar{A}$: 
\begin{itemize}
    \item $X$ on the positions $(1,\dots, L)$ and $(L+k,...,2L+k-1)$,
    \item and $\bar{A}$ on the positions $2L+k,\dots ,N$,
\end{itemize}
where the tensors are contracted as follows:
\begin{equation*}
    \begin{tikzpicture}[yscale=0.6,xscale=1.2]
        \draw  (4.0,0) -- (-0.6,0) -- (-0.6,1) -- (12.6,1) -- (12.6,0)--(5.0,0);
        \foreach \x/\l in {0/1,1/2,3/{L},6/{L+k},7/{L+k+1},9/{2L+k-1},10/{2L+k},12/N}{
            \draw (\x,0) --++ (0,-1);
            \node[anchor=north] at (\x,-1) {$\l$};
        }
        \draw[line width=2mm, line cap=round] (0,0)--(3,0) node[midway,anchor=south] {$X$};
        \draw[line width=2mm, line cap=round] (6,0)--(9,0) node[midway,anchor=south] {$X$};
        \node[tensor, label=above:$\bar{A}$] at (10,0) {};
        \node[tensor, label=above:$\bar{A}$] at (12,0) {};
        \node[tensor, label=above:$\rho_L^{-1}$] at (3.5,0) {};
        \node[tensor, label=above:$\rho_R^{-1}$] at (5.5,0) {};
        \foreach \x in {2,4.5,8,11} {
            \node[fill=white] at (\x,-0.7) {$\dots$};
        }
        \node[fill=white] at (11,0) {$\dots$};        
    \end{tikzpicture}
\end{equation*}
Let us evaluate the resulting equation. As above, we check the position of the local operator $O_i$ in each summand in \cref{eq:eigenvalue_graphical}. We group the summands in three different groups: the first group has overlap with the first $X$, that is, $i=N-k+2,\dots L$. In total there are $L+k-1$ such terms. In the second group, the support of $O_i$ has overlap with the support of the second $X$. This happens for $i=L+1, \dots, 2L+k-1$. In total, there are another $L+k-1$ such terms. The rest of the summands, $N-2L-2k+2$ terms, with $i=2L+k, \dots, N-k+1$, have no overlap with neither of the $X$'s. The equation we obtain from the contraction is thus of the form
\begin{equation*}
    \begin{tikzpicture}[scale=0.6]
        \draw (-0.8,0)--(3.8,0);
        \foreach \x in {0,1,3} {
            \draw (\x,0) --++(0,1);
        }
        \node at (2,0.65) {$\dots$};
        \draw[red,line width=2mm, line cap=round] (0,0)--(3,0) node[midway,below,anchor=north,black] {$B\vphantom{\tilde{B}}$};
    \end{tikzpicture} + 
    \begin{tikzpicture}[scale=0.6]
        \draw (-0.8,0)--(3.8,0);
        \foreach \x in {0,1,3} {
            \draw (\x,0) --++(0,1);
        }
        \node at (2,0.65) {$\dots$};
        \draw[blue,line width=2mm, line cap=round] (0,0)--(3,0) node[midway,below,anchor=north,black] {$\tilde{B}$};
    \end{tikzpicture} 
    + (N-2L-2k+2) c \cdot 
    \begin{tikzpicture}[scale=0.6]
        \draw (-0.8,0)--(3.8,0);
        \foreach \x in {0,1,3} {
            \draw (\x,0) --++(0,1);
            \node[tensor,label=below:$A$] at (\x,0) {};
        }
        \node[fill=white] at (2,0) {$\dots$};
    \end{tikzpicture} = 0,
\end{equation*}
where $B$ contains terms originating from local operators that have overlap with the left $X$, i.e., those that start at positions $N-k+2,\dots, L$:
\begin{equation*}
\begin{aligned}
    \begin{tikzpicture}[scale=0.5]
        \draw (-0.8,0)--(3.8,0);
        \foreach \x in {0,1,3} {
            \draw (\x,0) --++(0,1);
        }
        \node at (2,0.65) {$\dots$};
        \draw[red,line width=2mm, line cap=round] (0,0)--(3,0) node[midway,below,anchor=north,black] {$B\vphantom{\tilde{B}}$};
    \end{tikzpicture} = 
        \begin{tikzpicture}[xscale=0.55,baseline=0.4cm]
        \draw (9.8,0) -- (-0.8,0) -- (-0.8,1) -- (6.5,1) -- (6.5,2) -- (6,2);
        \foreach \x/\y in {0/1,1/1,3/1,4/1,6/1,7/0.5,9/0.5} {
            \draw (\x,0) --++(0,\y);
            \node[tensor,label=below:$A$] at (\x,0) {};
        }
        \foreach \x in {0,1} {
            \node[tensor,label=above:$\bar{A}$] at (\x,1) {};
        }
        \foreach \x in {(2,0), (2,1), (5,0), (5,0.5), (8,0)}{
            \node[fill=white] at \x {$\dots$};
        }
        \node[anchor=south] at (2,0.5) {$O-\epsilon$};
        \draw[line width=2mm, line cap=round] (0,0.5) -- (3,0.5);
        \draw[line width=2mm, line cap=round] (3,1) -- (6,1) node[midway,anchor=south] {$X$};
        \node[tensor,label=left:$\rho_L$] at (-0.8,0.5) {};
        \node[tensor,label=right:$\rho_L^{-1}$] at (6.5,1.5) {};
    \end{tikzpicture} + \dots +
    \begin{tikzpicture}[xscale=0.55,baseline=0.4cm]
        \draw (4.8,0) -- (-3.8,0) -- (-3.8,1) -- (0.5,1) -- (0.5,2) -- (0,2);
        \foreach \x/\y in {-3/1,-1/1,0/1,1/1,3/1,4/0.5} {
            \draw (\x,0) --++(0,\y);
            \node[tensor,label=below:$A$] at (\x,0) {};
        }
        \node[fill=white] at (2,0) {$\dots$};
        \node[fill=white] at (2,1) {$\dots$};
        \node[anchor=south] at (2,0.5) {$O-\epsilon$};
        \draw[line width=2mm, line cap=round] (0,0.5) -- (3,0.5);
        \draw[line width=2mm, line cap=round] (0,1) -- (-3,1) node[midway,anchor=south] {$X$};
        \node[tensor,label=left:$\rho_L$] at (-3.8,0.5) {};
        \node[tensor,label=right:$\rho_L^{-1}$] at (0.5,1.5) {};
    \end{tikzpicture}, \\
\end{aligned}
\end{equation*}
$\tilde{B}$ contatins all terms originating from local operators that overlap with the second $X$, i.e., those that start at positions $L+1, \dots, L+k-1$:
\begin{equation*}
    \begin{tikzpicture}[scale=0.5]
        \draw (-0.8,0)--(3.8,0);
        \foreach \x in {0,1,3} {
            \draw (\x,0) --++(0,1);
        }
        \node at (2,0.65) {$\dots$};
        \draw[blue,line width=2mm, line cap=round] (0,0)--(3,0) node[midway,below,anchor=north,black] {$\tilde{B}$};
    \end{tikzpicture} = 
    \begin{tikzpicture}[xscale=-0.55,baseline=0.4cm]
        \draw (9.8,0) -- (-0.8,0) -- (-0.8,1) -- (6.5,1) -- (6.5,2) -- (6,2);
        \foreach \x/\y in {0/1,1/1,3/1,4/1,6/1,7/0.5,9/0.5} {
            \draw (\x,0) --++(0,\y);
            \node[tensor,label=below:$A$] at (\x,0) {};
        }
        \foreach \x in {0,1} {
            \node[tensor,label=above:$\bar{A}$] at (\x,1) {};
        }
        \foreach \x in {(2,0), (2,1), (5,0), (5,0.5), (8,0)}{
            \node[fill=white] at \x {$\dots$};
        }
        \node[anchor=south] at (2,0.5) {$O-\epsilon$};
        \draw[line width=2mm, line cap=round] (0,0.5) -- (3,0.5);
        \draw[line width=2mm, line cap=round] (3,1) -- (6,1) node[midway,anchor=south] {$X$};
        \node[tensor,label=right:$\rho_R$] at (-0.8,0.5) {};
        \node[tensor,label=left:$\rho_R^{-1}$] at (6.5,1.5) {};
    \end{tikzpicture} + \dots +
    \begin{tikzpicture}[xscale=-0.55,baseline=0.4cm]
        \draw (4.8,0) -- (-3.8,0) -- (-3.8,1) -- (0.5,1) -- (0.5,2) -- (0,2);
        \foreach \x/\y in {-3/1,-1/1,0/1,1/1,3/1,4/0.5} {
            \draw (\x,0) --++(0,\y);
            \node[tensor,label=below:$A$] at (\x,0) {};
        }
        \node[fill=white] at (2,0) {$\dots$};
        \node[fill=white] at (2,1) {$\dots$};
        \node[anchor=south] at (2,0.5) {$O-\epsilon$};
        \draw[line width=2mm, line cap=round] (0,0.5) -- (3,0.5);
        \draw[line width=2mm, line cap=round] (0,1) -- (-3,1) node[midway,anchor=south] {$X$};
        \node[tensor,label=right:$\rho_R$] at (-3.8,0.5) {};
        \node[tensor,label=left:$\rho_R^{-1}$] at (0.5,1.5) {};
    \end{tikzpicture}.    
\end{equation*}
and $c$ is defined as above. As we have seen that $c=0$, we conclude that  
\begin{equation} \label{eq:B-Btilde}
    \begin{tikzpicture}[scale=0.6]
        \draw (-0.8,0)--(3.8,0);
        \foreach \x in {0,1,3} {
            \draw (\x,0) --++(0,1);
        }
        \node at (2,0.65) {$\dots$};
        \draw[blue,line width=2mm, line cap=round] (0,0)--(3,0) node[midway,below,anchor=north,black] {$\tilde{B}$};
    \end{tikzpicture} = - 
    \begin{tikzpicture}[scale=0.6]
        \draw (-0.8,0)--(3.8,0);
        \foreach \x in {0,1,3} {
            \draw (\x,0) --++(0,1);
        }
        \node at (2,0.65) {$\dots$};
        \draw[red,line width=2mm, line cap=round] (0,0)--(3,0) node[midway,below,anchor=north,black] {$B\vphantom{\tilde{B}}$};
    \end{tikzpicture} \ .
\end{equation}
The last (fourth) contraction we consider is contracting \cref{eq:eigenvalue_graphical} with a tensor similar to above, but with the two $X$ tensors separated by $k$ particles. That is, on \cref{eq:eigenvalue_graphical} we apply a contraction of the tensor product of two tensors $X$, and $N-2L-k\geq k-1$  tensors $\bar{A}$: 
\begin{itemize}
    \item $X$ on the positions $(1,\dots, L)$ and $(L+k+1,...,2L+k)$,
    \item and $\bar{A}$ on the positions $2L+k+1,\dots ,N$,
\end{itemize}
where the tensors are contracted as follows:
\begin{equation*}
    \begin{tikzpicture}[yscale=0.6,xscale=1.2]
        \draw  (4,0) -- (-0.6,0) -- (-0.6,1) -- (12.6,1) -- (12.6,0)--(5,0);
        \foreach \x/\l in {0/1,1/2,3/{L},6/{L+k+1\ },7/{\ L+k+2},9/{2L+k},10/{2L+k+1},12/N}{
            \draw (\x,0) --++ (0,-1);
            \node[anchor=north] at (\x,-1) {$\l$};
        }
        \draw[line width=2mm, line cap=round] (0,0)--(3,0) node[midway,anchor=south] {$X$};
        \draw[line width=2mm, line cap=round] (6,0)--(9,0) node[midway,anchor=south] {$X$};
        \node[tensor, label=above:$\bar{A}$] at (10,0) {};
        \node[tensor, label=above:$\bar{A}$] at (12,0) {};
        \node[tensor, label=above:$\rho_L^{-1}$] at (3.5,0) {};
        \node[tensor, label=above:$\rho_R^{-1}$] at (5.5,0) {};
        \foreach \x in {2,4.5,8,11} {
            \node[fill=white] at (\x,-0.7) {$\dots$};
        }
        \node[fill=white] at (11,0) {$\dots$};        
    \end{tikzpicture}
\end{equation*}
Evaluating the resulting equation, we arrive at
\begin{equation*}
    \begin{tikzpicture}[scale=0.6]
        \draw (-1.8,0)--(3.8,0);
        \foreach \x in {-1,0,1,3} {
            \draw (\x,0) --++(0,1);
        }
        \node at (2,0.65) {$\dots$};
        \node[tensor,label=below:$A$] at (-1,0) {};
        \draw[red,line width=2mm, line cap=round] (0,0)--(3,0) node[midway,below,anchor=north,black] {$B\vphantom{\tilde{B}}$};
    \end{tikzpicture} + 
    \begin{tikzpicture}[scale=0.55]
        \draw (-0.8,0)--(4.8,0);
        \foreach \x in {0,1,3,4} {
            \draw (\x,0) --++(0,1);
        }
        \node at (2,0.65) {$\dots$};
        \node[tensor,label=below:$A$] at (4,0) {};
        \draw[blue,line width=2mm, line cap=round] (0,0)--(3,0) node[midway,below,anchor=north,black] {$\tilde{B}$};
    \end{tikzpicture} + (N-2L-2k+1)c \cdot 
    \begin{tikzpicture}[scale=0.55]
        \draw (-0.8,0)--(3.8,0);
        \foreach \x in {0,1,3} {
            \draw (\x,0) --++(0,1);
            \node[tensor,label=below:$A$] at (\x,0) {};
        }
        \node[fill=white] at (2,0) {$\dots$};
    \end{tikzpicture} + 
    \begin{tikzpicture}[scale=0.55]
        \draw (-0.8,0)--(3.8,0);
        \foreach \x in {0,1,3} {
            \draw (\x,0) --++(0,1);
            \node[tensor,label=below:$A$] at (\x,0) {};
        }
        \node[fill=white] at (2,0) {$\dots$};
        \draw[line width=1.7mm,line cap=round] (0,0.5) -- (3,0.5);
        \node[anchor=south,inner sep=5pt] at (2,0.5) {$O-\epsilon$};
    \end{tikzpicture} = 0.
\end{equation*}
Finally, using $c=0$ and \cref{eq:B-Btilde}, we arrive at 
\begin{equation*}
    \begin{aligned}
    \begin{tikzpicture}[yscale=1, xscale=0.6]
        \draw (-0.5,0)--(3,0);
        \node[tensor,label=below:$A$] (l) at (0,0) {};
        \draw (l)--++(0,1);
        \node[tensor,label=below:$A$] (m) at (1,0) {};
        \draw (m)--++(0,1);
        \node[tensor,label=below:$A$] (r) at (2.5,0) {};
        \draw (r)--++(0,1);
        \draw[line width=1.7mm,line cap=round] (0,0.5) -- (2.5,0.5);
        \node[fill=white,inner sep = 1pt] at (1.75,0) {$\dots$};
        \node[anchor=south,inner sep=5pt] at (1.75,0.5) {$O-\epsilon$};
    \end{tikzpicture}  =
    \begin{tikzpicture}[yscale = 1, xscale=0.6]
        \draw (-0.5,0)--(3,0);
        \draw (1,0)--++(0,0.5);
        \draw (2.5,0)--++(0,0.5);
        \node[tensor,label=below:$A$] (r) at (0,0) {};
        \draw (r)--++(0,0.5);
        \draw[red,line width=1.7mm,line cap=round] (1,0) -- (2.5,0);
        \node[anchor=north,inner sep = 1.7mm] at (1.75,0) {$B$};
        \node at (1.75,0.3) {$\dots$};
    \end{tikzpicture}  -  
    \begin{tikzpicture}[yscale=0.7,xscale=0.6]
        \draw (-0.5,0)--(3,0);
        \node[tensor,label=below:$A$] (l) at (2.5,0) {};
        \draw (l)--++(0,0.5);
        \draw (0,0)--++(0,0.5);
        \draw (1.5,0)--++(0,0.5);
        \draw[red,line width=1.7mm,line cap=round] (0,0) -- (1.5,0);
        \node[anchor=north,inner sep = 1.7mm] at (0.75,0) {$B$};
        \node at (0.75,0.3) {$\dots$};
    \end{tikzpicture} \ ,
    \end{aligned}
\end{equation*}    
which is the desired equation.
\end{proof}

\subsection{Redundancies in the tensor \texorpdfstring{$B$}{B}}

In this Section we characterize the redundancy in the tensor $B$ that appears in Theorem~\ref{thm:main}. 
\begin{theorem}
Let the tensor $B$ a solution for \cref{eq:hABAp}, or equivalently, \cref{eq:hABAp graphical}, given the injective MPS tensor $A$ and local operator $O$. Then the tensor $\tilde{B}$ is also a solution of \cref{eq:hABAp graphical} for the same $A$ and $O$ if and only if  
\begin{equation*}
  \tilde{B}^{i_1 \dots i_k} = B^{i_1\dots i_k} + \lambda A^{i_1} \dots A^{i_k}.  
\end{equation*}
\end{theorem}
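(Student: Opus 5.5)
The plan is to reduce the statement to a homogeneous equation for the difference $C=\tilde B-B$ and then use that, for an injective tensor, products of $L$ or more matrices $A$ span the full matrix algebra. Here $B$ is read as a $(k-1)$-site object, as in the right-hand side of \cref{eq:hABAp}, so the claimed form is $\tilde B^{i_1\dots i_{k-1}}=B^{i_1\dots i_{k-1}}+\lambda A^{i_1}\cdots A^{i_{k-1}}$.

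\emph{The ``if'' direction.} If $C^{i_1\dots i_{k-1}}=\lambda A^{i_1}\cdots A^{i_{k-1}}$, then $A^{i_1}C^{i_2\dots i_k}-C^{i_1\dots i_{k-1}}A^{i_k}=\lambda A^{i_1}\cdots A^{i_k}-\lambda A^{i_1}\cdots A^{i_k}=0$. Since \cref{eq:hABAp} is linear in $B$, $\tilde B$ is again a solution.

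\emph{The ``only if'' direction, step 1: relation for $C$.} Subtracting the two instances of \cref{eq:hABAp}, the left-hand sides cancel. For all indices this leaves
\begin{equation*}
A^{i_1}C^{i_2\dots i_k}=C^{i_1\dots i_{k-1}}A^{i_k}.
\end{equation*}

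\emph{Step 2: pass $C$ through a block of $L$ tensors.} Iterating this relation, $C$ can be moved to the left through any string of $A$'s. For a word $w=w_1\dots w_L$ and indices $j=j_1\dots j_{k-1}$, write $s=wj$ for the concatenated word of length $L+k-1$. Then
\begin{equation*}
A^{w_1}\cdots A^{w_L}\,C^{j_1\dots j_{k-1}}=C^{s_1\dots s_{k-1}}\,A^{s_k}\cdots A^{s_{L+k-1}},
\end{equation*}
which follows by a simple induction on the number of $A$'s passed.

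\emph{Step 3: $C$ is an $A$-string times a matrix.} Injectivity with length $L$ (existence of the left inverse $\tilde A$) is equivalent to $\mathrm{span}\{A^{w_1}\cdots A^{w_L}\}=M_D$, and the same holds for all lengths $n\geq L$. I therefore choose coefficients $c_w$ with $\sum_w c_wA^{w_1}\cdots A^{w_L}=\id$. Multiplying the relation of Step 2 by $c_w$ and summing over $w$ gives
\begin{equation*}
C^{j_1\dots j_{k-1}}=Z\,A^{j_1}\cdots A^{j_{k-1}}, \qquad Z=\sum_w c_w\, C^{w_1\dots w_{k-1}}A^{w_k}\cdots A^{w_L}.
\end{equation*}
The key point is that $Z$ does not depend on $j$. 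This uses that the first $k-1$ letters of the word $wj$ lie inside $w$, which needs $L\geq k-1$. If $L<k-1$, I would first replace $L$ by $\max(L,k-1)$, which is allowed since injectivity persists for longer blocks.

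\emph{Step 4: $Z$ is a scalar.} Inserting $C=Z A\cdots A$ into the relation of Step 1 gives $(A^{i}Z-ZA^{i})A^{j_1}\cdots A^{j_{k-1}}=0$ for all indices. Right-multiplying by arbitrary strings of $L$ further $A$'s, the products $A^{j_1}\cdots A^{j_{k-1}}A^{u_1}\cdots A^{u_L}$ have length at least $L$ and so span $M_D$. Hence $A^iZ=ZA^i$ for every $i$. Then $Z$ commutes with the algebra generated by the $A^i$, which is all of $M_D$, so $Z=\lambda\,\id$ and $C=\lambda A\cdots A$.

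The main obstacle is Step 3: one must produce a representation of $C$ with a $j$-independent prefactor. This is where the index bookkeeping, and the condition $L\geq k-1$, must be handled carefully. The remaining steps are routine consequences of the spanning property.
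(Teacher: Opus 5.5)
Your proof is correct. Your reading of $B$ as a $(k-1)$-site object, matching \cref{eq:hABAp} rather than the $k$ indices written in the statement, is also the right one. The route differs from the paper's in the final step.

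\textbf{The paper's route.} It also reduces to the homogeneous relation $AD=DA$ for the difference and iterates it. But it pushes $D$ across a longer string: $k$ outer tensors plus $L$ middle ones. It then contracts the full left inverse $\tilde A$ on the middle $L$ sites. Since $\tilde A$ produces $\delta_{\alpha\gamma}\delta_{\beta\epsilon}$ rather than merely the identity matrix, this cuts the virtual bond. The result is the tensor-product identity $(A\cdots A)\otimes D = D\otimes(A\cdots A)$, and $D=\lambda A\cdots A$ follows at once because $A\cdots A\neq 0$.

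\textbf{Your route.} You extract only the identity from the span, $\sum_w c_w A^{w_1}\cdots A^{w_L}=\id$. This leaves a residual matrix $Z$ in $C=Z\,A\cdots A$. You then need a separate commutant step: the $A^i$ generate $M_D$, so $[A^i,Z]=0$ forces $Z=\lambda\id$.

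\textbf{What each buys.} The paper's bond-cutting argument is one-shot. It needs no condition like $L\geq k-1$ and no persistence of injectivity to longer blocks. It is also the form that carries over directly to the PEPS setting. Your argument stays within matrix algebra and uses a shorter chain. It does rely on the standard facts that spanning holds for all lengths $\geq L$ and that $M_D$ has trivial commutant, and you invoke both correctly.
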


\begin{proof} ($\Leftarrow$) The term $\lambda A^{i_1} \dots A^{i_k}$ cancels from the telescopic difference (i.e., from the r.h.s.\ of \cref{eq:hABAp graphical}):
\begin{equation}\label{eq:ADDA}
    \begin{aligned}
    \begin{tikzpicture}[yscale = 1, xscale=0.6]
        \draw (-0.5,0)--(3,0);
        \draw (1,0)--++(0,0.5);
        \draw (2.5,0)--++(0,0.5);
        \node[tensor,label=below:$A$] (r) at (0,0) {};
        \draw (r)--++(0,0.5);
        \draw[red,line width=1.7mm,line cap=round] (1,0) -- (2.5,0);
        \node[anchor=north,inner sep = 1.7mm] at (1.75,0) {$B$};
        \node at (1.75,0.3) {$\dots$};
    \end{tikzpicture}  -
    \begin{tikzpicture}[yscale=0.7,xscale=0.6]
        \draw (-0.5,0)--(3,0);
        \node[tensor,label=below:$A$] (l) at (2.5,0) {};
        \draw (l)--++(0,0.5);
        \draw (0,0)--++(0,0.5);
        \draw (1.5,0)--++(0,0.5);
        \draw[red,line width=1.7mm,line cap=round] (0,0) -- (1.5,0);
        \node[anchor=north,inner sep = 1.7mm] at (0.75,0) {$B$};
        \node at (0.75,0.3) {$\dots$};
    \end{tikzpicture}  = \ 
    \end{aligned}
    \begin{aligned}
    \begin{tikzpicture}[yscale = 1, xscale=0.6]
        \draw (-0.5,0)--(3,0);
        \draw (1,0)--++(0,0.5);
        \draw (2.5,0)--++(0,0.5);
        \node[tensor,label=below:$A$] (r) at (0,0) {};
        \draw (r)--++(0,0.5);
        \draw[red,line width=1.7mm,line cap=round] (1,0) -- (2.5,0);
        \node[anchor=north,inner sep = 1.7mm] at (1.75,0) {$\tilde B$};
        \node at (1.75,0.3) {$\dots$};
    \end{tikzpicture}  -
    \begin{tikzpicture}[yscale=0.7,xscale=0.6]
        \draw (-0.5,0)--(3,0);
        \node[tensor,label=below:$A$] (l) at (2.5,0) {};
        \draw (l)--++(0,0.5);
        \draw (0,0)--++(0,0.5);
        \draw (1.5,0)--++(0,0.5);
        \draw[red,line width=1.7mm,line cap=round] (0,0) -- (1.5,0);
        \node[anchor=north,inner sep = 1.7mm] at (0.75,0) {$\tilde B$};
        \node at (0.75,0.3) {$\dots$};
    \end{tikzpicture},
    \end{aligned}
\end{equation}  
and thus if $B$ is a solution to \cref{eq:hABAp graphical}, then so is $\tilde B$. 

($\Rightarrow$) $D = B-\tilde{B}$ satisfies the equation     
\begin{equation}\label{eq:ADDA}
    \begin{aligned}
    \begin{tikzpicture}[yscale = 1, xscale=0.6]
        \draw (-0.5,0)--(3,0);
        \draw (1,0)--++(0,0.5);
        \draw (2.5,0)--++(0,0.5);
        \node[tensor,label=below:$A$] (r) at (0,0) {};
        \draw (r)--++(0,0.5);
        \draw[red,line width=1.7mm,line cap=round] (1,0) -- (2.5,0);
        \node[anchor=north,inner sep = 1.7mm] at (1.75,0) {$D$};
        \node at (1.75,0.3) {$\dots$};
    \end{tikzpicture}  -
    \begin{tikzpicture}[yscale=0.7,xscale=0.6]
        \draw (-0.5,0)--(3,0);
        \node[tensor,label=below:$A$] (l) at (2.5,0) {};
        \draw (l)--++(0,0.5);
        \draw (0,0)--++(0,0.5);
        \draw (1.5,0)--++(0,0.5);
        \draw[red,line width=1.7mm,line cap=round] (0,0) -- (1.5,0);
        \node[anchor=north,inner sep = 1.7mm] at (0.75,0) {$D$};
        \node at (0.75,0.3) {$\dots$};
    \end{tikzpicture}  =0 ,
    \end{aligned}
\end{equation}  
or equivalently,
\begin{equation}\label{eq:AD=DA}
    \begin{aligned}
    \begin{tikzpicture}[yscale = 1, xscale=0.6]
        \draw (-0.5,0)--(3,0);
        \draw (1,0)--++(0,0.5);
        \draw (2.5,0)--++(0,0.5);
        \node[tensor,label=below:$A$] (r) at (0,0) {};
        \draw (r)--++(0,0.5);
        \draw[red,line width=1.7mm,line cap=round] (1,0) -- (2.5,0);
        \node[anchor=north,inner sep = 1.7mm] at (1.75,0) {$D$};
        \node at (1.75,0.3) {$\dots$};
    \end{tikzpicture}  =
    \begin{tikzpicture}[yscale=0.7,xscale=0.6]
        \draw (-0.5,0)--(3,0);
        \node[tensor,label=below:$A$] (l) at (2.5,0) {};
        \draw (l)--++(0,0.5);
        \draw (0,0)--++(0,0.5);
        \draw (1.5,0)--++(0,0.5);
        \draw[red,line width=1.7mm,line cap=round] (0,0) -- (1.5,0);
        \node[anchor=north,inner sep = 1.7mm] at (0.75,0) {$D$};
        \node at (0.75,0.3) {$\dots$};
    \end{tikzpicture} .
    \end{aligned}
\end{equation}  
Repeatedly applying this equation, we arrive at 
\begin{equation*}
    \begin{tikzpicture}[xscale=0.6,yscale=0.5]
        \draw (-0.5,0) -- (8,0);
        \foreach \x in {0,2,3,5} {
            \node[tensor,label=below:$A$] at (\x,0) {};
            \draw (\x,0) --++(0,1);
        }
        \node[fill=white] at (1,0) {$\dots$}; 
        \node[fill=white] at (4,0) {$\dots$}; 
        \node[fill=white] at (6.75,0.6) {$\dots$}; 
        \draw (6,0) --++ (0,1);
        \draw (7.5,0) --++ (0,1);
        \draw[red,line width=1.7mm,line cap=round] (6,0) --++ (1.5,0);
        \draw[decorate,decoration={brace,amplitude=4pt},yshift=2pt] (-0.1,1) -- (2.1,1)
            node[midway,yshift=10pt] {$k$};
        \draw[decorate,decoration={brace,amplitude=4pt},yshift=2pt] (3-0.1,1) -- (5.1,1)
            node[midway,yshift=10pt] {$L$};
        \draw[decorate,decoration={brace,amplitude=4pt},yshift=2pt] (6-0.1,1) -- (7.6,1)
            node[midway,yshift=10pt] {$k$};
        \node[anchor=north,inner sep = 1.7mm] at (6.75,0) {$D$};
    \end{tikzpicture}   =
    \begin{tikzpicture}[xscale=-0.6,yscale=0.5]
        \draw (-0.5,0) -- (8,0);
        \foreach \x in {0,2,3,5} {
            \node[tensor,label=below:$A$] at (\x,0) {};
            \draw (\x,0) --++(0,1);
        }
        \node[fill=white] at (1,0) {$\dots$}; 
        \node[fill=white] at (4,0) {$\dots$}; 
        \node[fill=white] at (6.75,0.6) {$\dots$}; 
        \draw (6,0) --++ (0,1);
        \draw (7.5,0) --++ (0,1);
        \draw[red,line width=1.7mm,line cap=round] (6,0) --++ (1.5,0);
        \draw[decorate,decoration={brace,mirror,amplitude=4pt},yshift=2pt] (-0.1,1) -- (2.1,1)
            node[midway,yshift=10pt] {$k$};
        \draw[decorate,decoration={brace,mirror,amplitude=4pt},yshift=2pt] (3-0.1,1) -- (5.1,1)
            node[midway,yshift=10pt] {$L$};
        \draw[decorate,decoration={brace,mirror,amplitude=4pt},yshift=2pt] (6-0.1,1) -- (7.6,1)
            node[midway,yshift=10pt] {$k$};
        \node[anchor=north,inner sep = 1.7mm] at (6.75,0) {$D$};
    \end{tikzpicture}     ,
\end{equation*}
where $L$ is the injectivity length of $A$. Due to injectivity of $A$, there is a tensor $A^{-1}$ such that 
\begin{equation*}
    \begin{tikzpicture}[xscale=0.6,yscale=0.5,baseline=1.6mm]
        \draw (-0.5,0) -- (2.5,0);
        \draw (-0.5,1) -- (2.5,1);
        \foreach \x in {0,2} {
            \node[tensor,label=below:$A$] at (\x,0) {};
            \draw (\x,0) --++(0,1);
        }
        \node[anchor=south,inner sep = 2mm] at (1,1) {$A^{-1}$};
        \node[fill=white] at (1,0) {$\dots$}; 
        \node[fill=white] at (1,0.5) {$\dots$}; 
        \draw[line width=1.7mm,line cap=round] (0,1) --++ (2,0);
    \end{tikzpicture} = 
    \begin{tikzpicture}[xscale=0.6,yscale=0.5,baseline=1.6mm]
        \draw (0,0) -- (1,0) -- (1,1) -- (0,1);
        \draw (3,0) -- (2,0) -- (2,1) -- (3,1);
    \end{tikzpicture}.
\end{equation*}
Applying $A^{-1}$ on the middle $L$ tensors, and obtain
\begin{equation*}
    \begin{tikzpicture}[yscale=0.5,xscale=0.6]
        \draw (-0.5,0) -- (2.5,0);
        \foreach \x in {0,2} {
            \node[tensor,label=below:$A$] at (\x,0) {};
            \draw (\x,0) --++(0,1);
        }
        \node[fill=white] at (1,0) {$\dots$}; 
    \end{tikzpicture} \otimes 
    \begin{tikzpicture}[yscale = 1, xscale=0.6]
        \draw (0.5,0)--(3,0);
        \draw (1,0)--++(0,0.5);
        \draw (2.5,0)--++(0,0.5);
        \draw[red,line width=1.7mm,line cap=round] (1,0) -- (2.5,0);
        \node[anchor=north,inner sep = 1.7mm] at (1.75,0) {$D$};
        \node at (1.75,0.3) {$\dots$};
    \end{tikzpicture}  = 
    \begin{tikzpicture}[yscale = 1, xscale=0.6]
        \draw (0.5,0)--(3,0);
        \draw (1,0)--++(0,0.5);
        \draw (2.5,0)--++(0,0.5);
        \draw[red,line width=1.7mm,line cap=round] (1,0) -- (2.5,0);
        \node[anchor=north,inner sep = 1.7mm] at (1.75,0) {$D$};
        \node at (1.75,0.3) {$\dots$};
    \end{tikzpicture}  \otimes 
    \begin{tikzpicture}[yscale=0.5,xscale=0.6]
        \draw (-0.5,0) -- (2.5,0);
        \foreach \x in {0,2} {
            \node[tensor,label=below:$A$] at (\x,0) {};
            \draw (\x,0) --++(0,1);
        }
        \node[fill=white] at (1,0) {$\dots$}; 
    \end{tikzpicture}.
\end{equation*}
This implies that there is $\lambda\in\mathbb{C}$ such that 
\begin{equation*}
    \begin{tikzpicture}[yscale = 1, xscale=0.6]
        \draw (0.5,0)--(3,0);
        \draw (1,0)--++(0,0.5);
        \draw (2.5,0)--++(0,0.5);
        \draw[red,line width=1.7mm,line cap=round] (1,0) -- (2.5,0);
        \node[anchor=north,inner sep = 1.7mm] at (1.75,0) {$D$};
        \node at (1.75,0.3) {$\dots$};
    \end{tikzpicture}  = \lambda \cdot
    \begin{tikzpicture}[yscale=0.5, xscale=0.6]
        \draw (-0.5,0)--(3,0);
        \node[tensor,label=below:$A$] (l) at (0,0) {};
        \draw (l)--++(0,1);
        \node[tensor,label=below:$A$] (m) at (1,0) {};
        \draw (m)--++(0,1);
        \node[tensor,label=below:$A$] (r) at (2.5,0) {};
        \draw (r)--++(0,1);
        \node[fill=white,inner sep = 1pt] at (1.75,0) {$\dots$};
    \end{tikzpicture} .
\end{equation*}
\end{proof}
\subsection{The simplest case: two-body operators on injective MPS}

In this subsection we show a simplified proof of the main theorem for the case $L=1$ and $k=2$. This proof is slightly different from the general case because for $L=1$, we can apply the inverse tensor on all MPS tensors simultaneously.  

\begin{theorem}
    An MPS defined by an injective MPS tensor $A$ is the exact zero energy eigenstate of a translation invariant two-body operator $\mathcal{O}_N=\sum_i O_{[i,i+1]}$ for system size $N\geq 5$ if and only if there is an MPS tensor $B$ such that
    \begin{equation*}
        \begin{tikzpicture}
            \draw (-0.5,0)--(1.5,0);
            \node[tensor,label=below:$A$] (l) at (1,0) {};
            \draw (l)--++(0,1);
            \node[tensor,label=below:$A$] (r) at (0,0) {};
            \draw (r)--++(0,1);
            \draw[line width=2mm,line cap=round] (0,0.5) -- (1,0.5);
            \node[anchor=south,inner sep=5pt] at (0.5,0.5) {$O$};
        \end{tikzpicture} =
        \begin{tikzpicture}
            \draw (-0.5,0)--(1.5,0);
            \node[tensor,color=red,label=below:$B$] (l) at (1,0) {};
            \draw (l)--++(0,0.5);
            \node[tensor,label=below:$A$] (r) at (0,0) {};
            \draw (r)--++(0,0.5);
        \end{tikzpicture} - 
        \begin{tikzpicture}
            \draw (-0.5,0)--(1.5,0);
            \node[tensor,label=below:$A$] (l) at (1,0) {};
            \draw (l)--++(0,0.5);
            \node[tensor,color=red,label=below:$B$] (r) at (0,0) {};
            \draw (r)--++(0,0.5);
        \end{tikzpicture} \ \Longleftrightarrow      \sum_{i'j',\gamma} O_{ij,i'j'} A^{i'}_{\alpha\gamma}A^{j'}_{\gamma\beta} = \sum_{\gamma,\delta} A^i_{\alpha\gamma}B^j_{\gamma\beta}-B^i_{\alpha\delta}A^j_{\delta\beta}
    \end{equation*}    
\end{theorem}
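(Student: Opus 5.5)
The sufficiency direction is the telescoping argument. If the local equation holds, we substitute it into each term $O_{[i,i+1]}$ of $\mathcal{O}_N$ acting on $\ket{\psi_N}$. The $i$-th term then contributes
\[
\tr(\cdots A^{i_i} B^{i_{i+1}} \cdots) - \tr(\cdots B^{i_i} A^{i_{i+1}} \cdots),
\]
with $A$ on all other sites. Summing over $i$ cyclically, the first piece of term $i$ cancels the second piece of term $i+1$, so $\mathcal{O}_N\ket{\psi_N}=0$. This direction needs no injectivity and holds for every $N$.

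For necessity we use the fact that $L=1$. The map $\Gamma:X\mapsto\sum_i\tr(XA^i)\ket{i}$ is then injective, so there is a left inverse $\tilde A$ with $\sum_i \tilde A^i_{\gamma\epsilon}A^i_{\alpha\beta}=\delta_{\alpha\gamma}\delta_{\beta\epsilon}$. We do not need the $\rho_L,\rho_R$ normalization here. Instead we contract $\tilde A$ on every site of the eigenvalue equation $\sum_i O_{[i,i+1]}\ket{\psi_N}=0$ except a few. Contracting $\tilde A$ on all $N$ sites cuts every virtual bond. Each term $i$ then becomes a tensor in the virtual legs that is the identity ($\delta\otimes\delta$ strands) everywhere except on the pair of bonds around sites $i,i+1$, where it is the two-site block
\[
M=\sum \tilde A\tilde A\,(O\,AA)
\]
with the middle bond contracted. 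This is the analogue of the constant $c$ in Theorem~\ref{thm:main}. Contracting with $\tilde A$ on only $N-2$ sites and leaving sites $1,2$ open gives an equation of the schematic form
\[
(O\,AA)_{12} + (\text{terms } i=N,\,2)+(\text{terms } i=3,\dots,N-2) = 0,
\]
where all virtual bonds away from sites $1,2$ are cut into identity strands.

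\begin{itemize}
\item The terms $i=3,\dots,N-2$ are each $AA$ on sites $1,2$ tensored with a scalar-like cut piece. To show they vanish, we first contract $\tilde A$ on all sites. This splits the virtual space into disconnected loops, and yields an equation of the form $\alpha + (N-4)\beta=0$ together with consistency relations. Comparing two different contraction patterns, as in the first two contractions of the proof of Theorem~\ref{thm:main}, forces the isolated-term contribution to vanish, i.e.\ $c=0$. This is where $N\ge 5$ enters: we need at least one term supported entirely away from the open sites.
\item The term $i=N$, acting on sites $N,1$ with $\tilde A$ on site $N$, produces a tensor $\sum \tilde A^{\,\cdot}\,(O\,AA)$ whose open left virtual leg is glued to site $1$'s physical output. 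Bending the cut bond back, we define $-B^{i}$ on site $1$ with $A$ on site $2$.
\item Similarly, term $i=2$ gives a tensor on site $2$, which we call $B'$, with $A$ on site $1$.
\end{itemize}

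This yields
\[
O\,AA + A B' - B A = 0 .
\]

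It then remains to show that $B'$ can be identified with $B$ up to sign, i.e.\ $B'=-B+\lambda A$. For this we mimic the third and fourth contractions of the proof of Theorem~\ref{thm:main}. First we open two non-adjacent sites, say $1$ and $3$, and invert the cut bonds with $\rho_{L}^{-1},\rho_R^{-1}$ (or simply with identity strands, since $L=1$). With $c=0$, this gives that the left-boundary contribution equals minus the right-boundary contribution as single-site tensors. Substituting this into the equation above yields
\[
O\,AA = AB - BA .
\]

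The main obstacle is the bookkeeping in this last step. One must check that the tensor produced by term $i=N$, once its dangling virtual index is reconnected, is exactly a single-site MPS-shaped tensor $B$ with the correct virtual legs, and that the same object appears with opposite sign from term $i=2$. I would handle this by defining $B$ explicitly as the contraction of $O$ with $A$ on the right site and $\tilde A$ on the left site, closed with a virtual identity, and then verifying both identifications directly. The remaining ambiguity $B\mapsto B+\lambda A$ is harmless by the redundancy theorem above.
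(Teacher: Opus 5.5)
Your sufficiency argument is fine, but the necessity argument has a genuine gap where you discard the terms that do not touch the open sites.

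\textbf{The gap.} In your scheme, $\tilde A$ sits on all other sites and cut bonds are closed by $\delta$-strands, i.e.\ by $|\Omega\rangle=\sum_\alpha|\alpha\alpha\rangle$. The full contraction then gives every term the same value, so it yields only the scalar identity $\langle\Omega|M|\Omega\rangle=0$; no second contraction pattern is needed. This kills the terms $i=4,\dots,N-2$, but not the two terms next to the open block: $i=3$ on sites $(3,4)$ and $i=N-1$ on sites $(N-1,N)$. The latter is missing from your list entirely.

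For these two terms a cut site closes only the far side of $M$. The near side stays attached to the open $A$ and to a dangling leg of $\tilde A$. Up to transposition and powers of $D$, they contribute $A^{i_1}A^{i_2}W$ and $VA^{i_1}A^{i_2}$, where $V$ and $W$ are the $D\times D$ matrices $\langle\Omega|M$ and $M|\Omega\rangle$. The scalar identity only gives $\tr V=\tr W=0$, and these matrices are generically nonzero. For example, choose $O$ with $O\,AA=AB_0-B_0A$, which is always possible for injective $A$. Then $V=D\,P-\tr(P)\,\id$ with $P_{\beta\epsilon}=\sum_{j,\nu}(B_0^j)_{\nu\beta}\tilde A^j_{\nu\epsilon}$.

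Injectivity of $A$ turns $VAA+AAW=0$, or $VA+AW=\lambda A$, into $V=W=0$. Hence your relation $O\,AA+AB'-BA=0$, with $B$ and $B'$ built from the terms $i=N$ and $i=2$ alone, is false in general. No identification $B'=\pm B+\lambda A$ can repair it. (Separately, your own sign conventions would require $B'=B$, not $B'=-B$.)

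\textbf{The missing idea and the fix.} The adjacent terms are themselves single-site, MPS-shaped contributions: $VA$ on the left open site and $AW$ on the right one. They must be absorbed into the boundary tensors, which is what the paper does.
\begin{enumerate}
\item Contract with $\tilde A$ on all sites except one, $s$. Four terms survive: $O$ on $(s-2,s-1)$ and $(s-1,s)$ on the left, and on $(s,s+1)$ and $(s+1,s+2)$ on the right.
\item Define $B$ as the sum of the two left contributions. The equation then says directly that $B$ is minus the sum of the two right contributions. This single-open-site contraction is the left/right identification you were after; there is no need to open two non-adjacent sites or insert $\rho_{L,R}^{-1}$.
\item Leave two adjacent sites open. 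The five surviving terms, from $(s-2,s-1)$ to $(s+2,s+3)$, regroup into $O\,AA+BA-AB=0$.
\end{enumerate}

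This is also the real role of $N\ge5$: those five terms must be distinct, and each must be closed off by a cut site on its far side. The existence of a term away from the open sites is not the reason. For $N=5$ every such term is adjacent to the open block, and for $N=4$ such a term still exists.
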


\begin{proof}

The eigenvalue equation reads as
\begin{equation*}
    \sum_i \ 
    \begin{tikzpicture}
        \draw (-0.5,0) rectangle (4.5,-0.5);
        \draw (1,0.5)--(1,1);
        \draw (2,0.5)--(2,1);
        \draw[line width=2mm, line cap=round] (1,0.5)--(2,0.5);
        \node at (1.5,0.8) {$O$};
        \node at (1,1.2) {$i$};
        \node at (2,1.2) {$i+1$};
        \foreach\x in {0,1,2,4}{
            \node[tensor] (t) at (\x,0) {};
            \draw (t)--++(0,0.5);
        }
        \node[fill=white] at (3,0) {$\dots$};
    \end{tikzpicture} = 0.    
\end{equation*}

Let us apply the inverse of $A$ in the whole region. We obtain that
$
    \begin{tikzpicture}[baseline=4mm]
        \draw (0,0) -- (0,1);
        \draw (1,0) -- (1,1);
        \draw (0,0) -- (1,0) arc (-90:90:0.5) -- (0,1) arc (90:270:0.5);
        \node[tensor] at (0,0) {};
        \node[tensor] at (1,0) {};
        \node[tensor] at (0,1) {};
        \node[tensor] at (1,1) {};
        \draw[line width=2mm, line cap=round] (0,0.5)--(1,0.5);        
    \end{tikzpicture} = 0.
$
Let us apply the inverse tensor on every site except one. We obtain
\begin{equation*}
    \begin{tikzpicture}[baseline=4mm]
        \draw (0,0) -- (0,1);
        \draw (1,0) -- (1,1.5);
        \draw[rounded corners =3mm] (0.5,1.5) -- (0.5,1) -- (-0.5,1)--(-0.5,0)--(1.5,0)--(1.5,1.5);
        \node[tensor] at (0,0) {};
        \node[tensor] at (1,0) {};
        \node[tensor] at (0,1) {};
        \draw[line width=2mm, line cap=round] (0,0.5)--(1,0.5);        
    \end{tikzpicture} +
    \begin{tikzpicture}[baseline=4mm]
        \draw (1,0) -- (1,1);
        \draw (0,0) -- (0,1.5);
        \draw[rounded corners =3mm] (0.5,1.5) -- (0.5,1) --(1.5,1)--(1.5,0)--(-0.5,0)--(-0.5,1.5);
        \node[tensor] at (0,0) {};
        \node[tensor] at (1,0) {};
        \node[tensor] at (1,1) {};
        \draw[line width=2mm, line cap=round] (0,0.5)--(1,0.5);        
    \end{tikzpicture} +
    \begin{tikzpicture}[baseline=4mm]
        \draw (0,0) -- (0,1);
        \draw (1,0) -- (1,1);
        \draw (2,0) -- (2,1.5);
        \draw[rounded corners=3mm] (1.4,1.5)--(1.4,1)--(-0.5,1) -- (-0.5,0)--(2.5,0) -- (2.5,1.5);
        \node[tensor] at (0,0) {};
        \node[tensor] at (1,0) {};
        \node[tensor] at (0,1) {};
        \node[tensor] at (1,1) {};
        \node[tensor] at (2,0) {};
        \draw[line width=2mm, line cap=round] (0,0.5)--(1,0.5);        
    \end{tikzpicture} +
    \begin{tikzpicture}[baseline=4mm]
        \draw (0,0) -- (0,1);
        \draw (1,0) -- (1,1);
        \draw (-1,0) -- (-1,1.5);
        \draw[rounded corners=3mm] (-0.5,1.5)--(-0.5,1)--(1.5,1) -- (1.5,0)--(-1.5,0) -- (-1.5,1.5);
        \node[tensor] at (-1,0) {};
        \node[tensor] at (0,0) {};
        \node[tensor] at (1,0) {};
        \node[tensor] at (0,1) {};
        \node[tensor] at (1,1) {};
        \draw[line width=2mm, line cap=round] (0,0.5)--(1,0.5);        
    \end{tikzpicture} = 0.
\end{equation*}
Let us define now 
\begin{equation*}
    \begin{tikzpicture}[baseline=4mm]
        \draw (0,0) -- (0,0.5);
        \draw[rounded corners =3mm] (-0.5,0.5)--(-0.5,0)--(0.5,0)--(0.5,0.5);
        \node[tensor,red,label=below:$B$] at (0,0) {};
    \end{tikzpicture} =
    \begin{tikzpicture}[baseline=4mm]
        \draw (0,0) -- (0,1);
        \draw (1,0) -- (1,1.5);
        \draw[rounded corners =3mm] (0.5,1.5) -- (0.5,1) --(-0.5,1)--(-0.5,0)--(1.5,0)--(1.5,1.5);
        \node[tensor] at (0,0) {};
        \node[tensor] at (1,0) {};
        \node[tensor] at (0,1) {};
        \draw[line width=2mm, line cap=round] (0,0.5)--(1,0.5);        
    \end{tikzpicture} +
    \begin{tikzpicture}[baseline=4mm]
        \draw (0,0) -- (0,1);
        \draw (1,0) -- (1,1);
        \draw (2,0) -- (2,1.5);
        \draw[rounded corners=3mm] (1.4,1.5)--(1.4,1)--(-0.5,1) -- (-0.5,0)--(2.5,0) -- (2.5,1.5);
        \node[tensor] at (0,0) {};
        \node[tensor] at (1,0) {};
        \node[tensor] at (0,1) {};
        \node[tensor] at (1,1) {};
        \node[tensor] at (2,0) {};
        \draw[line width=2mm, line cap=round] (0,0.5)--(1,0.5);        
    \end{tikzpicture} = -
    \begin{tikzpicture}[baseline=4mm]
        \draw (1,0) -- (1,1);
        \draw (0,0) -- (0,1.5);
        \draw[rounded corners =3mm] (0.5,1.5) -- (0.5,1) --(1.5,1)--(1.5,0)--(-0.5,0)--(-0.5,1.5);
        \node[tensor] at (0,0) {};
        \node[tensor] at (1,0) {};
        \node[tensor] at (1,1) {};
        \draw[line width=2mm, line cap=round] (0,0.5)--(1,0.5);        
    \end{tikzpicture} -
    \begin{tikzpicture}[baseline=4mm]
        \draw (0,0) -- (0,1);
        \draw (1,0) -- (1,1);
        \draw (-1,0) -- (-1,1.5);
        \draw[rounded corners=3mm] (-0.5,1.5)--(-0.5,1)--(1.5,1) -- (1.5,0)--(-1.5,0) -- (-1.5,1.5);
        \node[tensor] at (-1,0) {};
        \node[tensor] at (0,0) {};
        \node[tensor] at (1,0) {};
        \node[tensor] at (0,1) {};
        \node[tensor] at (1,1) {};
        \draw[line width=2mm, line cap=round] (0,0.5)--(1,0.5);        
    \end{tikzpicture}.
\end{equation*}
Finally we can apply the inverse on every site except for two. We obtain 
\begin{equation*}
    \begin{tikzpicture}[baseline=4mm,xscale=0.7]
        \draw (0,0) -- (0,1.5);
        \draw (1,0) -- (1,1.5);
        \draw[rounded corners =3mm] (-0.5,1.5) --(-0.5,0)--(1.5,0)--(1.5,1.5);
        \node[tensor] at (0,0) {};
        \node[tensor] at (1,0) {};
        \draw[line width=2mm, line cap=round] (0,0.5)--(1,0.5);        
    \end{tikzpicture} +
    \begin{tikzpicture}[baseline=4mm,xscale=0.7]
        \draw (0,0) -- (0,1);
        \draw (1,0) -- (1,1.5);
        \draw (2,0) -- (2,1.5);
        \draw[rounded corners =3mm] (0.5,1.5) -- (0.5,1) --(-0.5,1)--(-0.5,0)--(2.5,0)--(2.5,1.5);
        \node[tensor] at (0,0) {};
        \node[tensor] at (1,0) {};
        \node[tensor] at (2,0) {};
        \node[tensor] at (0,1) {};
        \draw[line width=2mm, line cap=round] (0,0.5)--(1,0.5);        
    \end{tikzpicture} +
    \begin{tikzpicture}[baseline=4mm,xscale=0.7]
        \draw (1,0) -- (1,1);
        \draw (0,0) -- (0,1.5);
        \draw (-1,0) -- (-1,1.5);
        \draw[rounded corners =3mm] (0.5,1.5) -- (0.5,1) --(1.5,1)--(1.5,0)--(-1.5,0)--(-1.5,1.5);
        \node[tensor] at (0,0) {};
        \node[tensor] at (1,0) {};
        \node[tensor] at (-1,0) {};
        \node[tensor] at (1,1) {};
        \draw[line width=2mm, line cap=round] (0,0.5)--(1,0.5);        
    \end{tikzpicture} +
    \begin{tikzpicture}[baseline=4mm,xscale=0.7]
        \draw (0,0) -- (0,1);
        \draw (1,0) -- (1,1);
        \draw (2,0) -- (2,1.5);
        \draw (3,0) -- (3,1.5);
        \draw[rounded corners=3mm] (1.5,1.5)--(1.5,1)--(-0.5,1) -- (-0.5,0)--(3.5,0) -- (3.5,1.5);
        \node[tensor] at (0,0) {};
        \node[tensor] at (1,0) {};
        \node[tensor] at (0,1) {};
        \node[tensor] at (1,1) {};
        \node[tensor] at (2,0) {};
        \node[tensor] at (3,0) {};
        \draw[line width=2mm, line cap=round] (0,0.5)--(1,0.5);        
    \end{tikzpicture} +
    \begin{tikzpicture}[baseline=4mm,xscale=0.7]
        \draw (0,0) -- (0,1);
        \draw (1,0) -- (1,1);
        \draw (-1,0) -- (-1,1.5);
        \draw (-2,0) -- (-2,1.5);
        \draw[rounded corners=3mm] (-0.5,1.5)--(-0.5,1)--(1.5,1) -- (1.5,0)--(-2.5,0) -- (-2.5,1.5);
        \node[tensor] at (-2,0) {};
        \node[tensor] at (-1,0) {};
        \node[tensor] at (0,0) {};
        \node[tensor] at (1,0) {};
        \node[tensor] at (0,1) {};
        \node[tensor] at (1,1) {};
        \draw[line width=2mm, line cap=round] (0,0.5)--(1,0.5);        
    \end{tikzpicture} = 0.
\end{equation*}
Using the definition of the red tensor $B$, we obtain 
\begin{equation*}
    \begin{tikzpicture}[baseline=4mm]
        \draw (0,0) -- (0,1);
        \draw (1,0) -- (1,1);
        \draw[rounded corners =3mm] (-0.5,1) --(-0.5,0)--(1.5,0)--(1.5,1);
        \node[tensor] at (0,0) {};
        \node[tensor] at (1,0) {};
        \draw[line width=2mm, line cap=round] (0,0.5)--(1,0.5);        
    \end{tikzpicture} +
    \begin{tikzpicture}[baseline=4mm]
        \draw (1,0) -- (1,1);
        \draw (2,0) -- (2,1);
        \draw[rounded corners =3mm] (0.5,1) -- (0.5,0) --(2.5,0)--(2.5,1);
        \node[tensor,red] at (1,0) {};
        \node[tensor] at (2,0) {};
    \end{tikzpicture} -
    \begin{tikzpicture}[baseline=4mm]
        \draw (1,0) -- (1,1);
        \draw (2,0) -- (2,1);
        \draw[rounded corners =3mm] (0.5,1) -- (0.5,0) --(2.5,0)--(2.5,1);
        \node[tensor] at (1,0) {};
        \node[tensor,red] at (2,0) {};
    \end{tikzpicture} =0,
\end{equation*}
which is the desired equation.
\end{proof}
\subsection{The case of one-body operators}

\begin{theorem}
    An MPS defined by an injective MPS tensor $A$ is the exact zero energy eigenstate of a translation invariant one-body operator $\mathcal{O}_N=\sum_i O_{[i]}$ for system size $N\geq 3$ if and only if there is a virtual matrix $B$ such that
    \begin{equation*}
        \begin{tikzpicture}
            \draw (-0.5,0)--(0.5,0);
            \node[tensor,label=below:$A$] (r) at (0,0) {};
            \node[tensor,label=right:$O$]  at (0,0.5) {};
            \draw (r)--++(0,1);
        \end{tikzpicture} =
        \begin{tikzpicture}
            \draw (-0.5,0)--(1.5,0);
            \node[tensor,color=red,label=below:$B$] (l) at (1,0) {};
            \node[tensor,label=below:$A$] (r) at (0,0) {};
            \draw (r)--++(0,0.5);
        \end{tikzpicture} - 
        \begin{tikzpicture}
            \draw (-0.5,0)--(1.5,0);
            \node[tensor,label=below:$A$] (l) at (1,0) {};
            \draw (l)--++(0,0.5);
        \node[tensor,color=red,label=below:$B$] (r) at (0,0) {};
        \end{tikzpicture} \ \Longleftrightarrow      \sum_{j} O_{i,j} A^{j}_{\alpha\beta} = \sum_{\gamma} A^i_{\alpha\gamma}B_{\gamma\beta}-B_{\alpha\gamma}A^i_{\gamma\beta}
    \end{equation*}    
\end{theorem}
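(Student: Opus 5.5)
The plan is to mirror the simplified two-body argument above, using injectivity with $L=1$: there is a tensor $\tilde A$ with $\sum_i \tilde A^i_{\gamma\epsilon} A^i_{\alpha\beta}=\delta_{\alpha\gamma}\delta_{\beta\epsilon}$. (If $L>1$, I would first block $L$ sites; a one-body operator stays one-body after blocking, so nothing is lost.)

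The direction ($\Leftarrow$) is the telescoping argument. Substituting $O A = AB - BA$ at each site $i$ and taking the trace, we get
\[
\sum_i \tr(A\cdots (AB-BA)\cdots A)=0
\]
by cyclicity, since consecutive terms cancel around the ring. Hence $\mathcal{O}_N\ket{\psi_N}=0$.

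For ($\Rightarrow$), the eigenvalue equation reads $\sum_i \tr(A^{i_1}\cdots (OA)^{i_i}\cdots A^{i_N})=0$, where $(OA)^i=\sum_j O_{ij}A^j$. The proof then has three steps.

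\emph{Step 1.} Contract $\tilde A$ on all $N$ sites. Each term collapses to $c:=\sum_i \tr_{\text{virt}}$ of $(OA)^i$ contracted with $\tilde A^i$, which is a single scalar loop. This gives $Nc=0$, so $c=0$.

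\emph{Step 2.} Contract $\tilde A$ on every site except one, say site $1$. The terms acting on sites $2,\dots,N$ collapse to $c\cdot A$, which vanishes by Step 1. The only surviving term is the one with $O$ on site $1$, namely $(OA)^{i_1}$ with its virtual indices joined through the identity loop. This alone does not yet yield $B$. Two-site information is needed, so:

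\emph{Step 3.} Contract $\tilde A$ on all sites except two adjacent ones, sites $1$ and $2$. This leaves an open pair of physical legs with virtual legs closed by the identity; this requires $N\ge 3$, so that at least one $\tilde A$ closes the loop. The surviving terms are of two kinds:
\begin{itemize}
\item $(OA)A$ and $A(OA)$ on the free sites;
\item terms with $O$ on a contracted site, which give $c$ multiplied by an identity-closed loop, and so vanish.
\end{itemize}
The main obstacle is that the $\tilde A$ tensors contract only one site each, so the virtual bonds remain. Contracting $\tilde A$ on a site produces the map $X\mapsto$ a rewiring of the virtual legs. The result is therefore an equation between operators on $\mathbb{C}^D\otimes\mathbb{C}^D$ built from $A^{i_1}A^{i_2}$ with an open virtual "bubble", exactly as in the two-body proof.

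To conclude, the strategy is to define $B$ as the partial contraction with $O$ on one site and $\tilde A$ on the neighboring site, rewired so that $B$ is a virtual $D\times D$ matrix. It should be $B_{\gamma\beta}=\sum_{i}\sum_{\alpha}(OA)^i_{\alpha\beta}\tilde A^i_{\alpha\gamma}$, suitably index-ordered, possibly minus its trace part. Then I would show that the surviving terms from Step 3, after un-contracting one $\tilde A$ via injectivity, read $(OA)^{i}=A^iB-BA^i$.

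The delicate point is checking that the two one-sided pieces, "$O$ left of the hole" and "$O$ right of the hole", combine into a commutator with the same $B$. This should follow from Step 2 together with $c=0$: these make the left-defined and right-defined $B$ equal up to sign, paralleling the $B=-\tilde B$ step \eqref{eq:B-Btilde} in the proof of Theorem~\ref{thm:main}. Failing that, I would instead specialize that proof directly to $k=1$. There $B$ becomes a tensor with no physical legs, i.e. a matrix, and the four contractions go through verbatim once $N\ge 2L+1$, which is $N\ge 3$ for $L=1$.
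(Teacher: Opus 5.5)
Your main line breaks at Step 2, and the same mistake recurs in Step 3. It is not true that every term with $O$ on a $\tilde A$-contracted site collapses to $c\,A$; that happens only when $O$ is at distance at least $2$ from the uncontracted site.

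Suppose $O$ sits on site $2$. The identity produced by the $\tilde A$ on site $3$ closes the \emph{right} legs of $\sum_i (OA)^i\otimes\tilde A^i$. Its left $A$-leg is still contracted with $A^{i_1}$, and its left $\tilde A$-leg is a free end. So this term equals $A^{i}Q$ with $Q_{\beta\gamma}=\sum_{i,\delta}(OA)^i_{\beta\delta}\tilde A^i_{\gamma\delta}$. Symmetrically, $O$ on site $N$ gives $PA^{i}$ with $P_{\epsilon\alpha}=\sum_{i,\delta}(OA)^i_{\delta\alpha}\tilde A^i_{\delta\epsilon}$.

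The correct Step 2 equation is therefore $D\,(OA)^i+A^iQ+PA^i=0$, after dividing out the common power of $D$ from closed loops and using $c=0$. The object you obtain has two open virtual legs (the free ends of the $\tilde A$ chain). As you state it, it would assert $(OA)^i=0$, which is false whenever $A^iB\neq BA^i$. These two neighbour terms are the entire source of $B$. Step 3 likewise discards the terms from the contracted sites bordering the free pair, and it is unnecessary anyway.

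With the neighbour terms kept, your plan closes and the ``delicate point'' is short. Contract the corrected equation with $\tilde A$ on its physical leg and join row indices, $X\mapsto\sum_{i,\delta}X^i_{\delta\alpha}\tilde A^i_{\delta\epsilon}$. This is equivalent to the paper's intermediate contraction: $\tilde A$ on \emph{every} site with one $\tilde A$-bond left open, so that only the two terms with $O$ next to that bond survive. Using $\sum_i\tilde A^i_{\gamma\epsilon}A^i_{\alpha\beta}=\delta_{\alpha\gamma}\delta_{\beta\epsilon}$ you get $D(P+Q)+(\tr P)\,\id=0$. Since $\tr P=c=0$, this gives $Q=-P$, and hence $(OA)^i=A^iB-BA^i$ with $B=P/D$. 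That is your guessed formula up to the factor $1/D$, and it is automatically traceless. This is the paper's proof: full contraction, then the open-edge contraction defining $B$, then one free site.

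Your fallback of running the proof of Theorem~\ref{thm:main} at $k=1$ is also sound. The four contractions then need only $N\ge 2L+1$, and $B,\tilde B$ become plain matrices.

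One caution about blocking for $L>1$: it only gives the relation for the blocked tensor with $\sum_m O_m$, not the single-site relation for $A$. So ``nothing is lost'' is not right, and that case genuinely needs Theorem~\ref{thm:main}.
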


\begin{proof}

The eigenvalue equation reads as
\begin{equation}
    \sum_i \ 
    \begin{tikzpicture}
        \draw (-0.5,0) rectangle (4.5,-0.5);
        \draw (1,0.5)--(1,1);
        \draw (2,0.5)--(2,1);
\node[tensor,label=right:$O$]  at (1,0.5) {};
        \node at (1,1.2) {$i$};
        \foreach\x in {0,1,2,4}{
            \node[tensor] (t) at (\x,0) {};
            \draw (t)--++(0,0.5);
        }
        \node[fill=white] at (3,0) {$\dots$};
    \end{tikzpicture} = 0.
\end{equation}
Let us apply the inverse of $A$ in the whole region. We obtain that
$
    \begin{tikzpicture}[baseline=4mm]
        \draw (0,0) -- (0,1);
\draw (-0.1,0) -- (0.1,0) arc (-90:90:0.5) -- (-0.1,1) arc (90:270:0.5);
        \node[tensor] at (0,0) {};
        \node[tensor] at (0,1) {};
    \node[tensor,label=right:$O$]  at (0,0.5) {};       
    \end{tikzpicture} = 0.
$
Let us apply the inverse tensor on every site but leave on edge open. We obtain
\begin{equation*}
    \begin{tikzpicture}[baseline=4mm]
        \draw (0,0) -- (0,1);
            \draw[rounded corners =3mm] (0.5,1.5) -- (0.5,1) -- (-0.5,1)--(-0.5,0)--(1,0)--(1,1.5);
        \node[tensor] at (0,0) {};
        \node[tensor] at (0,1) {};
           \node[tensor,label=right:$O$]  at (0,0.5) {};         
    \end{tikzpicture} +
   \begin{tikzpicture}[baseline=4mm]
        \draw (1,0) -- (1,1);
        \draw[rounded corners =3mm] (0.5,1.5) -- (0.5,1) --(1.5,1)--(1.5,0)--(0,0)--(0,1.5);
        \node[tensor] at (1,0) {};
        \node[tensor] at (1,1) {};
        \node[tensor,label=right:$O$]  at (1,0.5) {};         
    \end{tikzpicture}  = 0.
\end{equation*}
Let us define now 
\begin{equation*}
    \begin{tikzpicture}[baseline=4mm]
        \draw[rounded corners =3mm] (-0.5,0.5)--(-0.5,0)--(0.5,0)--(0.5,0.5);
        \node[tensor,red,label=below:$B$] at (0,0) {};
    \end{tikzpicture} =
    \begin{tikzpicture}[baseline=4mm]
        \draw (0,0) -- (0,1);
            \draw[rounded corners =3mm] (0.5,1.5) -- (0.5,1) -- (-0.5,1)--(-0.5,0)--(1,0)--(1,1.5);
        \node[tensor] at (0,0) {};
        \node[tensor] at (0,1) {};
           \node[tensor,label=right:$O$]  at (0,0.5) {};         
    \end{tikzpicture} = \ - \ 
   \begin{tikzpicture}[baseline=4mm]
        \draw (1,0) -- (1,1);
        \draw[rounded corners =3mm] (0.5,1.5) -- (0.5,1) --(1.5,1)--(1.5,0)--(0,0)--(0,1.5);
        \node[tensor] at (1,0) {};
        \node[tensor] at (1,1) {};
        \node[tensor,label=right:$O$]  at (1,0.5) {};         
    \end{tikzpicture}.
\end{equation*}
Finally we can apply the inverse on every site except for one site. We obtain 

  \begin{equation*}
    \begin{tikzpicture}[baseline=4mm]
        \draw[rounded corners =3mm] (-0.5,1)--(-0.5,0)--(0.5,0)--(0.5,1);
        \node[tensor] at (0,0) {};
         \draw (0,0) -- (0,1);
         \node[tensor,label=right:$O$]  at (0,0.5) {};
    \end{tikzpicture} +
    \begin{tikzpicture}[baseline=4mm]
    \node[tensor] at (1,0) {};
         \draw (1,0) -- (1,1.5);
        \draw (0,0) -- (0,1);
            \draw[rounded corners =3mm] (0.5,1.5) -- (0.5,1) -- (-0.5,1)--(-0.5,0)--(1.5,0)--(1.5,1.5);
        \node[tensor] at (0,0) {};
        \node[tensor] at (0,1) {};
           \node[tensor,label=right:$O$]  at (0,0.5) {};         
    \end{tikzpicture} +
   \begin{tikzpicture}[baseline=4mm]
        \draw (1,0) -- (1,1);
        \draw[rounded corners =3mm] (0.5,1.5) -- (0.5,1) --(1.5,1)--(1.5,0)--(-0.5,0)--(-0.5,1.5);
        \node[tensor] at (1,0) {};
        \node[tensor] at (1,1) {};
        \node[tensor,label=right:$O$]  at (1,0.5) {};       
        \node[tensor] at (0,0) {};
         \draw (0,0) -- (0,1.5);
    \end{tikzpicture}
     = 0
\end{equation*}

Using the definition of the virtual matrix $B$, we obtain the desired equation.
\end{proof}

The previous result can be generalized, and proved similarly, for one-body operators acting on any n-dimensional square lattice for injective tensor network states as follows:

\begin{theorem} Let $|\psi_A\rangle$ be an injective tensor network state on the $n$-dimensional square lattice given by the tensor $A^i_{\alpha_1\beta_1\cdots \alpha_k\beta_k \cdots \alpha_n\beta_n}$, where $(\alpha_k,\beta_k)$ are the virtual indices corresponding to the $k$-th dimension, one per each direction. Then, if a one-body operator annihilate the state, i.e. $(\sum_i O_{[i]})|\psi_A\rangle = 0$ then there exist $n$ virtual matrices, $B^{[k]}\in \mathbb{C}^{D^2_k}$ for $k=1,\cdots, n$, such that: 
$$ 
\sum_j O_{i,j} A^{j}_{\alpha_1\beta_1\cdots \alpha_n \beta_n} = \sum_{k\in[1,n],\gamma_k\in [1,D_k]} A^i_{\alpha_1\beta_1\cdots \alpha_k\gamma_k \cdots \alpha_n\beta_n}B^{[k]}_{\gamma_k\beta_k}-B^{[k]}_{\alpha_k\gamma_k}A^i_{\alpha_1\beta_1\cdots \gamma_k \beta_k \cdots \alpha_n\beta_n}
$$
\end{theorem}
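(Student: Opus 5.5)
We mimic the proof of the one-body theorem in 1D, now with one open virtual bond pair per lattice direction. Take a periodic $n$-dimensional torus large enough, and assume, as in the 1D one-body case, that $A$ is injective on a single site. That is, there is an inverse tensor $A^{-1}$ with $\sum_i (A^{-1})^i_{\alpha'\beta'\cdots} A^i_{\alpha\beta\cdots} = \prod_k \delta_{\alpha_k\alpha'_k}\delta_{\beta_k\beta'_k}$. If the injectivity length is larger, we first block sites so that this holds, and then unblock at the end. The blocking is legitimate because a one-body sum stays a one-body sum on blocks. The annihilation equation is $\sum_x O_{[x]}|\psi_A\rangle=0$.

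\textbf{Step 1.} Apply $A^{-1}$ on every site. Each summand becomes the closed network in which all bonds are identities except at one site, where $\tau:=\sum_{ij}(A^{-1})^i O_{ij} A^j$ appears with all $2n$ legs contracted pairwise into loops. This yields $\tr\tau=0$, i.e.\ the zeroth-order consistency condition, exactly as in the 1D case.

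\textbf{Step 2.} Apply $A^{-1}$ on every site, but leave open a single bond $e$ in direction $k$ between sites $x$ and $x+\hat e_k$. Only the two terms $O_{[x]}$ and $O_{[x+\hat e_k]}$ leave a nontrivial open matrix on that bond. Every other term contributes a scalar multiple of $\tr\tau$ times the identity on the bond, which vanishes by Step 1. We then define $B^{[k]}$ as the partial contraction of $\tau$ over all legs except the $\beta_k$ leg at $x$, with the $\alpha_k$ leg of the neighbouring site. The resulting equation says this equals minus the analogous contraction from the other side, just as in the 1D one-body proof. By translation invariance $B^{[k]}$ does not depend on $x$, so we obtain one matrix per direction.

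\textbf{Step 3.} Apply $A^{-1}$ on every site except one site $y$. This leaves $A$, with physical index $i$, as an open tensor with all $2n$ virtual legs open. The $O_{[y]}$ term gives $\sum_j O_{ij}A^j$ directly. Each term $O_{[y\pm\hat e_k]}$ produces $A^i$ with a matrix inserted on the corresponding leg. By the definition of $B^{[k]}$ and its equality with the opposite-side expression from Step 2, these terms assemble into $-(A^i B^{[k]} - B^{[k]}A^i)$ on the $k$-th index pair. Terms farther away factor as $\tr\tau$ times $A^i$, which vanishes. Rearranging gives the stated equation. The converse direction is immediate: substituting the local equation into $\sum_x O_{[x]}|\psi_A\rangle$, each bond receives $+B^{[k]}$ from one endpoint and $-B^{[k]}$ from the other, so everything cancels telescopically along every bond.

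\textbf{Main obstacle.} The hard step is Step 2 together with the sign bookkeeping in Step 3. On an open bond, a term sitting at a site not adjacent to the bond is only a scalar multiple of the identity on that bond after using $\tr\tau=0$. One must check this carefully on a torus in $n$ dimensions. In particular, the contraction of all other sites through $A^{-1}A$ must indeed produce identity loops, and not something entangling the open bond with the distant $\tau$. The torus side lengths must be at least $3$ in each direction so that the two neighbours $y\pm\hat e_k$ are distinct sites. If the injectivity length exceeds $1$, the unblocking step must also be handled, since after unblocking the matrices $B^{[k]}$ should still sit on single bonds. I would address this by choosing the open bond at a block boundary.
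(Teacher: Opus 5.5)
Your Steps 1--3 follow the route the paper intends. The paper says the $n$-dimensional case is proved ``similarly'' to its 1D one-body theorem: invert on every site to get $\tr\tau=0$; invert on every site with one bond open to define $B^{[k]}$ from the two adjacent terms; invert on all sites but one to get the telescopic equation. Your argument is correct for single-site injectivity, which is what the paper means by an injective PEPS tensor, and the closed loops only contribute powers of $D_k$ that can be absorbed into each $B^{[k]}$.

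Drop the blocking/unblocking aside rather than rely on it. After blocking, Step 3 yields only an equation for the block tensor, with matrices on composite bonds of $L^{n-1}$ original bonds. Passing back to single-bond $B^{[k]}$ and a single-site equation is a nontrivial step your sketch does not supply; even in 1D the paper handles $L>1$ by a different argument, using $\rho_L$ and $\rho_R$.
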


\section{Application (3). Example: The XXZ model}

In this section we provide more details on the calculation of the symmetries and the (higher) transfer matrices of the XXZ Hamiltonian. Both are MPOs that commute with the Hamiltonian; That is, we are looking for solutions (that is, the black and red MPO tensors) of the equation 
\begin{equation}\label{MPOsym2}
    \begin{tikzpicture}[scale=0.7]
        \draw (-0.5,0)--(1.5,0);
        \node[tensor] (l) at (1,0) {};
        \draw (l)--++(0,1);
        \draw (l)--++(0,-0.5);
        \node[tensor] (r) at (0,0) {};
        \draw (r)--++(0,1);
        \draw (r)--++(0,-0.5);
        \draw[line width=2mm,line cap=round] (0,0.5) -- (1,0.5);
        \node[anchor=south,inner sep=5pt] at (0.5,0.5) {$h$};
    \end{tikzpicture} 
    \ - \ 
    \begin{tikzpicture}[scale=0.7]
        \draw (-0.5,0)--(1.5,0);
        \node[tensor] (l) at (1,0) {};
        \draw (l)--++(0,-1);
        \draw (l)--++(0,0.5);
        \node[tensor] (r) at (0,0) {};
        \draw (r)--++(0,-1);
        \draw (r)--++(0,0.5);
        \draw[line width=2mm,line cap=round] (0,-0.5) -- (1,-0.5);
        \node[anchor=north,inner sep=5pt] at (0.5,-0.5) {$h$};
    \end{tikzpicture}
    =
    \begin{tikzpicture}[scale=0.7]
        \draw (-0.5,0)--(1.5,0);
        \node[tensor,color=red] (l) at (1,0) {};
        \draw (l)--++(0,0.5);
        \draw (l)--++(0,-0.5);
        \node[tensor] (r) at (0,0) {};
        \draw (r)--++(0,0.5);
        \draw (r)--++(0,-0.5);
    \end{tikzpicture} - 
    \begin{tikzpicture}[scale=0.7]
        \draw (-0.5,0)--(1.5,0);
        \node[tensor] (l) at (1,0) {};
        \draw (l)--++(0,0.5);
        \draw (l)--++(0,-0.5);
        \node[tensor,color=red] (r) at (0,0) {};
        \draw (r)--++(0,0.5);
        \draw (r)--++(0,-0.5);
    \end{tikzpicture} \ .
\end{equation} 
We write the black and the red tensors as 
\begin{align*}
    \begin{tikzpicture}[baseline=-1mm, scale=0.3]
        \draw (-1,0) -- (1,0);
        \draw (0,-1) -- (0,1);
        \node[tensor] at (0,0) {};
    \end{tikzpicture} &= 
    \ket{0}\bra{0} \otimes a + \ket{0}\bra{1}\otimes b + \ket{1}\bra{0}\otimes c + \ket{1}\bra{1}\otimes d, \\
    \begin{tikzpicture}[baseline=-1mm, scale=0.3]
        \draw (-1,0) -- (1,0);
        \draw (0,-1) -- (0,1);
        \node[tensor,red] at (0,0) {};
    \end{tikzpicture} &= 
    \ket{0}\bra{0} \otimes \tilde a + \ket{0}\bra{1}\otimes \tilde b + \ket{1}\bra{0}\otimes \tilde c + \ket{1}\bra{1}\otimes \tilde d,
\end{align*}
where the matrices $a,b,c,d$ and $\tilde a, \tilde b, \tilde c, \tilde d$ act on the virtual d.o.f. We are looking for matrices $a,b,c,d$ and $\tilde a, \tilde b, \tilde c, \tilde d$ such that the equation Eq.~\eqref{MPOsym2} holds. Below we expand the equation component-wise, that is, in the computational basis of the physical indices. For that, notice first that the Hamiltonian reads 
\begin{equation*}
    h = \Delta \ket{00}\bra{00} - \Delta \ket{01}\bra{01} - \Delta \ket{10}\bra{10} + 2 \ket{10}\bra{01} + 2 \ket{01}\bra{10} + \Delta \ket{11}\bra{11}. 
\end{equation*}
That is, 
\begin{align*}
    &\bra{00}h = \Delta \bra{00}, \quad \bra{01}h = 2\bra{10} - \Delta \bra{01}, \quad \bra{10}h = 2\bra{01} - \Delta \bra{10}, \quad \bra{11} h = \Delta \bra{11},\\
    &h\ket{00} = \Delta \ket{00}, \quad h\ket{01} = 2\ket{10} - \Delta \ket{01}, \quad h\ket{10} = 2\ket{01} - \Delta \ket{10}, \quad h\ket{11} = \Delta \ket{11}.
\end{align*}
Using these formulas, we can read off the 16 components of Eq.~\eqref{MPOsym2}, in alphabetical ordering:
\begin{equation}\label{eq:XXZ_main}
\begin{aligned}
 0 &= a\tilde{a} - \tilde{a}a 
&\quad 2\Delta ab - 2ba &= a \tilde{b} - \tilde{a} b 
&\quad 2\Delta ba - 2ab &= b \tilde{a} - \tilde{b} a 
&\quad 0 &= b \tilde b - \tilde b b \\[4pt]
2 ca - 2\Delta ac &= a \tilde c - \tilde a c 
&\quad 2 cb- 2bc &= a \tilde d - \tilde a d
&\quad 2 da - 2ad &= b \tilde c - \tilde b c
&\quad 2 d b - 2\Delta bd &= b \tilde d - \tilde b d \\[4pt]
2 ac - 2\Delta ca & = c \tilde a - \tilde c a
&\quad 2 ad - 2da &= c \tilde b - \tilde c b
&\quad 2 bc - 2 cb &= d \tilde a - \tilde d a
&\quad 2bd - 2\Delta db &= d \tilde b - \tilde d b \\[4pt]
0 &= c \tilde c - \tilde c c
&\quad 2 \Delta cd - 2 dc &= c \tilde d - \tilde c d
&\quad 2 \Delta dc - 2 cd &= d \tilde c - \tilde d c
&\quad 0 &= d \tilde d - \tilde d d.
\end{aligned}
\end{equation}

Both the $U_q(sl_2)$ symmetries and the higher transfer matrices arise as special solutions to these equations. In the following we describe these solutions.

\subsection{Symmetries of the XXZ model}

We will obtain a special solution to \cref{eq:XXZ_main} by setting $\tilde a =0$,  $\tilde d = 0$, $\tilde b = \lambda b$ and $\tilde c = - \lambda c$. Substituting this ansatz in the previous equations, we obtain
\begin{align*}
\begin{aligned}
 0 &=0 
&\quad 2\Delta ab - 2ba &= \lambda a b  
&\quad 2\Delta ba - 2ab &=  - \lambda b a 
&\quad 0 &= 0 \\[4pt]
2 ca - 2\Delta ac &=  - \lambda a  c 
&\quad 2 cb- 2bc &= 0
&\quad 2 da - 2ad &= -2 \lambda b  c 
&\quad 2 d b - 2\Delta bd &=  - \lambda b d \\[4pt]
2 ac - 2\Delta ca & = \lambda c a
&\quad 2 ad - 2da &= 2 \lambda c  b 
&\quad 2 bc - 2 cb &= 0
&\quad 2bd - 2\Delta db &= \lambda d  b  \\[4pt]
0 &= 0
&\quad 2 \Delta cd - 2 dc &=  \lambda c d
&\quad 2 \Delta dc - 2 cd &=  - \lambda d  c 
&\quad 0 &= 0.
\end{aligned}
\end{align*}
That is, $bc = cb$, $ad - da = \lambda bc$, and
\begin{equation*}
\begin{aligned}
    ab &= \left(\frac{2\Delta - \lambda }{2}\right)^{-1}ba = \frac{2\Delta + \lambda}{2} ba,\\
    ac &= \left(\frac{2\Delta -\lambda}{2}\right)^{-1} ca = \frac{2\Delta + \lambda}{2} ca, \\
    bd &= \left(\frac{2\Delta -\lambda}{2}\right)^{-1} db = \frac{2\Delta + \lambda}{2} db,\\
    cd &= \left(\frac{2\Delta -\lambda}{2}\right)^{-1} dc = \frac{2\Delta + \lambda}{2} dc.
\end{aligned}
\end{equation*}
That is, we have found solutions as long as 
\begin{equation*}
    q:=\frac{2\Delta - \lambda }{2} = \left(\frac{2\Delta + \lambda}{2}\right)^{-1},
\end{equation*}
or equivalently, if we can write $\Delta = \tfrac{q+q^{-1}}{2}$ and $\lambda = -(q-q^{-1})$. In this case $a,b,c,d$ satisfy
\begin{align*}
    ab = & q^{-1} ba \ , \quad ac = q^{-1} ca \ , \quad bd = q^{-1} db \ , \\
    cd =& q^{-1}dc \ ,\quad bc = cb \ , \quad ad-da = -(q-q^{-1})bc \ ,
\end{align*} 
i.e., they are the generators of the algebra $\mathrm{SL}_q(2)$.

So far we have seen that the XXZ Hamiltonian with periodic boundary conditions, $H=\sum_{i=0}^{n-1} h_{i,i+1}$, where the indexing is taken $\mod n$, and the local term $h$ is defined as 
\begin{equation*}
    h = X \otimes X + Y\otimes Y + \frac{q+q^{-1}}{2} Z\otimes Z,
\end{equation*}
has MPO-like symmetries: the MPO is taken with periodic boundary conditions, and it is defined by the tensor
\begin{align}
    \begin{tikzpicture}[baseline=-1mm, scale=0.3]
        \draw (-1,0) -- (1,0);
        \draw (0,-1) -- (0,1);
        \node[tensor] at (0,0) {};
    \end{tikzpicture} &= 
    \ket{0}\bra{0} \otimes a + \ket{0}\bra{1}\otimes b + \ket{1}\bra{0}\otimes c + \ket{1}\bra{1}\otimes d, \label{eq:mpo_tensor_ansatz}
\end{align}
where $a,b,c,d$ are the generators of $SL_q(2)$. A few remarks are in place. First of all, to obtain an actual MPO with finite bond dimension, we need to take a finite dimensional representation of $SL_q(2)$, which is only possible (non-tivially) when $q$ is a root of unity. Second, the $XXZ$ model is known to have $U_q(sl_2)$ symmetries -- we still have to show that the MPOs obtained are representations of this algebra. (Note that $a,b,c,d \in SL_q(2)$ are on the virtual level of the MPO tensor, whereas here we ask whether the MPO as an operator on $(\mathbb{C}^2)^{\otimes n}$ is a representation of $U_q(sl_2)$). Third, with periodic boundary conditions we do not recover the whole $U_q(sl_2)$ symmetry. In the following we will show how to overcome these three problems and recover the full $U_q(sl_2)$ symmetry of the XXZ model.

As a first step, notice that we have obtained that the red tensor in \cref{MPOsym2} is defined as  
\begin{align*}
    \begin{tikzpicture}[baseline=-1mm, scale=0.3]
        \draw (-1,0) -- (1,0);
        \draw (0,-1) -- (0,1);
        \node[tensor,red] at (0,0) {};
    \end{tikzpicture} &= 
    \ket{0}\bra{1}\otimes \lambda b - \ket{1}\bra{0}\otimes \lambda c =  [k, \ket{0}\bra{0}] \otimes a + [k,\ket{0}\bra{1}]\otimes b + [k,\ket{1}\bra{0}]\otimes c + [k,\ket{1}\bra{1}]\otimes d, 
\end{align*}
where $\lambda = (q-q^{-1})$ and 
\begin{equation*}
    k = q^Z = q \ket{0}\bra{0} + q^{-1} \ket{1}\bra{1} = \begin{pmatrix} q & 0 \\ 0 & q^{-1}\end{pmatrix}.
\end{equation*}
We can use this decomposition and add $q^z$ to the Hamiltonian as a telescopic sum, i.e., we define 
\begin{equation*}
    \tilde{h} =  X \otimes X + Y\otimes Y + \frac{q+q^{-1}}{2} Z\otimes Z + q^Z \otimes 1 - 1\otimes q^Z,
\end{equation*}
which is the Pasquier-Saleur Hamiltonian \cite{PASQUIER1990523} with the well known quantum group symmetries $U_q(sl_2)$. Note that this ``Hamiltonian'' term is only Hermitian for $\Delta\geq 1$, where $q$ is real. Because of the telescopic nature of the local terms, the Pasquier-Saleur Hamiltonian on $N$ particles is the same as the XXZ Hamiltonian if we take periodic boundaries,
\begin{equation*}
    \sum_{i=1}^N h_i = \sum_{i=1}^N \tilde{h}_i,
\end{equation*}
where $h_N$ and $\tilde{h}_N$ act on the particles $N$ and $1$. The advantage of the Pasquier-Saleur Hamiltonian is that on OBC it has the full $U_q(sl_2)$ symmetries. For that, notice first that using $\tilde{h}$ we can rewrite \cref{MPOsym2} as
\begin{equation}\label{MPOsym3}
    \begin{tikzpicture}[scale=0.7]
        \draw (-0.5,0)--(1.5,0);
        \node[tensor] (l) at (1,0) {};
        \draw (l)--++(0,1);
        \draw (l)--++(0,-0.5);
        \node[tensor] (r) at (0,0) {};
        \draw (r)--++(0,1);
        \draw (r)--++(0,-0.5);
        \draw[line width=2mm,line cap=round] (0,0.5) -- (1,0.5);
        \node[anchor=south,inner sep=5pt] at (0.5,0.5) {$\tilde{h}$};
    \end{tikzpicture} 
    \ - \ 
    \begin{tikzpicture}[scale=0.7]
        \draw (-0.5,0)--(1.5,0);
        \node[tensor] (l) at (1,0) {};
        \draw (l)--++(0,-1);
        \draw (l)--++(0,0.5);
        \node[tensor] (r) at (0,0) {};
        \draw (r)--++(0,-1);
        \draw (r)--++(0,0.5);
        \draw[line width=2mm,line cap=round] (0,-0.5) -- (1,-0.5);
        \node[anchor=north,inner sep=5pt] at (0.5,-0.5) {$\tilde{h}$};
    \end{tikzpicture}
    = 0 .
\end{equation} 
Due to this equation, the OBC Hamiltonian $H_{OBC}^n = \sum_{i=0}^{n-1} h_{i,i+1}$ commutes with any MPO given above, with aribtrary boundary condition: it has the symmetry $[H_{OBC}^n,O_n(x)]$ with
\begin{equation*}
    O_n(x)  =
    \begin{tikzpicture}[xscale=0.7, yscale=0.4, font=\scriptsize]
        \draw[rounded corners] (-1.5,0) rectangle (3,-1.2);
        \node[tensor,label=below:$x$] (l) at (-1,0) {};
        \draw (0,0)--(2.5,0);
        \node[tensor] (l) at (0,0) {};
        \draw (0,-1)--++(0,2);
        \node[tensor] (m) at (1,0) {};
        \draw (1,-1)--++(0,2);
        \node[tensor] (r) at (2.5,0) {};
        \draw (2.5,-1)--++(0,2);
        \node[fill=white] at (1.75,0) {$\dots$};
    \end{tikzpicture} \ ,
\end{equation*}
whenever $x$ is a (nice) linear functional on $SL_q(2)$. As a second step we show how elements of $U_q(sl_2)$ are linear functionals over $SL_q(2)$ and how the resulting MPO-like object is a representation of $U_q(sl_2)$.

The algebra $SL_q(2)$ is actually a Hopf algebra. The coproduct is given by
\begin{align*}
    \Delta(a) = a\otimes a + b\otimes c, \quad \Delta(b) = a \otimes b + b\otimes d, \\
    \Delta(c) = c \otimes a + d \otimes c, \quad \Delta(d) = c\otimes b + d \otimes d. 
\end{align*}
The algebra $U_q(sl_2)$ is generated by four elements $\{K,K^{-1}, E,F\}$ (plus the identity) subject to the relations
\begin{align*}
    KE = q^2 EK, \quad KF = q^{-2} FK, \quad EF - FE = \frac{K-K^{-1}}{q-q^{-1}}, \quad KK^{-1} = K^{-1}K=1.
\end{align*}
Just as $SL_q(2)$, this algebra is also a Hopf algebra, with coproduct
\begin{equation*}
    \begin{gathered}
        \Delta(K) = K\otimes K, \quad \Delta(K^{-1}) = K^{-1} \otimes K^{-1} \\
        \Delta(E) = E \otimes 1 + K \otimes E, \quad \Delta(F) = F \otimes K^{-1} + 1 \otimes F.
    \end{gathered}
\end{equation*}
The two Hopf algebras $U_q(sl_2)$ and $SL_q(2)$ are in fact related to each other: there is a pairing between them that respects the Hopf structure, that is, a bilinear functional $\langle .,.\rangle: U_q(sl_2) \times SL_q(2) \to \mathbb{C}$ such that 
\begin{equation*}
    \langle x, uv \rangle = \sum_{(x)} \langle x_{(1)}, u\rangle \langle x_{(2)} , v\rangle , \quad \text{and} \quad \langle xy, u \rangle = \sum_{(u)}\langle x, u_{(1)} \rangle \langle y, u_{(2)}\rangle ,
\end{equation*}
for all $x,y\in U_q(sl_2)$ and $u,v\in SL_q(2)$, where we used Sweedler's notation for the coproduct writing $\Delta(w) = \sum_{(w)} w_{(1)} \otimes w_{(2)}$. This pairing is given by 
\[
\begin{aligned}
&\langle K,a\rangle = q^{-1}, \quad && \langle K,b\rangle = 0, \quad && \langle K,c\rangle = 0, \quad && \langle K,d\rangle = q, \\
&\langle K^{-1},a\rangle = q, \quad && \langle K^{-1},b\rangle = 0, \quad && \langle K^{-1},c\rangle = 0, \quad && \langle K^{-1},d\rangle = q^{-1}, \\
&\langle E,a\rangle = 0, && \langle E,b\rangle = 0, && \langle E,c\rangle = 1, && \langle E,d\rangle = 0, \\
&\langle F,a\rangle = 0, && \langle F,b\rangle = 1, && \langle F,c\rangle = 0, && \langle F,d\rangle = 0.
\end{aligned}
\]
From these relations the value of  $\langle x, y\rangle $ for general $x\in U_q(sl_2)$ and $y \in SL_q(2)$ can be found by repeatedly using linearity and the compatibility of the pairing with the Hopf structures. 

Using this pairing we can now make sense of the MPO-like object we obtained, even without taking an explicit finite dimensional representation of $SL_q(2)$: writing the MPO tensor as $ \sum_{ij}t_{ij} \otimes \ket{i}\bra{j}$, 
\begin{equation*}
    O_n(x) = \sum_{ij} \langle x, t_{i_1j_1}t_{i_2j_2} \dots t_{i_nj_n} \rangle \cdot \ket{i_1 i_2 \dots i_n} \bra{j_1 j_2 \dots j_n}. 
\end{equation*}
For example, $O_1(x)$ reads as 
\begin{equation*}
    O_1(x) = \langle x, a\rangle \ket{0}\bra{0} +  \langle x, b\rangle \ket{0}\bra{1} + \langle x, c\rangle \ket{1}\bra{0} + \langle x, d\rangle \ket{1}\bra{1}, 
\end{equation*}
or explicitly, $O_1(1) = \id$ and 
\[
O_1(K) =
\begin{pmatrix}
q^{-1} & 0 \\
0 & q
\end{pmatrix}=q^{-Z},
\quad
O_1(K^{-1}) =
\begin{pmatrix}
q & 0 \\
0 & q^{-1}
\end{pmatrix}=q^{Z},
\quad
O_1(E) =
\begin{pmatrix}
0 & 0 \\
1 & 0
\end{pmatrix}=\sigma^+ ,
\quad
O_1(F) =
\begin{pmatrix}
0 & 1 \\
0 & 0
\end{pmatrix}=\sigma^- .
\]
That is, $O_1$ is the defining representation of $SL_q(2)$. It is easy now to see that $O_n(x) = O_1^{\otimes n} \circ \Delta^{n-1}(x)$:
\begin{align*}
    O_n(x) = &\sum_{ij} \langle x, t_{i_1j_1}t_{i_2j_2} \dots t_{i_nj_n} \rangle \cdot \ket{i_1 i_2 \dots i_n} \bra{j_1 j_2 \dots j_n} = \\
     = & \sum_{ij} \sum_{(x)} \langle x_{(1)}, t_{i_1j_1}\rangle \langle x_{(2)}, t_{i_2j_2}\rangle\cdots \langle  x_{(n)},t_{i_nj_n} \rangle \cdot \ket{i_1 i_2 \dots i_n} \bra{j_1 j_2 \dots j_n} = \\ 
     =& O_1(x_{(1)}) \otimes O_1(x_{(2)}) \otimes \dots \otimes O_1(x_{(n)}).
\end{align*}
As $O_1$ is a representation and $\Delta$ is  a homomorphism, $O_n$ is a representation of $U_q(sl_2)$. In particular:

\begin{gather*}
O_n(1) = \id \otimes \id \otimes \dots \otimes \id,\\
O_n(K)= q^{-Z} \otimes q^{-Z} \otimes \dots \otimes q^{-Z},  \\      
O_n(K^{-1})= q^Z \otimes q^Z \otimes \dots \otimes q^Z,  \\     
O_n(E)= \sum_i \big(q^{-Z} \big)^{\otimes i}\otimes \sigma^+\otimes \id^{\otimes n-i-1} , \\     
O_n(F)= \sum_i \id^{\otimes n-i-1} \otimes \sigma^-\otimes \big(q^{Z} \big)^{\otimes i} .     
\end{gather*}

We have thus recovered the well-known result that the XXZ Hamiltonian has $U_q(sl_2)$ symmetries, and these symmetries are given by the (infinite bond dimensional) MPOs
\begin{equation*}
    O_n(x)  =
    \begin{tikzpicture}[xscale=0.7, yscale=0.4, font=\scriptsize]
        \draw[rounded corners] (-1.5,0) rectangle (3,-1.2);
        \node[tensor,label=below:$x$] (l) at (-1,0) {};
        \draw (0,0)--(2.5,0);
        \node[tensor] (l) at (0,0) {};
        \draw (0,-1)--++(0,2);
        \node[tensor] (m) at (1,0) {};
        \draw (1,-1)--++(0,2);
        \node[tensor] (r) at (2.5,0) {};
        \draw (2.5,-1)--++(0,2);
        \node[fill=white] at (1.75,0) {$\dots$};
    \end{tikzpicture} \ ,
\end{equation*}
where $x\in U_q(sl_2)$ and the MPO tensor is given in \cref{eq:mpo_tensor_ansatz}.

\subsection{The transfer matrix of the XXZ model}

We can also obtain the higher transfer matrices of the XXZ model as another set of special solutions of \cref{eq:XXZ_main}. For that, let us first extend \footnote{Let us emphasize that the algebras $U_q(sl_2)$ and $SL_q(2)$ do not have a topology, and thus we are not allowed to take infinite sums in them. This is why adding $K^{1/2}$ is an extension of the algebra.} $U_q(sl_2)$ to include $G = K^{1/2}$. That is, we define the algebra $\overline{U_q(sl_2)}$ as the algebra generated by $E,F,1,K,K^{-1}, G,G^{-1}$, where $E,F,1,K,K^{-1}$ satisfy the usual $U_q(sl_2)$ relations and
$$
  G^2 = K, \quad G E  = q E G, \quad G F = q^{-1} F G,
$$
and $GG^{-1} = G^{-1} G = 1$. 

It is now straightforward to check that the MPO tensors 
\begin{equation*}
    \begin{pmatrix}
      a & b \\ c & d
    \end{pmatrix}
    = 
    \begin{pmatrix}
      \frac{\mu G - \mu^{-1} G^{-1}}{2}  &  \frac{q-q^{-1}}{2}\,E  \\[1ex]
      \frac{q-q^{-1}}{2}\,F &  \frac{\mu G^{-1} - \mu^{-1} G}{2}
    \end{pmatrix}, \qquad
    \begin{pmatrix}
      \tilde{a} & \tilde{b} \\ \tilde{c} & \tilde{d}
    \end{pmatrix}
    = 
    \frac{q-q^{-1}}{2}
    \begin{pmatrix}
    \mu G + \mu^{-1} G^{-1} & 0 \\
    0 & \mu^{-1} G + \mu G^{-1}
    \end{pmatrix},
\end{equation*}
provide a solution to \cref{eq:XXZ_main}. The MPO tensor
\begin{equation*}
    L(\mu) = 
    \begin{pmatrix}
      \frac{\mu \phi(G) - \mu^{-1} \phi(G^{-1})}{2}  &  \frac{q-q^{-1}}{2}\,\phi(E)  \\[1ex]
      \frac{q-q^{-1}}{2}\,\phi(F) &  \frac{\mu \phi(G^{-1}) - \mu^{-1} \phi(G)}{2}
    \end{pmatrix}, 
\end{equation*}
where $\phi$ is a finite dimensional representation of $\overline{U_q(sl_2)}$, defines the higher transfer matrix. For the 2-dimensional representation $\phi$ defined as
\[
\phi(G)=
\begin{pmatrix}
q^{1/2} & 0 \\
0 & q^{-1/2}
\end{pmatrix},
\qquad
\phi(E)=
\begin{pmatrix}
0 & 0 \\
1 & 0
\end{pmatrix},
\qquad
\phi(F)=
\begin{pmatrix}
0 & 1 \\
0 & 0
\end{pmatrix},
\]
then $L(\mu)$ is the Lax operator of the XXZ model,
\[
L(\mu)= \frac{1}{2}
\begin{pmatrix}
\mu q^{1/2}-\mu^{-1}q^{-1/2} & 0 & 0 & 0\\
0 & \mu q^{-1/2}-\mu^{-1}q^{1/2} & q-q^{-1} & 0\\
0 & q-q^{-1} & \mu q^{-1/2}-\mu^{-1}q^{1/2} & 0\\
0 & 0 & 0 & \mu q^{1/2}-\mu^{-1}q^{-1/2}
\end{pmatrix},
\]
and the MPO is the transfer matrix of the XXZ model. 
\section{Application (5). Local operators and telescopic series }

Let us consider a $k$-local operator that adds up to a constant, i.e. $\sum_{i=1}^N O_{[i]}=\lambda \cdot \id$, where $O_{[i]}$ acts on sites $i,\cdots, i+k-1$.

We can use our result by applying the operator to the identity operator. The identity operator is trivially an injective matrix product operator that, when vectorized, it corresponds to a tensor product state. This is an injective MPS with tensor $A^{i,j}=\delta_{i,j}$ i.e., with bond dimension 1.

Then our result yields that there exists a $(k-1)$-body operator $Q$, corresponding to the tensor $B$, which has bond dimension 1, like $A$, such that
$$ O_{[i]} - \lambda/N= \id_{[i]}\otimes Q_{[i+1]} -  Q_{[i]}\otimes \id_{[i+k-1]}  \ , $$
where $Q_{[i]}$ acts on sites $i,\cdots, i+k-2$ and $\id_{[i]}$ is the identity operator acting on site $i$.

\section{Application (6). VUMPS}

Non fine-tuned Hamiltonians do not have exact MPS eigenstates, but their ground states can be faithfully approximated by MPS \cite{PhysRevB.73.094423}. This is the reason behind the success of tensor networks based algorithms. One of the standard algorithms to find those MPS approximations for infinite translational invariant spin chains is VUMPS~\cite{Zauner18}. The goal is to minimize the energy with states belonging to the MPS manifold: $\frac{\langle \psi_A|H|\psi_A\rangle}{\langle \psi_A|\psi_A\rangle}$. The algorithm is formulated in the mixed canonical form of an MPS:
$$|\psi_A\rangle = \sum_{\{\pmb s\}} (\cdots A_L^{s_{n-2}} A_L^{s_{n-1}} A_C^{s_{n}} A_R^{s_{n+1}}A_R^{s_{n+2}}\cdots) | {\pmb s} \rangle,$$
where $A_L$ is the left canonical form of $A$, $A_R$ the right canonical form, and $A_C$, the central canonical form, is defined through
$$ A_L^i \cdot C = C\cdot A_R^i = A_C^i \ , $$
where $C$ is the matrix with the Schmidt coefficients. That is, the transfer matrices satisfy:
\begin{equation*}
\sum_{i,\alpha}(\bar{A}_L^i)_{\alpha,\beta} (A_L^i)_{\alpha,\gamma}=\delta_{\beta,\gamma}
\longleftrightarrow
\begin{tikzpicture}[xscale=0.6,yscale=0.8,baseline=0.3cm]
        \draw[rounded corners=3mm] (1,0) -- (-1,0) -- (-1,1) -- (1,1);
        \draw (0,0) -- (0,1);
        \node[tensor,label=above:$\bar{A}_L$] at (0,1) {};
        \node[tensor,label=below:$A_L$] at (0,0) {};
    \end{tikzpicture} = 
    \begin{tikzpicture}[xscale=0.6,yscale=0.8,baseline=0.3cm]
        \draw[rounded corners=3mm] (0,0) -- (-1,0) -- (-1,1) -- (0,1);
    \end{tikzpicture} \quad \& \quad
    \begin{tikzpicture}[xscale=-0.6,yscale=0.8,baseline=0.3cm]
        \draw[rounded corners=3mm] (1,0) -- (-1,0) -- (-1,1) -- (1,1);
        \draw (0,0) -- (0,1);
        \node[tensor,label=above:$\bar{A}_L$] at (0,1) {};
        \node[tensor,label=below:$A_L$] at (0,0) {};
        \node[tensor, label=right:$C^2$] at (-1,0.5) {};
    \end{tikzpicture} = 
    \begin{tikzpicture}[xscale=-0.6,yscale=0.8,baseline=0.3cm]
        \draw[rounded corners=3mm] (0,0) -- (-1,0) -- (-1,1) -- (0,1);
        \node[tensor, label=right:$C^2$] at (-1,0.5) {};
    \end{tikzpicture} \ . 
\end{equation*}

%The minimum of the energy is found where the gradient is zero:
%$$(\partial_{\bar{A}_C}\langle  \psi_A| )H-\epsilon(A)|\psi_A\rangle=0$$.

Let us now show how such a stationary solution can be found starting from our fundamental relation. In the case that the injective MPS $|\psi_A\rangle$ is an exact solution, we have

 \begin{equation*}
        \begin{tikzpicture}
            \draw (-0.5,0)--(1.5,0);
            \node[tensor,label=below:$A_L$] (l) at (1,0) {};
            \draw (l)--++(0,1);
            \node[tensor,label=below:$A_L$] (r) at (0,0) {};
            \draw (r)--++(0,1);
            \draw[line width=2mm,line cap=round] (0,0.5) -- (1,0.5);
            \node[anchor=south,inner sep=5pt] at (0.5,0.5) {$\tilde{h}$};
        \end{tikzpicture} =
        \begin{tikzpicture}
            \draw (-0.5,0)--(1.5,0);
        \node[tensor,color=red,label=below:$B_L$] (l) at (1,0) {};
            \draw (l)--++(0,0.5);
            \node[tensor,label=below:$A_L$] (r) at (0,0) {};
            \draw (r)--++(0,0.5);
        \end{tikzpicture} - 
        \begin{tikzpicture}
            \draw (-0.5,0)--(1.5,0);
            \node[tensor,label=below:$A_L$] (l) at (1,0) {};
            \draw (l)--++(0,0.5);
        \node[tensor,color=red,label=below:$B_L$] (r) at (0,0) {};
            \draw (r)--++(0,0.5);
        \end{tikzpicture} 
        \ ,\ \tilde{h}=h- \epsilon \ .
    \end{equation*}    

where $\epsilon$ is the energy density.  Note that $B_L$ is only defined up to a mutiple of $A_L$. The same equation holds for $A_R$ by defining  
$$ B_L^i \cdot C = C\cdot B_R^i \ . $$

In general, this equation cannot be satisfied exactly. The central idea is to relax this equation in that we only impose projected versions of it (and those versions can be satisfied exactly):

\begin{equation}
    \begin{tikzpicture}[baseline=4mm]
        \draw (0,0) -- (0,1);
        \draw (1,0) -- (1,1.5);
        \draw[rounded corners =3mm] (0.5,1.5) -- (0.5,1) -- (-0.5,1)--(-0.5,0)--(1.5,0)--(1.5,1.5);
        \node[tensor,label=below:$A_L$] at (0,0) {};
        \node[tensor,label=below:$A_C$] at (1,0) {};
        \node[tensor,label=above:$\bar{A}_L$] at (0,1) {};
        \draw[line width=2mm, line cap=round] (0,0.5)--(1,0.5);        
        \node[anchor=south,inner sep=5pt] at (0.5,0.5) {$\tilde{h}$};
    \end{tikzpicture}
    =
    \begin{tikzpicture}[baseline=4mm]
    \draw (0,0) -- (0,1);
    \draw[rounded corners=3mm] (-0.5,1) --++ (0,-1) --++ (1.25,0) --++ (0,1);
    \node[tensor,color=red,label=below:$B_L$] at (0,0) {};    
    \node[tensor,label=below:$C$] at (0.5,0) {}; 
    \end{tikzpicture}
    -
       \begin{tikzpicture}[baseline=4mm]
        \draw (0,0) -- (0,1);
        \draw (1,0) -- (1,1.5);
        \draw[rounded corners =3mm] (0.5,1.5) -- (0.5,1) -- (-0.5,1)--(-0.5,0)--(1.5,0)--(1.5,1.5);
        \node[tensor,color=red,label=below:$B_L$] at (0,0) {};
        \node[tensor,label=below:$A_C$] at (1,0) {};
        \node[tensor,label=above:$\bar{A}_L$] at (0,1) {};        
    \end{tikzpicture}
    \label{eq1_vumps}
\end{equation}
and 
\begin{equation}
    \begin{tikzpicture}[baseline=4mm]
        \draw (1,0) -- (1,1);
        \draw (0,0) -- (0,1.5);
        \draw[rounded corners =3mm] (0.5,1.5) -- (0.5,1) --(1.5,1)--(1.5,0)--(-0.5,0)--(-0.5,1.5);
        \node[tensor,label=below:$A_C$] at (0,0) {};
        \node[tensor,label=below:$A_R$] at (1,0) {};
        \node[tensor,label=above:$A_R$] at (1,1) {};
        \draw[line width=2mm, line cap=round] (0,0.5)--(1,0.5);       
        \node[anchor=south,inner sep=5pt] at (0.5,0.5) {$\tilde{h}$};
    \end{tikzpicture} 
    =
        \begin{tikzpicture}[baseline=4mm]
        \draw (1,0) -- (1,1);
        \draw (0,0) -- (0,1.5);
        \draw[rounded corners =3mm] (0.5,1.5) -- (0.5,1) --(1.5,1)--(1.5,0)--(-0.5,0)--(-0.5,1.5);
        \node[tensor,label=below:$A_C$] at (0,0) {};
        \node[tensor,color=red,label=below:$B_R$] at (1,0) {};
        \node[tensor,label=above:$A_R$] at (1,1) {};       
    \end{tikzpicture} 
    -
    \begin{tikzpicture}[baseline=4mm]
    \draw (0,0) -- (0,1);
    \draw[rounded corners=3mm] (-0.5,1) --++ (0,-1) --++ (1.25,0) --++ (0,1);
    \node[tensor,color=red,label=below:$B_L$] at (0,0) {};    
    \node[tensor,label=below:$C$] at (0.5,0) {}; 
    \end{tikzpicture} \ .
\label{eq2_vumps}
\end{equation}
VUMPS can now be understood as an iterative procedure for finding a solution to these two equations (\ref{eq1_vumps},\ref{eq2_vumps}). 

The algorithm  proceeds as follows: given the iterates $A_L,A_C$ and $\epsilon$ at iterate $k$, we solve for $B_L$ at iterate $k$  by solving the linear equation (\ref{eq1_vumps}). Same for $B_R$ and equation (\ref{eq2_vumps}). 

The other equations defining the VUMPS algorithm can be written in terms of  the auxiliary $B$ tensors, up to suitable gauge transformation, as follows: 
%equations involve a geometric sum, their solutions satisfy:

\begin{equation*} 
\begin{tikzpicture}[baseline=4mm]
        \draw[rounded corners=3mm] (0.5,0) -- (-0.5,0) -- (-0.5,1) -- (0.5,1);
        \draw (0,0) -- (0,1);
        \node[tensor,label=above:$\bar{A}_L$] at (0,1) {};
        \node[tensor,color=red,label=below:$B_L$] at (0,0) {};
    \end{tikzpicture}
=
\sum_{i<i_C}
\begin{tikzpicture}[baseline=4mm]
 \foreach \x  in {0,1,2,-1,-2}{
         \draw (\x,0) --++ (0,1);
         \node[tensor] at (\x,0) {};
         \node[tensor] at (\x,1) {};
 }
  \node at (-1.5,0.5) {$\cdots$};
        \draw[rounded corners=3mm] (2.4,1)--(-2.5,1) -- (-2.5,0)--(2.4,0);
        \node[anchor=south,inner sep=5pt] at (0.5,0.45) {$\tilde{h}_{i-1,i}$};
        \draw[line width=2mm, line cap=round] (0,0.45)--(1,0.45);        
\end{tikzpicture}
\end{equation*}

and

\begin{equation*}  
\begin{tikzpicture}[baseline=4mm]
        \draw[rounded corners=3mm] (-0.5,0) -- (0.5,0) -- (0.5,1) -- (-0.5,1);
        \draw (0,0) -- (0,1);
        \node[tensor,label=above:$\bar{A}_R$] at (0,1) {};
        \node[tensor,color=red,label=below:$B_R$] at (0,0) {};
    \end{tikzpicture}
=-
\sum_{i>i_C}
\begin{tikzpicture}[baseline=4mm]
 \foreach \x  in {0,1,2,-1,-2}{
         \draw (\x,0) --++ (0,1);
         \node[tensor] at (\x,0) {};
         \node[tensor] at (\x,1) {};
 }
  \node at (1.5,0.5) {$\cdots$};
        \draw[rounded corners=3mm] (-2.4,1)--(2.5,1) -- (2.5,0)--(-2.4,0);
        \node[anchor=south,inner sep=5pt] at (-0.5,0.45) {$\tilde{h}_{i,i+1}$};
        \draw[line width=2mm, line cap=round] (0,0.45)--(-1,0.45);        
\end{tikzpicture}
\end{equation*}

In other words, the terms $B_L$ and $B_R$ encode all the information of the Hamiltonian terms to the left respectively to the right of the sites of interest. This gives a very interesting interpretation to the $B$ terms.

In the next iteration, we fix $B_L,B_R,A_L,A_R$ and solve the eigenvalue problem with unknown eigenvector $A_C$ and eigenvalue $\epsilon$ obtained by summing equations (\ref{eq1_vumps}) and (\ref{eq2_vumps}):  
\begin{equation}
    \begin{tikzpicture}[baseline=4mm]
        \draw (0,0) -- (0,1);
        \draw (1,0) -- (1,1.5);
        \draw[rounded corners =3mm] (0.5,1.5) -- (0.5,1) -- (-0.5,1)--(-0.5,0)--(1.5,0)--(1.5,1.5);
        \node[tensor,label=below:$A_L$] at (0,0) {};
        \node[tensor,label=below:$A_C$] at (1,0) {};
        \node[tensor,label=above:$\bar{A}_L$] at (0,1) {};
        \draw[line width=2mm, line cap=round] (0,0.5)--(1,0.5);    
        \node[anchor=south,inner sep=5pt] at (0.5,0.5) {${h}$};
    \end{tikzpicture} +
    \begin{tikzpicture}[baseline=4mm]
        \draw (1,0) -- (1,1);
        \draw (0,0) -- (0,1.5);
        \draw[rounded corners =3mm] (0.5,1.5) -- (0.5,1) --(1.5,1)--(1.5,0)--(-0.5,0)--(-0.5,1.5);
        \node[tensor,label=below:$A_C$] at (0,0) {};
        \node[tensor,label=below:$A_R$] at (1,0) {};
        \node[tensor,label=above:$A_R$] at (1,1) {};
        \draw[line width=2mm, line cap=round] (0,0.5)--(1,0.5);       
        \node[anchor=south,inner sep=5pt] at (0.5,0.5) {${h}$};
    \end{tikzpicture} 
    -
    \begin{tikzpicture}[baseline=4mm]
        \draw (1,0) -- (1,1);
        \draw (0,0) -- (0,1.5);
        \draw[rounded corners =3mm] (0.5,1.5) -- (0.5,1) --(1.5,1)--(1.5,0)--(-0.5,0)--(-0.5,1.5);
        \node[tensor,label=below:$A_C$] at (0,0) {};
        \node[tensor,color=red,label=below:$B_R$] at (1,0) {};
        \node[tensor,label=above:$A_R$] at (1,1) {};       
    \end{tikzpicture} 
    +
       \begin{tikzpicture}[baseline=4mm]
        \draw (0,0) -- (0,1);
        \draw (1,0) -- (1,1.5);
        \draw[rounded corners =3mm] (0.5,1.5) -- (0.5,1) -- (-0.5,1)--(-0.5,0)--(1.5,0)--(1.5,1.5);
        \node[tensor,color=red,label=below:$B_L$] at (0,0) {};
        \node[tensor,label=below:$A_C$] at (1,0) {};
        \node[tensor,label=above:$\bar{A}_L$] at (0,1) {};        
    \end{tikzpicture}
    =
        2\epsilon \cdot \begin{tikzpicture}[baseline=0mm]
        \draw (0,0) -- (0,0.5);
        \draw[rounded corners =3mm] (-0.5,0.5)--(-0.5,0)--(0.5,0)--(0.5,0.5);
        \node[tensor,label=below:$A_C$] at (0,0) {};
    \end{tikzpicture}
    \label{eq3_VUMPS}
\end{equation}

This fixes the next iterate $A_C$ and $\epsilon$. To update $A_L$ and $A_R$, we substitute $A_C$ with $A^i_L \cdot C=C \cdot A^i_R$ in equations (\ref{eq1_vumps},\ref{eq2_vumps}), do an extra projection with $A_L$ respectively $A_R$, sum both equations, and we get an eigenvalue problem for $C,\epsilon$. Given $A_C$ and $C$, it is now a standard problem to determine the associated $A_L$ and $A_R$. This is one full iteration step. We can now reiterate by solving anew for $B_L,B_R, A_C, C,\epsilon$ until convergence. The conditions under which this iteration converges is explained in the standard VUMPS reference~\cite{Zauner18}.

\section{Application (8). Beyond 1D.}

\subsection{Plaquette operators at the square lattice on injective PEPS}

In this section we generalize our main theorem to the square lattice. We call a PEPS tensor injective if there is an ``inverse'' tensor such that 
\begin{equation*}
    \begin{tikzpicture}[baseline=4mm]
        \draw[canvas is xz plane at y=0] (-0.5,-0.5) grid (0.5,0.5);
        \draw[canvas is xz plane at y=1] (-0.5,-0.5) grid (0.5,0.5);
        \draw (0,0,0) -- (0,1,0);
        \node[tensor] at (0,0,0) {};
        \node[tensor,fill=white] at (0,1,0) {};
    \end{tikzpicture} = 
    \begin{tikzpicture}[baseline=4mm]
        \draw[canvas is xy plane at z=0] (-0.5,0) -- (-0.1,0) -- (-0.1,1) -- (-0.5,1);
        \draw[canvas is xy plane at z=0] (0.5,0) -- (0.1,0) -- (0.1,1) -- (0.5,1);
        \draw[canvas is zy plane at x=0] (-0.5,0) -- (-0.1,0) -- (-0.1,1) -- (-0.5,1);
        \draw[canvas is zy plane at x=0] (0.5,0) -- (0.1,0) -- (0.1,1) -- (0.5,1);
    \end{tikzpicture}    = \id^{\otimes 4} .
\end{equation*}

\begin{theorem}
An injective PEPS $\ket{\psi}$ on the square lattice is an eigenstate of a translationally invariant operator $\mathcal{O}=\sum_i O_i$, i.e., $\mathcal{O} \ket{\psi} = \epsilon \ket{\psi}$, with $O$ being a local $2\times 2$ plaquette term
if and only if there are two tensors $X$ and $Y$, depicted by red
and blue shapes, respectively, such that: 
\begin{equation}    
    \begin{tikzpicture}[scale=0.7]
        \draw[canvas is xz plane at y=0] (-0.5,-0.5) grid (1.5,1.5);
        \node at (0.5,1.1) {$h$};
        \draw[canvas is xz plane at y=0.75, rounded corners, fill=gray] (-0.25,-0.25) rectangle (1.25,1.25);
        \foreach \x in {(0,0,0),(1,0,0),(0,0,1),(1,0,1)}{
            \node[tensor] at \x {};
            \draw \x --++ (0,1.3);        } 
   \end{tikzpicture} = 
    \begin{tikzpicture}[scale=0.7,z={(-0.6cm,-0.5cm)}]
        \draw[canvas is xz plane at y=0] (-0.5,-0.5) grid (1.5,1.5);
        \foreach \x in {(0,0,0),(1,0,0),(0,0,1),(1,0,1)}{
            \node[tensor] at \x {};
            \draw \x --++ (0,1);
        } 
        \draw[line width=2.1mm,line cap=round] (0,0,0) -- (1,0,0);
        \draw[red,line width=1.7mm,line cap=round] (0,0,0) -- (1,0,0);
        \node at (0.6,0.3) {$X$};
        \draw (1,0,1) --++ (0,1);
    \end{tikzpicture} - 
    \begin{tikzpicture}[scale=0.7,z={(-0.6cm,-0.5cm)}]
        \draw[canvas is xz plane at y=0] (-0.5,-0.5) grid (1.5,1.5);
        \foreach \x in {(0,0,0),(1,0,0),(0,0,1),(1,0,1)}{
            \node[tensor] at \x {};
            \draw \x --++ (0,1);
        } 
        \draw[line width=2.1mm,line cap=round] (0,0,1) -- (1,0,1);
        \draw[red,line width=1.7mm,line cap=round] (0,0,1) -- (1,0,1);
        \node at (-0.2,-0.8) {$X$};
    \end{tikzpicture} +    
    \begin{tikzpicture}[scale=0.7,z={(-0.6cm,-0.5cm)}]
        \draw[canvas is xz plane at y=0] (-0.5,-0.5) grid (1.5,1.5);
        \foreach \x in {(0,0,0),(1,0,0),(0,0,1),(1,0,1)}{
            \node[tensor] at \x {};
            \draw \x --++ (0,1);
        } 
        \draw[line width=2.1mm,line cap=round] (0,0,1) -- (0,0,0);
        \draw[blue,line width=1.7mm,line cap=round] (0,0,1) -- (0,0,0);
        \node at (-0.8,-0.1) {$Y$};
    \end{tikzpicture} -    
    \begin{tikzpicture}[scale=0.7,z={(-0.6cm,-0.5cm)}]
        \draw[canvas is xz plane at y=0] (-0.5,-0.5) grid (1.5,1.5);
        \foreach \x in {(0,0,0),(1,0,0),(0,0,1),(1,0,1)}{
            \node[tensor] at \x {};
            \draw \x --++ (0,1);
        } 
        \draw[line width=2.1mm,line cap=round] (1,0,0) -- (1,0,1);
        \draw[blue,line width=1.7mm,line cap=round] (1,0,0) -- (1,0,1);
        \node at (1.3,-0.3) {$Y$};
    \end{tikzpicture} 
 \quad ,
\end{equation}
where $h=O-\epsilon/N$ and $N$ is the number of sites.
\end{theorem}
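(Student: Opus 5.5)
The ($\Leftarrow$) direction is a telescoping argument. Placing the right-hand side of the plaquette identity on every plaquette of the torus and summing, each horizontal bond carrying an $X$ appears twice. It appears once from the plaquette above it with a $+$ sign, and once from the plaquette below it with a $-$ sign. These two contributions cancel. The same happens for the vertical bonds carrying $Y$: each receives a $+$ from the plaquette on one side and a $-$ from the plaquette on the other. Hence $\sum_i (O_i-\epsilon/N)\ket{\psi}=0$.

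For ($\Rightarrow$) I would follow the strategy of the $L=1$, $k=2$ proof in the Supplemental Material, which uses the single-site inverse of injectivity. I assume the torus is large enough that the regions below are separated.

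The first step is to contract the eigenvalue equation with the inverse tensor on every site. Every plaquette term then becomes the same closed network, so we get $N c=0$. Here $c$ is $h$ sandwiched between four $A$'s and their inverses, with the outer virtual legs traced. Thus $c=0$.

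The second step is to apply the inverse on all sites except one, leaving that site's physical leg and its four virtual legs open. Only the four plaquettes containing that site survive; all others vanish by $c=0$. This gives a linear relation among four "corner" tensors. To obtain edge objects, I would instead leave open a horizontal pair of sites, i.e.\ one bond, with all other sites inverted. The plaquettes touching the bond from above and below, together with the neighbouring plaquettes touching only one of the two sites, give the candidate for $X$: namely the contribution of everything above the bond. The equation then says this equals minus the contribution from below. That is the analogue of "$B$ defined two ways" in the 1D proof, and it defines $X$ unambiguously up to an additive multiple of the two $A$'s, which is irrelevant by the redundancy argument. I would define $Y$ in the same way from a vertical pair.

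The final step is to leave open a full $2\times2$ block of four sites, with the inverse applied elsewhere. The surviving terms are the plaquette term $h$ on the block itself, plus terms from plaquettes overlapping the block partially. Grouping the partially overlapping plaquettes by the edge of the block they touch should reproduce, via the definitions of $X$ and $Y$, exactly the four edge terms with the stated signs.

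The main obstacle is this grouping. A plaquette touching the block only at a corner site contributes to two edges at once, and it must be split consistently between the $X$-edge and the $Y$-edge. I would handle this by first deriving, from the one-site-open equation, a corner relation that lets the four corner contributions be rewritten as differences of edge contributions. I would then check that the corner terms distribute to cancel against the leftovers in the $X$ and $Y$ definitions. If this bookkeeping fails, I would fall back on contractions with larger open regions, as in the rigorous 1D proof, to fix the decomposition.
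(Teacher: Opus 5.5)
Your sequence of contractions is the paper's: invert everything, then leave one site open, then an open horizontal and vertical bond, then an open $2\times 2$ block. However, the bookkeeping that actually carries the proof is mis-stated, and the decisive part is left undone.

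First, it is not true that with one site uninverted ``only the four plaquettes containing that site survive.'' Take a plaquette that reaches the open site only through a virtual bond, for instance the plaquette whose upper-right corner is the site just below the open one. Its outer leg on that bond is attached to the uninverted $A$, and the corresponding inverse-layer leg is left dangling, so it is not closed through an identity. Such a term is a nonzero tensor at the open site, not $c$ times $A$. The $L=1$, $k=2$ proof you want to imitate already shows this: its one-open-site equation contains the terms acting on $(j-2,j-1)$ and $(j+1,j+2)$.

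In 2D the surviving plaquettes number 12 for one open site, 16 for an open bond and 21 for the block. In the block step you also omit the 12 bond-adjacent ones. So the relations you would write down are missing terms. The grouping that works is as follows:
\begin{itemize}
\item Form four L-shaped corner groups $x_{ul},x_{ur},x_{dr},x_{dl}$. Each is the plaquette containing the site in that quadrant plus its two neighbours that reach the site only through a bond. The one-site relation is $x_{ul}+x_{ur}+x_{dr}+x_{dl}=0$.
\item These same groups reappear unchanged in the bond contraction, $x_{ul}\otimes A+x_{dl}\otimes A+A\otimes x_{ur}+A\otimes x_{dr}+x_u+x_d=0$. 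Here the first factor sits on the left site, and $x_u,x_d$ are two-plaquette edge groups.
\item They also reappear in the block contraction, which reads $h$ plus four edge groups plus four corner groups $=0$.
\end{itemize}

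Second, the final step is where the content lies, and your candidate does not close it. The correct $X$ is not ``everything above the bond'' up to a multiple of $A\otimes A$. Take $X=x_{dl}\otimes A+A\otimes x_{dr}+x_d$, which by the bond relation is minus everything above. The block relation then gives $h=X^{\mathrm{top}}-X^{\mathrm{bot}}-x_l^{\mathrm{left}}-x_r^{\mathrm{right}}$. Eliminating $x_l$ with the vertical bond relation and the one-site relation turns this into
$h=X^{\mathrm{top}}-X^{\mathrm{bot}}+x_r^{\mathrm{left}}-x_r^{\mathrm{right}}+w^{\mathrm{bot\,left}}-w^{\mathrm{top\,left}}$, with $w=x_{dl}+x_{dr}$.
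The left corner groups leave this residue on the left column. It is a top-minus-bottom difference, not a left-minus-right one, and it has to be absorbed by shifting $X\to X-w\otimes A$.

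That shift gives the paper's choice $X=x_d-x_{dr}\otimes A+A\otimes x_{dr}$. By the bond and one-site relations this also equals $x_{ur}\otimes A-A\otimes x_{ur}-x_u$, which is what lets one tensor serve on both the top and bottom edges. Together with $Y=x_r$, the block relation becomes exactly $h=X^{\mathrm{top}}-X^{\mathrm{bot}}+Y^{\mathrm{left}}-Y^{\mathrm{right}}$. Your $(\Leftarrow)$ telescoping argument is fine.
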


\begin{proof}
    Just as in the case of MPS, we consider the equation 
    \begin{equation}\label{eq:h_PEPS_0}
        \sum_i h_i \ket{\psi} = 0
    \end{equation}
    On this equation, we apply first the inverse of the PEPS tensor on every lattice site and contract each neighboring pairs of indices in the resulting tensor network (such that the result is a scalar). We obtain that $h$ sandwiched between four PEPS tensors and their inverses is 0:
    \begin{equation}\label{eq:plaquette_h_0}
        \begin{tikzpicture}[scale=0.7]
            \draw[canvas is xz plane at y=0] (-0.7,-0.7) grid (1.7,1.7);
            \draw[canvas is xz plane at y=1.6] (-0.7,-0.7) grid (1.7,1.7);
            \foreach \x in {0,1}{
                \draw[canvas is xy plane at z=\x] (-0.7, 0) rectangle (1.7, 1.6);
            }
            \draw[canvas is zy plane at x=0] (-1.1, 0) rectangle (2.1, 1.6);
            \draw[canvas is xz plane at y=0.75, rounded corners, fill=gray] (-0.25,-0.25) rectangle (1.25,1.25);
            \draw[canvas is zy plane at x==1] (-1.1, 0) rectangle (2.1, 1.6);
            \foreach \x in {(0,0,0),(1,0,0),(0,0,1),(1,0,1)}{
                \node[tensor] at \x {};
                \draw \x --++ (0,1.6);
            } 
            \def\y{1.6}
            \foreach \x in {(0,\y,0),(1,\y,0),(0,\y,1),(1,\y,1)}{
                \node[tensor,fill=white] at \x {};
            }             
        \end{tikzpicture} = 0.
    \end{equation}

    Second, let us consider \cref{eq:h_PEPS_0} again and apply the inverse of the PEPS tensor on every lattice site except one. We contract all neighboring indices of the resulting tensor network. The result is a rank-5 tensor, the same shape as the original PEPS tensor. Because of \cref{eq:plaquette_h_0}, in the resulting equation there are precisely 12 Hamiltonian terms that have non-zero contributions. Below we marked the missing inverse tensor's position by a small red circle, and the 12 relevant plaquettes by small black (and green) circles. 
    \begin{equation*}
        \begin{tikzpicture}[scale=0.5]
            \draw (-0.5,-0.5) grid (4.5,4.5);
            \fill[red] (2,2) circle (3pt);
            \foreach \x in {(1.5,0.5),(2.5,0.5), (1.5,3.5),(2.5,3.5)}{
              \fill \x circle (2pt);
            }
            \foreach \x in {0.5, 1.5, 2.5, 3.5}{
              \fill (\x, 1.5) circle (2pt);
              \fill (\x, 2.5) circle (2pt);
            }
            \fill[green!70!black] (0.5,2.5) circle (2pt);
            \draw[blue,densely dashed,rounded corners=5pt] (1.1,1.1) -- (1.1,0.1) -- (1.9,0.1) -- (1.9,1.9) -- (0.1,1.9) -- (0.1,1.1) -- (1.1,1.1);
            \draw[blue,densely dashed,rounded corners=5pt] (2.9,1.1) -- (2.9,0.1) -- (2.1,0.1) -- (2.1,1.9) -- (3.9,1.9) -- (3.9,1.1) -- (2.9,1.1);
            \draw[blue,densely dashed,rounded corners=5pt] (2.9,2.9) -- (3.9,2.9) -- (3.9,2.1) -- (2.1,2.1) -- (2.1,3.9) -- (2.9,3.9) -- (2.9,2.9);
            \draw[blue,densely dashed,rounded corners=5pt] (1.1,2.9) -- (0.1,2.9) -- (0.1,2.1) -- (1.9,2.1) -- (1.9,3.9) -- (1.1,3.9) -- (1.1,2.9);
        \end{tikzpicture}
    \end{equation*}
    For example, the contribution of the plaquette denoted by the green dot is the following rank-5 tensor:
    \begin{equation*}
        \begin{tikzpicture}
            \draw (-1.5,0) -- (0.5,0);
            \draw (0,0,-0.7) -- (0,0,0.7);
            \draw (0,0) -- (0,0.5);
            \node[tensor] at (0,0) {};
            \node[tensor,label=below:$m$] at (-1,0) {};
        \end{tikzpicture} 
        \quad \text{where} \quad
        m = 
        \begin{tikzpicture}[scale=0.7,baseline=3mm]
            \draw[canvas is xz plane at y=0] (-0.7,-0.7) grid (1.7,1.7);
            \draw[canvas is xz plane at y=1.6] (-0.7,-0.7) grid (1.7,1.7);
            \draw[canvas is xy plane at z=1] (1.7, 0) -- (-0.7, 0) -- (-0.7, 1.6) -- (1.7, 1.6);
            \draw[canvas is xy plane at z=0] (-0.7, 0) rectangle (1.7, 1.6);
            \draw[canvas is zy plane at x=0] (-1.1, 0) rectangle (2.1, 1.6);
            \draw[canvas is xz plane at y=0.75, rounded corners, fill=gray] (-0.25,-0.25) rectangle (1.25,1.25);
            \draw[canvas is zy plane at x==1] (-1.1, 0) rectangle (2.1, 1.6);
            \foreach \x in {(0,0,0),(1,0,0),(0,0,1),(1,0,1)}{
                \node[tensor] at \x {};
                \draw \x --++ (0,1.6);
            } 
            \def\y{1.6}
            \foreach \x in {(0,\y,0),(1,\y,0),(0,\y,1),(1,\y,1)}{
                \node[tensor,fill=white] at \x {};
            }             
        \end{tikzpicture} \ .        
    \end{equation*}
    The 12 terms can be grouped into 4 groups, marked by dashed blue lines. The rank-5 tensors corresponding to the groups on the up-left, up-right, down-right and down-left positions of the red tensor are called $x_{ul},x_{ur},x_{dr},x_{dl}$, respectively.  After this grouping, we obtain the equation 
    \begin{equation*}
        \begin{tikzpicture}[baseline=-1mm]
            \draw[canvas is xz plane at y=0] (-0.5,-0.5) grid (0.5,0.5);
            \draw (0,0,0) -- (0,0.5,0);
            \node[tensor, red, label=below:$\quad x_{ul}$] at (0,0,0) {};
        \end{tikzpicture} + 
        \begin{tikzpicture}[baseline=-1mm]
            \draw[canvas is xz plane at y=0] (-0.5,-0.5) grid (0.5,0.5);
            \draw (0,0,0) -- (0,0.5,0);
            \node[tensor, red, label=below:$\quad x_{ur}$] at (0,0,0) {};
        \end{tikzpicture} + 
        \begin{tikzpicture}[baseline=-1mm]
            \draw[canvas is xz plane at y=0] (-0.5,-0.5) grid (0.5,0.5);
            \draw (0,0,0) -- (0,0.5,0);
            \node[tensor, red, label=below:$\quad x_{dr}$] at (0,0,0) {};
        \end{tikzpicture} + 
        \begin{tikzpicture}[baseline=-1mm]
            \draw[canvas is xz plane at y=0] (-0.5,-0.5) grid (0.5,0.5);
            \draw (0,0,0) -- (0,0.5,0);
            \node[tensor, red, label=below:$\quad x_{dl}$] at (0,0,0) {};
        \end{tikzpicture} = 0.
    \end{equation*}
    Let us consider \cref{eq:h_PEPS_0} again and apply the inverse of the PEPS tensor on every lattice site except for two neighboring sites. Just as above, we contract the neighboring indices of the resulting network until we are left with a rank-8 tensor. Again, because of \cref{eq:plaquette_h_0}, there are only 16 terms in the Hamiltonian that have non-zero contributions. Below we marked the missing inverse tensors' positions by small red circles, and the 16 relevant plaquettes by small black circles. 
    \begin{equation*}
        \begin{tikzpicture}[scale=0.5]
            \draw (-0.5,-0.5) grid (5.5,4.5);
            \fill[red] (2,2) circle (3pt);
            \fill[red] (3,2) circle (3pt);
            \foreach \y in {0.5,3.5}{
                \foreach \x in {1.5,2.5,3.5}{
                  \fill (\x,\y) circle (2pt);
                }
            }
            \foreach \x in {0.5, 1.5, 2.5, 3.5,4.5}{
              \fill (\x, 1.5) circle (2pt);
              \fill (\x, 2.5) circle (2pt);
            }
            \draw[blue,densely dashed,rounded corners=5pt] (1.1,1.1) -- (1.1,0.1) -- (1.9,0.1) -- (1.9,1.9) -- (0.1,1.9) -- (0.1,1.1) -- (1.1,1.1);
            \draw[blue,densely dashed,rounded corners=5pt] (3.9,1.1) -- (3.9,0.1) -- (3.1,0.1) -- (3.1,1.9) -- (4.9,1.9) -- (4.9,1.1) -- (3.9,1.1);
            \draw[blue,densely dashed,rounded corners=5pt] (3.9,2.9) -- (4.9,2.9) -- (4.9,2.1) -- (3.1,2.1) -- (3.1,3.9) -- (3.9,3.9) -- (3.9,2.9);
            \draw[blue,densely dashed,rounded corners=5pt] (1.1,2.9) -- (0.1,2.9) -- (0.1,2.1) -- (1.9,2.1) -- (1.9,3.9) -- (1.1,3.9) -- (1.1,2.9);
            \draw[blue,densely dashed, rounded corners=5pt] (2.1,0.1) rectangle (2.9,1.9);
            \draw[blue,densely dashed, rounded corners=5pt] (2.1,2.1) rectangle (2.9,3.9);
        \end{tikzpicture}
    \end{equation*}
    We group the resulting terms into 6 groups. We obtain
    \begin{equation*}
        \begin{tikzpicture}[baseline=-1mm]
            \draw (-0.5,0) -- (1.5,0);
            \foreach\x in {0,1}{
                \draw (\x,0) -- (\x,0.5);
                \draw (\x,0,-0.7) -- (\x,0,0.7);
            }
            \node[tensor, red, label=below:$\quad x_{ul}$] at (0,0) {};
            \node[tensor] at (1,0) {};
        \end{tikzpicture} + 
        \begin{tikzpicture}[baseline=-1mm]
            \draw (-0.5,0) -- (1.5,0);
            \foreach\x in {0,1}{
                \draw (\x,0) -- (\x,0.5);
                \draw (\x,0,-0.7) -- (\x,0,0.7);
            }
            \node[tensor, red, label=below:$\quad x_{dl}$] at (0,0) {};
            \node[tensor] at (1,0) {};
        \end{tikzpicture} + 
        \begin{tikzpicture}[baseline=-1mm]
            \draw (-0.5,0) -- (1.5,0);
            \foreach\x in {0,1}{
                \draw (\x,0) -- (\x,0.5);
                \draw (\x,0,-0.7) -- (\x,0,0.7);
            }
            \node[tensor, red, label=below:$\quad x_{ur}$] at (1,0) {};
            \node[tensor] at (0,0) {};
        \end{tikzpicture} + 
        \begin{tikzpicture}[baseline=-1mm]
            \draw (-0.5,0) -- (1.5,0);
            \foreach\x in {0,1}{
                \draw (\x,0) -- (\x,0.5);
                \draw (\x,0,-0.7) -- (\x,0,0.7);
            }
            \node[tensor, red, label=below:$\quad x_{dr}$] at (1,0) {};
            \node[tensor] at (0,0) {};
        \end{tikzpicture} + 
        \begin{tikzpicture}[baseline=-1mm]
            \draw (-0.5,0) -- (1.5,0);
            \foreach\x in {0,1}{
                \draw (\x,0) -- (\x,0.5);
                \draw (\x,0,-0.7) -- (\x,0,0.7);
            }
            \draw[fill,rounded corners=0.5mm,red] (-0.1,-0.05) rectangle (1.1, 0.05);
            \node at (0.5,-0.2) {$x_u$};
        \end{tikzpicture}  +
        \begin{tikzpicture}[baseline=-1mm]
            \draw (-0.5,0) -- (1.5,0);
            \foreach\x in {0,1}{
                \draw (\x,0) -- (\x,0.5);
                \draw (\x,0,-0.7) -- (\x,0,0.7);
            }
            \draw[fill,rounded corners=0.5mm,red] (-0.1,-0.05) rectangle (1.1, 0.05);
            \node at (0.5,-0.2) {$x_d$};
        \end{tikzpicture}         
        =0,
    \end{equation*}
    where $x_{ul},x_{ur},x_{dr},x_{dl}$ are as above and $x_u$, $x_d$ are rank-8 tensors arising from the plaquettes above and below between the two marked vertices.

    Rotating the picture above by 90 degrees, we obtain the equation
    \begin{equation*}
        \begin{tikzpicture}[baseline=-3mm]
            \draw (0,0,-0.6) -- (0,0,1.6);
            \foreach\x in {0,1}{
                \draw (0,0,\x) -- (0,0.5,\x);
                \draw (-0.5,0, \x) -- (0.5,0,\x);
            }
            \node[tensor, red, label=below:$\quad x_{ul}$] at (0,0) {};
            \node[tensor] at (0,0,1) {};
        \end{tikzpicture} + 
        \begin{tikzpicture}[baseline=-3mm]
            \draw (0,0,-0.6) -- (0,0,1.6);
            \foreach\x in {0,1}{
                \draw (0,0,\x) -- (0,0.5,\x);
                \draw (-0.5,0, \x) -- (0.5,0,\x);
            }
            \node[tensor, red, label=below:$\quad x_{ur}$] at (0,0) {};
            \node[tensor] at (0,0,1) {};
        \end{tikzpicture} + 
        \begin{tikzpicture}[baseline=-3mm]
            \draw (0,0,-0.6) -- (0,0,1.6);
            \foreach\x in {0,1}{
                \draw (0,0,\x) -- (0,0.5,\x);
                \draw (-0.5,0, \x) -- (0.5,0,\x);
            }
            \node[tensor, red, label=below:$\quad x_{dl}$] at (0,0,1) {};
            \node[tensor] at (0,0) {};
        \end{tikzpicture} + 
        \begin{tikzpicture}[baseline=-3mm]
            \draw (0,0,-0.6) -- (0,0,1.6);
            \foreach\x in {0,1}{
                \draw (0,0,\x) -- (0,0.5,\x);
                \draw (-0.5,0, \x) -- (0.5,0,\x);
            }
            \node[tensor, red, label=below:$\quad x_{dr}$] at (0,0,1) {};
            \node[tensor] at (0,0) {};
        \end{tikzpicture} + 
        \begin{tikzpicture}[baseline=-3mm]
            \draw (0,0,-0.6) -- (0,0,1.6);
            \foreach\x in {0,1}{
                \draw (0,0,\x) -- (0,0.5,\x);
                \draw (-0.5,0,\x) -- (0.5,0,\x);
            }
            \draw[fill,red,rounded corners=0.5mm,canvas is zx plane at y=0] (-0.1,-0.05) rectangle (1.1, 0.05);
            \node at (0.4,-0.25) {$x_l$};
        \end{tikzpicture}  +
        \begin{tikzpicture}[baseline=-3mm]
            \draw (0,0,-0.6) -- (0,0,1.6);
            \foreach\x in {0,1}{
                \draw (0,0,\x) -- (0,0.5,\x);
                \draw (-0.5,0,\x) -- (0.5,0,\x);
            }
            \draw[fill,red,rounded corners=0.5mm,canvas is zx plane at y=0] (-0.1,-0.05) rectangle (1.1, 0.05);
            \node at (0.4,-0.25) {$x_r$};
        \end{tikzpicture}  
        =0,
    \end{equation*}
    where the tensors $x_l$ and $x_r$ are the ones arising from the plaquette terms between the two vertices left out, on the left, and on the right, respectively.

    Finally let us consider \cref{eq:h_PEPS_0} again and apply the inverse of the PEPS tensor on every lattice site except for four sites at the corners of a plaquette. Just as above, we contract the neighboring indices of the resulting network until we are left with a rank-12 tensor. Again, because of \cref{eq:plaquette_h_0}, there are only 21 terms in the Hamiltonian that have non-zero contributions. Below we marked the missing inverse tensors' positions by small red circles, and the 21 relevant plaquettes by small black circles. 
    \begin{equation*}
        \begin{tikzpicture}[scale=0.5]
            \draw (-0.5,-0.5) grid (5.5,5.5);
            \fill[red] (2,2) circle (3pt);
            \fill[red] (3,2) circle (3pt);
            \fill[red] (2,3) circle (3pt);
            \fill[red] (3,3) circle (3pt);
            \foreach \y in {0.5,4.5}{
                \foreach \x in {1.5,2.5,3.5}{
                  \fill (\x,\y) circle (2pt);
                }
            }
            \foreach \x in {0.5, 1.5, 2.5, 3.5,4.5}{
              \fill (\x, 1.5) circle (2pt);
              \fill (\x, 2.5) circle (2pt);
              \fill (\x, 3.5) circle (2pt);
            }
            \draw[blue,densely dashed,rounded corners=5pt] (1.1,1.1) -- (1.1,0.1) -- (1.9,0.1) -- (1.9,1.9) -- (0.1,1.9) -- (0.1,1.1) -- (1.1,1.1);
            \draw[blue,densely dashed,rounded corners=5pt] (3.9,1.1) -- (3.9,0.1) -- (3.1,0.1) -- (3.1,1.9) -- (4.9,1.9) -- (4.9,1.1) -- (3.9,1.1);
            \draw[blue,densely dashed,rounded corners=5pt] (3.9,3.9) -- (4.9,3.9) -- (4.9,3.1) -- (3.1,3.1) -- (3.1,4.9) -- (3.9,4.9) -- (3.9,3.9);
            \draw[blue,densely dashed,rounded corners=5pt] (1.1,3.9) -- (0.1,3.9) -- (0.1,3.1) -- (1.9,3.1) -- (1.9,4.9) -- (1.1,4.9) -- (1.1,3.9);
            \draw[blue,densely dashed, rounded corners=5pt] (2.1,0.1) rectangle (2.9,1.9);
            \draw[blue,densely dashed, rounded corners=5pt] (2.1,3.1) rectangle (2.9,4.9);
            \draw[blue,densely dashed, rounded corners=5pt] (0.1,2.1) rectangle (1.9,2.9);
            \draw[blue,densely dashed, rounded corners=5pt] (3.1,2.1) rectangle (4.9,2.9);
        \end{tikzpicture}
    \end{equation*}
    After blocking the different non-zero terms into 9 different groups as indicated on the figure, we obtain the equation 
    \begin{align*}    
        \begin{tikzpicture}[scale=0.7]
            \draw[canvas is xz plane at y=0] (-0.5,-0.5) grid (1.5,1.5);
            \node at (0.5,1.1) {$h$};
            \draw[canvas is xz plane at y=0.75, rounded corners, fill=gray] (-0.25,-0.25) rectangle (1.25,1.25);
            \foreach \x in {(0,0,0),(1,0,0),(0,0,1),(1,0,1)}{
                \node[tensor] at \x {};
                \draw \x --++ (0,1.3);
            } 
        \end{tikzpicture} +
        \begin{tikzpicture}[scale=0.7]
            \draw[canvas is xz plane at y=0] (-0.5,-0.5) grid (1.5,1.5);
            \foreach \x in {(0,0,0),(1,0,0),(0,0,1),(1,0,1)}{
                \node[tensor] at \x {};
                \draw \x --++ (0,1);
            } 
            \draw[red,line width=1.7mm,line cap=round] (0,0,0) -- (1,0,0);
            \node at (0.3,0.3) {$x_u$};
        \end{tikzpicture} + 
        \begin{tikzpicture}[scale=0.7]
            \draw[canvas is xz plane at y=0] (-0.5,-0.5) grid (1.5,1.5);
            \foreach \x in {(0,0,0),(1,0,0),(0,0,1),(1,0,1)}{
                \node[tensor] at \x {};
                \draw \x --++ (0,1);
            } 
            \draw[red,line width=1.7mm,line cap=round] (0,0,1) -- (1,0,1);
            \node at (0.8,0,1.7) {$x_d$};
        \end{tikzpicture} +    
        \begin{tikzpicture}[scale=0.7]
            \draw[canvas is xz plane at y=0] (-0.5,-0.5) grid (1.5,1.5);
            \foreach \x in {(0,0,0),(1,0,0),(0,0,1),(1,0,1)}{
                \node[tensor] at \x {};
                \draw \x --++ (0,1);
            } 
            \draw[red,line width=1.7mm,line cap=round] (0,0,1) -- (0,0,0);
            \node at (-0.5,0,0.5) {$x_l$};
        \end{tikzpicture} +    
        \begin{tikzpicture}[scale=0.7]
            \draw[canvas is xz plane at y=0] (-0.5,-0.5) grid (1.5,1.5);
            \foreach \x in {(0,0,0),(1,0,0),(0,0,1),(1,0,1)}{
                \node[tensor] at \x {};
                \draw \x --++ (0,1);
            } 
            \draw[red,line width=1.7mm,line cap=round] (1,0,0) -- (1,0,1);
            \node at (1.6, 0, 0.5) {$x_r$};
        \end{tikzpicture} & \\ +
        \begin{tikzpicture}[scale=0.7]
            \draw[canvas is xz plane at y=0] (-0.5,-0.5) grid (1.5,1.5);
            \foreach \x in {(0,0,0),(1,0,0),(0,0,1),(1,0,1)}{
                \node[tensor] at \x {};
                \draw \x --++ (0,0.8);
            } 
            \node[tensor,red] at (0,0) {};
            \node at (0.3,0.3) {$x_{ul}$};
        \end{tikzpicture} +
        \begin{tikzpicture}[scale=0.7]
            \draw[canvas is xz plane at y=0] (-0.5,-0.5) grid (1.5,1.5);
            \foreach \x in {(0,0,0),(1,0,0),(0,0,1),(1,0,1)}{
                \node[tensor] at \x {};
                \draw \x --++ (0,0.8);
            } 
            \node[tensor,red] at (1,0) {};
            \node at (1.35,0.3) {$x_{ur}$};
        \end{tikzpicture} +
        \begin{tikzpicture}[scale=0.7]
            \draw[canvas is xz plane at y=0] (-0.5,-0.5) grid (1.5,1.5);
            \foreach \x in {(0,0,0),(1,0,0),(0,0,1),(1,0,1)}{
                \node[tensor] at \x {};
                \draw \x --++ (0,0.8);
            } 
            \node[tensor,red] at (1,0,1) {};
            \node at (1.5,0,1.7) {$x_{dr}$};
        \end{tikzpicture} +
        \begin{tikzpicture}[scale=0.7]
            \draw[canvas is xz plane at y=0] (-0.5,-0.5) grid (1.5,1.5);
            \foreach \x in {(0,0,0),(1,0,0),(0,0,1),(1,0,1)}{
                \node[tensor] at \x {};
                \draw \x --++ (0,0.8);
            } 
            \node[tensor,red] at (0,0,1) {};
            \node at (0.4,0,1.7) {$x_{dl}$};
        \end{tikzpicture} 
        & = 0.
    \end{align*}
    Defining 
    \begin{align*}
        \begin{tikzpicture}[baseline=-1mm]
            \draw (-0.5,0) -- (1.5,0);
            \foreach\x in {0,1}{
                \draw (\x,0) -- (\x,0.5);
                \draw (\x,0,-0.7) -- (\x,0,0.7);
            }
            \draw[fill,red,rounded corners=0.5mm] (-0.1,-0.05) rectangle (1.1, 0.05);
            \node at (0.5,-0.2) {$X$};
        \end{tikzpicture}     &    =
        \begin{tikzpicture}[baseline=-1mm]
            \draw (-0.5,0) -- (1.5,0);
            \foreach\x in {0,1}{
                \draw (\x,0) -- (\x,0.5);
                \draw (\x,0,-0.7) -- (\x,0,0.7);
            }
            \draw[fill,red,rounded corners=0.5mm] (-0.1,-0.05) rectangle (1.1, 0.05);
            \node at (0.5,-0.2) {$x_d$};
        \end{tikzpicture}   - 
        \begin{tikzpicture}[baseline=-1mm]
            \draw (-0.5,0) -- (1.5,0);
            \foreach\x in {0,1}{
                \draw (\x,0) -- (\x,0.5);
                \draw (\x,0,-0.7) -- (\x,0,0.7);
            }
            \node[tensor, red,label=below:$\quad x_{dr}$] at (0,0) {};
            \node[tensor] at (1,0) {};
        \end{tikzpicture} +
        \begin{tikzpicture}[baseline=-1mm]
            \draw (-0.5,0) -- (1.5,0);
            \foreach\x in {0,1}{
                \draw (\x,0) -- (\x,0.5);
                \draw (\x,0,-0.7) -- (\x,0,0.7);
            }
            \node[tensor, red,label=below:$\quad x_{dr}$] at (1,0) {};
            \node[tensor] at (0,0) {};
        \end{tikzpicture}  
        \\ & = 
        -\begin{tikzpicture}[baseline=-1mm]
            \draw (-0.5,0) -- (1.5,0);
            \foreach\x in {0,1}{
                \draw (\x,0) -- (\x,0.5);
                \draw (\x,0,-0.7) -- (\x,0,0.7);
            }
            \node[tensor, red, label=below:$\quad x_{ul}$] at (0,0) {};
            \node[tensor] at (1,0) {};
        \end{tikzpicture} -
        \begin{tikzpicture}[baseline=-1mm]
            \draw (-0.5,0) -- (1.5,0);
            \foreach\x in {0,1}{
                \draw (\x,0) -- (\x,0.5);
                \draw (\x,0,-0.7) -- (\x,0,0.7);
            }
            \node[tensor, red, label=below:$\quad x_{ur}$] at (1,0) {};
            \node[tensor] at (0,0) {};
        \end{tikzpicture} - 
        \begin{tikzpicture}[baseline=-1mm]
            \draw (-0.5,0) -- (1.5,0);
            \foreach\x in {0,1}{
                \draw (\x,0) -- (\x,0.5);
                \draw (\x,0,-0.7) -- (\x,0,0.7);
            }
            \node[tensor, red, label=below:$\quad x_{dl}$] at (0,0) {};
            \node[tensor] at (1,0) {};
        \end{tikzpicture} -
        \begin{tikzpicture}[baseline=-1mm]
            \draw (-0.5,0) -- (1.5,0);
            \foreach\x in {0,1}{
                \draw (\x,0) -- (\x,0.5);
                \draw (\x,0,-0.7) -- (\x,0,0.7);
            }
            \draw[fill,red, rounded corners=0.5mm] (-0.1,-0.05) rectangle (1.1, 0.05);
            \node at (0.5,-0.2) {$x_u$};
        \end{tikzpicture} - 
        \begin{tikzpicture}[baseline=-1mm]
            \draw (-0.5,0) -- (1.5,0);
            \foreach\x in {0,1}{
                \draw (\x,0) -- (\x,0.5);
                \draw (\x,0,-0.7) -- (\x,0,0.7);
            }
            \node[tensor, red, label=below:$\quad x_{dr}$] at (0,0) {};
            \node[tensor] at (1,0) {};
        \end{tikzpicture} \\ &= 
        \begin{tikzpicture}[baseline=-1mm]
            \draw (-0.5,0) -- (1.5,0);
            \foreach\x in {0,1}{
                \draw (\x,0) -- (\x,0.5);
                \draw (\x,0,-0.7) -- (\x,0,0.7);
            }
            \node[tensor, red, label=below:$\quad x_{ur}$] at (0,0) {};
            \node[tensor] at (1,0) {};
        \end{tikzpicture} -
        \begin{tikzpicture}[baseline=-1mm]
            \draw (-0.5,0) -- (1.5,0);
            \foreach\x in {0,1}{
                \draw (\x,0) -- (\x,0.5);
                \draw (\x,0,-0.7) -- (\x,0,0.7);
            }
            \node[tensor, red, label=below:$\quad x_{ur}$] at (1,0) {};
            \node[tensor] at (0,0) {};
        \end{tikzpicture} -
        \begin{tikzpicture}[baseline=-1mm]
            \draw (-0.5,0) -- (1.5,0);
            \foreach\x in {0,1}{
                \draw (\x,0) -- (\x,0.5);
                \draw (\x,0,-0.7) -- (\x,0,0.7);
            }
            \draw[fill,red, rounded corners=0.5mm] (-0.1,-0.05) rectangle (1.1, 0.05);
            \node at (0.5,-0.2) {$x_u$};
        \end{tikzpicture}         
    \end{align*}
    and 
    \begin{align*}
        \begin{tikzpicture}[baseline=-3mm]
            \draw (0,0,-0.6) -- (0,0,1.6);
            \foreach\x in {0,1}{
                \draw (0,0,\x) -- (0,0.5,\x);
                \draw (-0.5,0,\x) -- (0.5,0,\x);
            }
            \draw[fill,red,rounded corners=0.5mm,canvas is zx plane at y=0] (-0.1,-0.05) rectangle (1.1, 0.05);
            \node at (0.4,-0.25) {$Y$};
        \end{tikzpicture}  & = 
        \begin{tikzpicture}[baseline=-3mm]
            \draw (0,0,-0.6) -- (0,0,1.6);
            \foreach\x in {0,1}{
                \draw (0,0,\x) -- (0,0.5,\x);
                \draw (-0.5,0,\x) -- (0.5,0,\x);
            }
            \draw[fill,red,rounded corners=0.5mm,canvas is zx plane at y=0] (-0.1,-0.05) rectangle (1.1, 0.05);
            \node at (0.4,-0.25) {$x_r$};
        \end{tikzpicture}  \\ & = -
        \begin{tikzpicture}[baseline=-3mm]
            \draw (0,0,-0.6) -- (0,0,1.6);
            \foreach\x in {0,1}{
                \draw (0,0,\x) -- (0,0.5,\x);
                \draw (-0.5,0, \x) -- (0.5,0,\x);
            }
            \node[tensor, red,label=below:$\quad x_{ul}$] at (0,0) {};
            \node[tensor] at (0,0,1) {};
        \end{tikzpicture} - 
        \begin{tikzpicture}[baseline=-3mm]
            \draw (0,0,-0.6) -- (0,0,1.6);
            \foreach\x in {0,1}{
                \draw (0,0,\x) -- (0,0.5,\x);
                \draw (-0.5,0, \x) -- (0.5,0,\x);
            }
            \node[tensor, red,label=below:$\quad x_{dl}$] at (0,0,1) {};
            \node[tensor] at (0,0) {};
        \end{tikzpicture} - 
        \begin{tikzpicture}[baseline=-3mm]
            \draw (0,0,-0.6) -- (0,0,1.6);
            \foreach\x in {0,1}{
                \draw (0,0,\x) -- (0,0.5,\x);
                \draw (-0.5,0,\x) -- (0.5,0,\x);
            }
            \draw[fill,red, rounded corners=0.5mm,canvas is zx plane at y=0] (-0.1,-0.05) rectangle (1.1, 0.05);
            \node at (0.4,-0.25) {$x_l$};
        \end{tikzpicture}  -
        \begin{tikzpicture}[baseline=-3mm]
            \draw (0,0,-0.6) -- (0,0,1.6);
            \foreach\x in {0,1}{
                \draw (0,0,\x) -- (0,0.5,\x);
                \draw (-0.5,0, \x) -- (0.5,0,\x);
            }
            \node[tensor, red,label=below:$\quad x_{ur}$] at (0,0) {};
            \node[tensor] at (0,0,1) {};
        \end{tikzpicture} -
        \begin{tikzpicture}[baseline=-3mm]
            \draw (0,0,-0.6) -- (0,0,1.6);
            \foreach\x in {0,1}{
                \draw (0,0,\x) -- (0,0.5,\x);
                \draw (-0.5,0, \x) -- (0.5,0,\x);
            }
            \node[tensor, red,label=below:$\quad x_{dr}$] at (0,0,1) {};
            \node[tensor] at (0,0) {};
        \end{tikzpicture} ,
    \end{align*}
    we obtain 
    \begin{align*}    
        \begin{tikzpicture}[scale=0.7]
            \draw[canvas is xz plane at y=0] (-0.5,-0.5) grid (1.5,1.5);
            \node at (0.5,1.1) {$h$};
            \draw[canvas is xz plane at y=0.75, rounded corners, fill=gray] (-0.25,-0.25) rectangle (1.25,1.25);
            \foreach \x in {(0,0,0),(1,0,0),(0,0,1),(1,0,1)}{
                \node[tensor] at \x {};
                \draw \x --++ (0,1.3);
            } 
        \end{tikzpicture} -
        \begin{tikzpicture}[scale=0.7]
            \draw[canvas is xz plane at y=0] (-0.5,-0.5) grid (1.5,1.5);
            \foreach \x in {(0,0,0),(1,0,0),(0,0,1),(1,0,1)}{
                \node[tensor] at \x {};
                \draw \x --++ (0,1);
            } 
            \draw[red,line width=1.7mm,line cap=round] (0,0,0) -- (1,0,0);
            \node at (0.3,0.3) {$X$};
        \end{tikzpicture} +
        \begin{tikzpicture}[scale=0.7]
            \draw[canvas is xz plane at y=0] (-0.5,-0.5) grid (1.5,1.5);
            \foreach \x in {(0,0,0),(1,0,0),(0,0,1),(1,0,1)}{
                \node[tensor] at \x {};
                \draw \x --++ (0,1);
            } 
            \draw[red,line width=1.7mm,line cap=round] (0,0,1) -- (1,0,1);
            \node at (0.8,0,1.7) {$X$};
        \end{tikzpicture} -    
        \begin{tikzpicture}[scale=0.7]
            \draw[canvas is xz plane at y=0] (-0.5,-0.5) grid (1.5,1.5);
            \foreach \x in {(0,0,0),(1,0,0),(0,0,1),(1,0,1)}{
                \node[tensor] at \x {};
                \draw \x --++ (0,1);
            } 
            \draw[red,line width=1.7mm,line cap=round] (0,0,1) -- (0,0,0);
            \node at (-0.5,0,0.5) {$Y$};
        \end{tikzpicture} +
        \begin{tikzpicture}[scale=0.7]
            \draw[canvas is xz plane at y=0] (-0.5,-0.5) grid (1.5,1.5);
            \foreach \x in {(0,0,0),(1,0,0),(0,0,1),(1,0,1)}{
                \node[tensor] at \x {};
                \draw \x --++ (0,1);
            } 
            \draw[red,line width=1.7mm,line cap=round] (1,0,0) -- (1,0,1);
            \node at (1.6, 0, 0.5) {$Y$};
        \end{tikzpicture} = 0,
    \end{align*}
    which is the desired equation.
\end{proof}

\subsection{Nearest neighbor operators at the square lattice on injective PEPS}

In this section we prove
\begin{theorem}
    Let $O$ be a nearest neighbor operator on an $n \times m$ square lattice: $O = \sum_i h_i + \sum_j v_j$, where $h_i$ are the horizontal terms and $v_j$ are the vertical ones. Let $\ket{\psi}$ be a PEPS generated by an injective tensor $A$ with bond dimension $D_h$ in the horizontal direction and $D_v$ in the vertical direction. Then $O\ket{\psi} = 0$ if and only if there are PEPS tensors $L,R,U,D$ of the same shape as $A$, a matrix $X$ of size $D_v^2 \times D_v^2$  and a matrix $Y$ of size $D_h^2 \times D_h^2$  such that $L+U+R+D=0$ and  
    \begin{align*}
        \begin{tikzpicture}[z={(-4.85mm,-4.85mm)}]
            \draw (-0.5,0) -- (1.5,0);
            \foreach \x in {0,1}{
              \draw (\x,0,-0.5) -- (\x,0,0.5);
              \draw (\x,0) --++ (0,1);
              \node[tensor] at (\x,0) {};
            }
            \draw[fill=gray, rounded corners=1mm] (-0.2,0.4) rectangle (1.2,0.6);  
            \node at (0.5,0.8) {$h$};
        \end{tikzpicture} = 
        \begin{tikzpicture}[z={(-4.85mm,-4.85mm)}]
            \draw (-0.5,0) -- (1.5,0);
            \foreach \x in {0,1}{
              \draw (\x,0,-0.5) -- (\x,0,0.5);
              \draw (\x,0) --++ (0,0.6);
            }
          \node[tensor,fill=red, label = below right:$L$] at (0,0) {};
          \node[tensor] at (1,0) {};
        \end{tikzpicture} +
        \begin{tikzpicture}[z={(-4.85mm,-4.85mm)}]
            \draw (-0.5,0) -- (1.5,0);
            \foreach \x in {0,1}{
              \draw (\x,0,-0.5) -- (\x,0,0.5);
              \draw (\x,0) --++ (0,0.6);
            }
          \node[tensor] at (0,0) {};
          \node[tensor,fill=red,label = below right:$R$] at (1,0) {};
        \end{tikzpicture} +
        \begin{tikzpicture}[z={(-4.85mm,-4.85mm)}]
            \draw (-0.5,0) -- (1.5,0);
            \foreach \x in {0,1}{
              \draw (\x,0,-1) -- (\x,0,1);
              \draw (\x,0) --++ (0,0.6);
              \node[tensor] at (\x,0) {};
            }
            \filldraw[black,canvas is xz plane at y=0, rounded corners=1mm]  (-0.25,0.35) rectangle (1.25,0.65);
            \filldraw[red,canvas is xz plane at y=0, rounded corners=1mm]  (-0.2,0.4) rectangle (1.2,0.6);
            \node at (0.1,-0.5) {$X$};
        \end{tikzpicture}  -
        \begin{tikzpicture}[z={(-4.85mm,-4.85mm)}]
            \draw (-0.5,0) -- (1.5,0);
            \foreach \x in {0,1}{
              \draw (\x,0,-1) -- (\x,0,1);
              \draw (\x,0) --++ (0,0.6);
              \node[tensor] at (\x,0) {};
            }
            \filldraw[black,canvas is xz plane at y=0, rounded corners=1mm]  (-0.25,-0.35) rectangle (1.25,-0.65);
            \filldraw[red,canvas is xz plane at y=0, rounded corners=1mm]  (-0.2,-0.4) rectangle (1.2,-0.6);
            \node at (0.8,0.5) {$X$};
        \end{tikzpicture}, \\ 
        \begin{tikzpicture}[z={(-4.85mm,-4.85mm)}, baseline = -3.5mm]
            \draw (0,0,-0.5) -- (0,0,1.5);
            \foreach \z in {0,1}{
              \draw (-0.5,0,\z) -- (0.5,0,\z);
              \draw (0,0,\z) --++ (0,1);
              \node[tensor] at (0,0,\z) {};
            }
            \draw[fill=gray, rounded corners=1mm,canvas is yz plane at x=0] (0.4,-0.2) rectangle (0.6,1.2);
            \node at (-0.3,0.5) {$v$};
        \end{tikzpicture} = 
        \begin{tikzpicture}[z={(-4.85mm,-4.85mm)}, baseline = -3.5mm]
            \draw (0,0,-0.5) -- (0,0,1.5);
            \foreach \z in {0,1}{
              \draw (-0.45,0,\z) -- (0.5,0,\z);
              \draw (0,0,\z) --++ (0,0.6);
            }
          \node[tensor,fill=red,label=below right:$U$] at (0,0,0) {};
          \node[tensor] at (0,0,1) {};
        \end{tikzpicture} +
        \begin{tikzpicture}[z={(-4.85mm,-4.85mm)}, baseline = -3.5mm]
            \draw (0,0,-0.5) -- (0,0,1.5);
            \foreach \z in {0,1}{
              \draw (-0.45,0,\z) -- (0.5,0,\z);
              \draw (0,0,\z) --++ (0,0.6);
            }
          \node[tensor] at (0,0,0) {};
          \node[tensor,fill=red,label=below right:$D$] at (0,0,1) {};
        \end{tikzpicture} +
        \begin{tikzpicture}[z={(-4.85mm,-4.85mm)}, baseline = -3.5mm]
            \draw (0,0,-0.5) -- (0,0,1.5);
            \foreach \z in {0,1}{
              \draw (-0.6,0,\z) -- (0.8,0,\z);
              \draw (0,0,\z) --++ (0,0.6);
              \node[tensor] at (0,0,\z) {};
            }
            \filldraw[black,canvas is xz plane at y=0, rounded corners=1mm]
              (0.35,-0.25) rectangle (0.65,1.25);
            \filldraw[red,canvas is xz plane at y=0, rounded corners=1mm]
              (0.4,-0.2) rectangle (0.6,1.2);
            \node at (0.7,-0.25) {$Y$};
        \end{tikzpicture}
        -
        \begin{tikzpicture}[z={(-4.85mm,-4.85mm)}, baseline = -3.5mm]
            \draw (0,0,-0.5) -- (0,0,1.5);
            \foreach \z in {0,1}{
              \draw (-0.8,0,\z) -- (0.6,0,\z);
              \draw (0,0,\z) --++ (0,0.6);
              \node[tensor] at (0,0,\z) {};
            }
            \filldraw[black,canvas is xz plane at y=0, rounded corners=1mm]
              (-0.35,-0.25) rectangle (-0.65,1.25);
            \filldraw[red,canvas is xz plane at y=0, rounded corners=1mm]
              (-0.4,-0.2) rectangle (-0.6,1.2);
            \node at (-1.2,-0.25) {$Y$};
        \end{tikzpicture}.
    \end{align*}
\end{theorem}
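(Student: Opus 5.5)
The plan is to follow the scheme of the plaquette proof: contract the equation $O\ket{\psi}=0$ with inverse PEPS tensors on all sites except a chosen small set, and read off local relations from the surviving terms. The sufficiency direction is a direct check. Summing the claimed local identities over all horizontal bonds $h_i$ and vertical bonds $v_j$ of the periodic lattice, the $X$-terms form a telescoping sum along each column of vertical bonds and cancel. The $Y$-terms cancel in the same way along each row. The remaining $L,R,U,D$ terms collect, at every site, into the combination $L+R+U+D=0$, because every site is the left end of exactly one horizontal bond, the right end of one, the upper end of one vertical bond and the lower end of one. Hence $O\ket{\psi}=0$.

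For necessity, the first step is to apply the inverse tensor on every site and contract. Since the nearest-neighbour structure makes each term sandwiched between $A$'s and inverses a scalar, this gives $c_h+c_v=0$ after accounting for the counts $nm$ of each type. A better route is to first contract with inverses everywhere but one site. The surviving contributions are the two horizontal bonds and the two vertical bonds touching that site, plus a multiple of $A$ coming from the scalars. Call these rank-5 tensors $L,R,U,D$: the bond whose left end is the open site gives $L$, and similarly for the others, with the scalar multiple of $A$ absorbed into one of them. The single-site equation then reads exactly $L+R+U+D=0$.

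Next, leave two horizontally neighbouring sites open. The surviving terms are the bond $h$ between them, plus the three other bonds at each site, which by the previous step contribute $U,D,R$ at the left site and $U,D,L$ at the right site. Using $L+R+U+D=0$ at each site, this becomes
\[
h\cdot(AA)=L\otimes A+A\otimes R-\bigl[(U+D)\otimes A+A\otimes(U+D)\bigr]-\ldots
\]
corrected by the correct bookkeeping. What remains to show is that the vertical contributions $U\otimes A+A\otimes U$ and $D\otimes A+A\otimes D$, restricted to the two-site block, can be rewritten as $X$ on the top bond pair minus $X$ on the bottom bond pair. Here the key extra input is a third contraction, with a $2\times 2$ plaquette of sites open. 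It compares the two horizontal-pair equations (top row and bottom row) and the two vertical-pair equations. Subtracting these shows that the upper-boundary contribution of the top pair equals the lower-boundary contribution of the bottom pair. By injectivity, applying the inverse on one row then isolates a tensor supported on the two vertical legs between the rows, a $D_v^2\times D_v^2$ matrix. This is the step where I would define $X$, mirroring how $B=-\tilde B$ was derived in the MPS proof via two differently separated insertions. $Y$ is obtained symmetrically.

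The main obstacle is this last step. One has to show that the two-site combination $U\otimes A+A\otimes U$ genuinely factorizes through a virtual operator acting only on the two upward legs, and that the same operator appears, with opposite sign, from below. I would first try to prove that for the two-site block the upward-leg dependence is captured by defining $X$ as the contraction of the plaquette equation with inverses on the bottom row, and then verify consistency using the $2\times2$ open equation. The residual ambiguity, namely adding multiples of $A$ or of the identity to $X$, should be handled exactly as in the redundancy theorem for $B$.
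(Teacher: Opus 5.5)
\textbf{There is a genuine gap in the necessity direction.} Your sufficiency argument is fine (on the torus). The necessity part, however, rests on an incorrect list of surviving terms, and the terms you dropped are exactly the ones that produce $X$ and $Y$.

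\emph{The missing terms.} When every site except $a$ carries an inverse, consider a bond term on $(b,c)$ with $b$ a neighbour of $a$ and $c\neq a$. It does \emph{not} reduce to a scalar. The inverse-layer leg of $b$ pointing towards $a$ has no partner, so after tracing the other legs you are left with $A_a$ carrying a $D\times D$ matrix on its leg towards $b$. Its trace reproduces the scalar of the fully contracted term, but in general it is not proportional to the identity. This is the same phenomenon as in the MPS proof with $L=1$, $k=2$, where the terms on the bonds adjacent to, but not containing, the open site survive as virtual matrices. Consequently:
\begin{itemize}
\item The single-site equation has $16$ terms: $l,r,u,d$ plus twelve one-leg insertions $x_l,y_{ul},y_{dl},\dots$. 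You can absorb these into $L,R,U,D$, so $L+R+U+D=0$ survives.
\item The two-site equation has $23$ terms, not the seven you list. Among them are the horizontal terms on the bonds directly above and below the open pair. These act on both up-legs (resp.\ both down-legs) at once and give genuinely non-product $D_v^2\times D_v^2$ operators $x_u$, $x_d$.
\end{itemize}
A sanity check confirms something is missing. With your term list, using $R+U+D=-L$ and $L+U+D=-R$ removes all $U,D$ contributions (your display double-counts them). What remains is $h\cdot(AA)=L\otimes A+A\otimes R$ up to a multiple of $AA$, with no room for $X$ at all.

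\emph{The proposed source of $X$ is the wrong object.} Rewriting $U\otimes A+A\otimes U$ as a virtual insertion on the up-legs cannot be the mechanism. $U$ and $D$ are generic rank-5 tensors with physical dependence, and in the correct bookkeeping they are simply absorbed into $L$ and $R$.

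\emph{What is needed instead.} Build $X$ from the edge contributions,
$X=x_u-x_{ur}\otimes\id-\id\otimes x_{ul}$, where $x_{ul},x_{ur}$ are the one-leg matrices from the horizontal bonds attached to the two upper neighbours from outside. Then show that the bottom combination $x_d-x_{dr}\otimes\id-\id\otimes x_{dl}$ equals $-X$ up to a multiple of the identity, which can be shifted into $L$ or $R$.

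The paper gets this not from an open $2\times 2$ block of sites. It contracts with inverses on \emph{all} sites while leaving uncontracted the inverse-layer indices across one vertical edge, and then across two horizontally adjacent vertical edges:
\begin{itemize}
\item The first gives a matrix identity among $m,y_u,y_d,x_{ul},x_{ur},x_{dl},x_{dr}$.
\item Combined with it, the second gives $x_u+x_d-(x_{ur}+x_{dr})\otimes\id-\id\otimes(x_{ul}+x_{dl})\propto\id\otimes\id$. This is precisely the top/bottom consistency you need.
\end{itemize}
Only by combining these relations with the full single-site and two-site equations does the horizontal identity take the stated form; $Y$ is obtained symmetrically. You leave exactly this step as an unresolved obstacle, and your intermediate equations lack the terms it acts on. So the proposal does not establish the existence of $X$ and $Y$.
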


\begin{proof}
  Let us consider the equation $O\ket{\psi} = 0$ and apply the inverse of $A$ on every lattice site, and then contract all neighboring indices of the resulting tensor network. We obtain that 
  \begin{equation*}
    \begin{tikzpicture}[z={(-4.85mm,-4.85mm)}, baseline = 4mm]
        \draw (-0.5,0) rectangle (1.5,1);
        \foreach \x in {0,1}{
          \draw (\x,0,-0.5) -- (\x,0,0.5);
          \draw (\x,1,-0.5) -- (\x,1,0.5);
          \draw (\x,0,-0.5) -- (\x,1,-0.5);
          \draw (\x,0) --++ (0,1);
          \node[tensor] at (\x,0) {};
          \node[tensor] at (\x,1) {};
        }
        \draw[fill=gray, rounded corners=1mm] (-0.2,0.4) rectangle (1.2,0.6);  
        \draw (0,0,0.5) -- (0,1,0.5);
        \draw (1,0,0.5) -- (1,1,0.5);
    \end{tikzpicture} + 
    \begin{tikzpicture}[z={(-4.3mm,-4.85mm)}, baseline = 3mm]
        \draw[canvas is yz plane at x=0] (0,-0.5) rectangle (1,1.5);
        \foreach \z in {0,1}{
          \draw (-0.5,0,\z) -- (0.5,0,\z);
          \draw (-0.5,0,\z) -- (-0.5,1,\z);
          \draw (0,0,\z) --++ (0,1);
          \node[tensor] at (0,0,\z) {};
          \node[tensor] at (0,1,\z) {};
        }
        \draw[fill=gray, rounded corners=1mm,canvas is yz plane at x=0] (0.4,-0.2) rectangle (0.6,1.2);
        \draw (-0.5,1,1) -- (0.5,1,1);
        \draw (-0.5,1,0) -- (0.5,1,0);
        \draw (0.5,0,0) -- (0.5,1,0);
        \draw (0.5,0,1) -- (0.5,1,1);
    \end{tikzpicture} = 0.
  \end{equation*}
  Let 
  \begin{equation*}
    \lambda = 
    \begin{tikzpicture}[z={(-4.85mm,-4.85mm)}, baseline = 4mm]
        \draw (-0.5,0) rectangle (1.5,1);
        \foreach \x in {0,1}{
          \draw (\x,0,-0.5) -- (\x,0,0.5);
          \draw (\x,1,-0.5) -- (\x,1,0.5);
          \draw (\x,0,-0.5) -- (\x,1,-0.5);
          \draw (\x,0) --++ (0,1);
          \node[tensor] at (\x,0) {};
          \node[tensor] at (\x,1) {};
        }
        \draw[fill=gray, rounded corners=1mm] (-0.2,0.4) rectangle (1.2,0.6);  
        \draw (0,0,0.5) -- (0,1,0.5);
        \draw (1,0,0.5) -- (1,1,0.5);
    \end{tikzpicture} = -  \ 
    \begin{tikzpicture}[z={(-4.3mm,-4.85mm)}, baseline = 3mm]
        \draw[canvas is yz plane at x=0] (0,-0.5) rectangle (1,1.5);
        \foreach \z in {0,1}{
          \draw (-0.5,0,\z) -- (0.5,0,\z);
          \draw (-0.5,0,\z) -- (-0.5,1,\z);
          \draw (0,0,\z) --++ (0,1);
          \node[tensor] at (0,0,\z) {};
          \node[tensor] at (0,1,\z) {};
        }
        \draw[fill=gray, rounded corners=1mm,canvas is yz plane at x=0] (0.4,-0.2) rectangle (0.6,1.2);
        \draw (-0.5,1,1) -- (0.5,1,1);
        \draw (-0.5,1,0) -- (0.5,1,0);
        \draw (0.5,0,0) -- (0.5,1,0);
        \draw (0.5,0,1) -- (0.5,1,1);
    \end{tikzpicture} \ .
  \end{equation*}
  Next we consider again the equation $O\ket{\psi} = 0$ and apply the inverse of $A$ on every lattice site, and then contract all neighboring indices except one pair over a vertical edge. There are 7 terms of the local operator $O$ that don't get completely contracted. The location of them w.r.t.\ the special edge is depicted below:
  \begin{equation*}
      \begin{tikzpicture}
          \draw (-0.5,-0.5) grid (4.5,3.5);
          \filldraw[red] (2,1.5) circle (3pt);
          \foreach \x in {(2,1.5), (2,2.5),(2,0.5), (2.5,1),  (2.5,2),  (1.5,1), (1.5,2)} {
            \filldraw \x circle (2pt);
          }
          \node[anchor=west,inner sep=5pt] at (2,1.5) {$m$}; 
          \node[anchor=west,inner sep=4pt] at (2,2.5) {$y_u$};           
          \node[anchor=west,inner sep=4pt] at (2,0.5) {$y_d$};
          \node[anchor=south,inner sep=4pt] at (2.5,1) {$x_{dr}$};
          \node[anchor=south,inner sep=4pt] at (2.5,2) {$x_{ur}$};
          \node[anchor=south,inner sep=4pt] at (1.5,1) {$x_{dl}$};
          \node[anchor=south,inner sep=4pt] at (1.5,2) {$x_{ul}$};
      \end{tikzpicture}
  \end{equation*}
  We labeled the links with the terms each operator term contributes with. On an $N \times M$ lattice we arrive at the equation
  \begin{equation*}
      x_{ul} + x_{dl} + x_{dr} + x_{ur} + y_u + y_d + m + (NM - 4)\cdot \lambda \cdot \id  + (NM-3) \cdot (-\lambda) \cdot \id= 0, 
  \end{equation*}
  as there are $NM - 4$ fully contracted horizontal terms and $NM - 3$ fully contracted vertical terms. That is, 
  \begin{equation}\label{eq:PEPS_NN_V}
      x_{ul} + x_{dl} + x_{dr} + x_{ur} + y_u + y_d + m - \lambda \cdot \id = 0. 
  \end{equation}

  Let us consider again the equation $O\ket{\psi} = 0$ and apply the inverse of $A$ on every lattice site, and then contract all neighboring indices except one pair over a horizontal edge. There are 7 terms of the local operator $O$ that don't get completely contracted. The location of them w.r.t.\ the special edge is depicted below:
  \begin{equation*}
      \begin{tikzpicture}
          \draw (-0.5,-0.5) grid (3.5,2.5);
          \filldraw[red] (1.5,1) circle (3pt);
          \foreach \x in {(1.5,1), (2.5,1),(0.5,1), (1,1.5),  (2,1.5),  (1,0.5), (2,0.5)} {
            \filldraw \x circle (2pt);
          }
          \node[anchor=south,inner sep=5pt] at (1.5,1) {$n$}; 
          \node[anchor=south,inner sep=4pt] at (2.5,1) {$x_l$};           
          \node[anchor=south,inner sep=4pt] at (0.5,1) {$x_r$};
          \node[anchor=west,inner sep=4pt] at (1,1.5) {$y_{ul}$};
          \node[anchor=west,inner sep=4pt] at (2,1.5) {$y_{ur}$};
          \node[anchor=west,inner sep=4pt] at (1,0.5) {$y_{dl}$};
          \node[anchor=west,inner sep=4pt] at (2,0.5) {$y_{dr}$};
      \end{tikzpicture}
  \end{equation*}
  We labeled the links with the terms each operator term contributes with. Similar to the above, we obtain the equation
  \begin{equation}\label{eq:PEPS_NN_H}
      y_{ul} + y_{dl} + y_{dr} + y_{ur} + x_l + x_r + n + \lambda \cdot \id = 0. 
  \end{equation}

  Let us now consider again the equation $O\ket{\psi} = 0$ and apply the inverse of $A$ on every lattice site, and then contract all neighboring indices except two pairs over two neighboring vertical edges. There are 12 terms of the local operator $O$ that don't get completely contracted. The location of them w.r.t.\ the two special edges is depicted below:
  \begin{equation*}
      \begin{tikzpicture}
          \draw (-0.5,-0.5) grid (5.5,3.5);
          \filldraw[red] (2,1.5) circle (3pt);
          \filldraw[red] (3,1.5) circle (3pt);
          \foreach \x in {      (1.5,1), (1.5,2), 
                           (2,1.5), (2,2.5),(2,0.5), 
                              (2.5,1),  (2.5,2),
                           (3,1.5), (3,2.5),(3,0.5), 
                              (3.5,1),  (3.5,2)} {
            \filldraw \x circle (2pt);
          }
          \node[anchor=west,inner sep=5pt] at (2,1.5) {$m$}; 
          \node[anchor=west,inner sep=4pt] at (2,2.5) {$y_u$};           
          \node[anchor=west,inner sep=4pt] at (2,0.5) {$y_d$};
          \node[anchor=south,inner sep=4pt] at (2.5,1) {$x_d$};
          \node[anchor=south,inner sep=4pt] at (2.5,2) {$x_u$};
          \node[anchor=south,inner sep=4pt] at (1.5,1) {$x_{dl}$};
          \node[anchor=south,inner sep=4pt] at (1.5,2) {$x_{ul}$};
          \node[anchor=west,inner sep=5pt] at (3,1.5) {$m$}; 
          \node[anchor=west,inner sep=4pt] at (3,2.5) {$y_u$};           
          \node[anchor=west,inner sep=4pt] at (3,0.5) {$y_d$};
          \node[anchor=south,inner sep=4pt] at (3.5,1) {$x_{dr}$};
          \node[anchor=south,inner sep=4pt] at (3.5,2) {$x_{ur}$};
      \end{tikzpicture}
  \end{equation*}
  We labeled the links with the terms each operator term contributes with. On an $N \times M$ lattice we arrive at the equation  
  \begin{equation*}
    x_u + x_d + (x_{ul} + x_{dl} + y_u + m + y_d) \otimes 1 + 1 \otimes (y_u + m + y_d + x_{ur} + x_{dr}) + (NM-6)\lambda \cdot 1\otimes 1 - (NM-6)\lambda\cdot 1\otimes 1 = 0
  \end{equation*}
  Using \cref{eq:PEPS_NN_V}, we obtain 
  \begin{equation}\label{eq:PEPS_NN_xu_xd}
    x_u + x_d - (x_{ur} + x_{dr}) \otimes 1 - 1 \otimes (x_{ul} + x_{dl}) + 2\lambda \ 1\otimes 1 = 0.
  \end{equation}

  Let us now consider again the equation $O\ket{\psi} = 0$ and apply the inverse of $A$ on every lattice site, and then contract all neighboring indices except two pairs over two neighboring horizontal edges. There are 12 terms of the local operator $O$ that don't get completely contracted. The location of them w.r.t.\ the two special edges is depicted below:
  \begin{equation*}
      \begin{tikzpicture}[rotate=90]
          \draw (0.5,-0.5) grid (4.5,3.5);
          \filldraw[red] (2,1.5) circle (3pt);
          \filldraw[red] (3,1.5) circle (3pt);
          \foreach \x in {      (1.5,1), (1.5,2), 
                           (2,1.5), (2,2.5),(2,0.5), 
                              (2.5,1),  (2.5,2),
                           (3,1.5), (3,2.5),(3,0.5), 
                              (3.5,1),  (3.5,2)} {
            \filldraw \x circle (2pt);
          }
          \node[anchor=south,inner sep=5pt] at (2,1.5) {$n$}; 
          \node[anchor=south,inner sep=4pt] at (2,2.5) {$x_l$};           
          \node[anchor=south,inner sep=4pt] at (2,0.5) {$x_r$};
          \node[anchor=west,inner sep=4pt] at (2.5,1) {$y_l$};
          \node[anchor=west,inner sep=4pt] at (2.5,2) {$y_r$};
          \node[anchor=west,inner sep=4pt] at (1.5,1) {$y_{dr}$};
          \node[anchor=west,inner sep=4pt] at (1.5,2) {$y_{dl}$};
          \node[anchor=south,inner sep=5pt] at (3,1.5) {$m$}; 
          \node[anchor=south,inner sep=4pt] at (3,2.5) {$x_l$};           
          \node[anchor=south,inner sep=4pt] at (3,0.5) {$x_r$};
          \node[anchor=west,inner sep=4pt] at (3.5,1) {$y_{ur}$};
          \node[anchor=west,inner sep=4pt] at (3.5,2) {$y_{ul}$};
      \end{tikzpicture}
  \end{equation*}
  We labeled the links with the terms each operator term contributes with. Just as above, we arrive at  
  \begin{equation*}
    y_l + y_r + (y_{ul} + y_{ur} + x_l + m + x_r) \otimes 1 + 1 \otimes (x_l + m + x_r + y_{dl} + y_{dr}) = 0
  \end{equation*}
  Using \cref{eq:PEPS_NN_H}, we obtain 
  \begin{equation}\label{eq:PEPS_NN_xu_xd}
    y_l + y_l - (y_{dl} + y_{dr}) \otimes 1 - 1 \otimes (y_{ul} + y_{ur})  = 0.
  \end{equation}

  Let us now consider again the equation $O\ket{\psi} = 0$ and apply the inverse of $A$ on every lattice site except one, and then contract all neighboring indices. The result is a PEPS tensor of the same shape as $A$. There are 16 terms of the local operator $O$ that don't get completely contracted. The location of them w.r.t.\ the special vertex is depicted below:
  \begin{equation*}
      \begin{tikzpicture}
          \draw (-0.5,-0.5) grid (4.5,4.5);
          \filldraw[red] (2,2) circle (3pt);
          \foreach \x in {(0.5,2),
                          (1,1.5), (1,2.5),
                          (1.5,1), (1.5,2), (1.5,3),
                          (2,0.5), (2,1.5), (2,2.5), (2,3.5), 
                          (2.5,1),  (2.5,2), (2.5,3),
                          (3,1.5), (3,2.5),
                          (3.5,2)} {
            \filldraw \x circle (2pt);
          }
          \node[anchor=south,inner sep=4pt] at (0.5,2) {$x_l$};
          \node[anchor=west,inner sep=5pt] at (1,1.5) {$y_{dl}$}; 
          \node[anchor=west,inner sep=5pt] at (1,2.5) {$y_{ul}$}; 
          \node[anchor=south,inner sep=4pt] at (1.5,1) {$x_{dl}$};
          \node[anchor=south,inner sep=4pt] at (1.5,2) {$l$};
          \node[anchor=south,inner sep=4pt] at (1.5,3) {$x_{ul}$};
          \node[anchor=west,inner sep=4pt] at (2,0.5) {$y_d$};
          \node[anchor=west,inner sep=5pt] at (2,1.5) {$d$}; 
          \node[anchor=west,inner sep=5pt] at (2,2.5) {$u$}; 
          \node[anchor=west,inner sep=4pt] at (2,3.5) {$y_u$};
          \node[anchor=south,inner sep=4pt] at (2.5,1) {$x_{dr}$};
          \node[anchor=south,inner sep=4pt] at (2.5,2) {$r$};
          \node[anchor=south,inner sep=4pt] at (2.5,3) {$x_{ur}$};
          \node[anchor=west,inner sep=5pt] at (3,1.5) {$y_{dr}$}; 
          \node[anchor=west,inner sep=5pt] at (3,2.5) {$y_{ur}$}; 
          \node[anchor=south,inner sep=4pt] at (3.5,2) {$x_r$};
      \end{tikzpicture}
  \end{equation*}
  As the number of completely contracted horizontal and vertical terms are the same, we obtain the equation
  \begin{equation}\label{eq:PEPS_NN_single_tensor}
      \begin{tikzpicture}[z ={(4.85mm,4.85mm)}]
          \draw (-0.5, 0) -- (0.5,0);
          \draw (0,-0.5) -- (0,0.5);
          \draw (0,0,0) -- (0,0,0.5);
          \node[tensor,red] at (0,0) {};
          \node[anchor=north west, inner sep=4pt, align=left] at (-0,0) {$l+r+$\\$u+d$};
      \end{tikzpicture} + 
      \begin{tikzpicture}[z ={(4.85mm,4.85mm)}]
          \draw (-1.5, 0) -- (0.5,0);
          \draw (0,-0.5) -- (0,0.5);
          \draw (0,0,0) -- (0,0,0.5);
          \node[tensor] at (0,0) {};
          \filldraw[red] (-1,0) circle (2pt);
          \node[anchor=south, inner sep=4pt] at (-1,0) {$x_l + y_{ul} + y_{dl}$};
      \end{tikzpicture} + 
      \begin{tikzpicture}[z ={(4.85mm,4.85mm)}]
          \draw (-0.5, 0) -- (1.5,0);
          \draw (0,-0.5) -- (0,0.5);
          \draw (0,0,0) -- (0,0,0.5);
          \node[tensor] at (0,0) {};
          \filldraw[red] (1,0) circle (2pt);
          \node[anchor=north, inner sep=4pt] at (1,0) {$x_r + y_{ur} + y_{dr}$};
      \end{tikzpicture} + 
      \begin{tikzpicture}[z ={(4.85mm,4.85mm)}]
          \draw (-0.5, 0) -- (0.5,0);
          \draw (0,-0.5) -- (0,1);
          \draw (0,0,0) -- (0,0,0.5);
          \node[tensor] at (0,0) {};
          \filldraw[red] (0,0.5) circle (2pt);
          \node[anchor=east, inner sep=4pt,align=center] at (0,0.5) {$y_u + $\\$x_{ul} + x_{ur}$};
      \end{tikzpicture} +
      \begin{tikzpicture}[z ={(4.85mm,4.85mm)}]
          \draw (-0.5, 0) -- (0.5,0);
          \draw (0,0.5) -- (0,-1);
          \draw (0,0,0) -- (0,0,0.5);
          \node[tensor] at (0,0) {};
          \filldraw[red] (0,-0.5) circle (2pt);
          \node[anchor=west, inner sep=4pt,align=center] at (0,-0.5) {$y_d +$\\$ x_{dl} + x_{dr}$};
      \end{tikzpicture} = 0.
  \end{equation}
  Let us now consider again the equation $O\ket{\psi} = 0$ and apply the inverse of $A$ on every lattice site except for two neighboring ones, and then contract all neighboring indices. The result is a contraction of two PEPS tensor of the same shape as $A$. There are 23 terms of the local operator $O$ that don't get completely contracted. The location of them w.r.t.\ the special vertex is depicted below:
  \begin{equation*}
      \begin{tikzpicture}
          \draw (-0.5,-0.5) grid (5.5,4.5);
          \filldraw[red] (2,2) circle (3pt);
          \filldraw[red] (3,2) circle (3pt);
          \foreach \x in {(0.5,2),
                          (1,1.5), (1,2.5),
                          (1.5,1), (1.5,2), (1.5,3),
                          (2,0.5), (2,1.5), (2,2.5), (2,3.5), 
                          (2.5,1),  (2.5,2), (2.5,3),
                          (3,0.5), (3,1.5), (3,2.5), (3,3.5), 
                          (3.5,1),  (3.5,2), (3.5,3),
                          (4,1.5), (4,2.5),
                          (4.5,2)} {
            \filldraw \x circle (2pt);
          }
          \node[anchor=south,inner sep=4pt] at (0.5,2) {$x_l$};
          \node[anchor=west,inner sep=5pt] at (1,1.5) {$y_{dl}$}; 
          \node[anchor=west,inner sep=5pt] at (1,2.5) {$y_{ul}$}; 
          \node[anchor=south,inner sep=4pt] at (1.5,1) {$x_{dl}$};
          \node[anchor=south,inner sep=4pt] at (1.5,2) {$l$};
          \node[anchor=south,inner sep=4pt] at (1.5,3) {$x_{ul}$};
          \node[anchor=west,inner sep=4pt] at (2,0.5) {$y_d$};
          \node[anchor=west,inner sep=5pt] at (2,1.5) {$d$}; 
          \node[anchor=west,inner sep=5pt] at (2,2.5) {$u$}; 
          \node[anchor=west,inner sep=4pt] at (2,3.5) {$y_u$};
          \node[anchor=south,inner sep=4pt] at (2.5,1) {$x_d$};
          \node[anchor=south,inner sep=4pt] at (2.5,2) {$h$};
          \node[anchor=south,inner sep=4pt] at (2.5,3) {$x_u$};
          \node[anchor=west,inner sep=4pt] at (3,0.5) {$y_d$};
          \node[anchor=west,inner sep=5pt] at (3,1.5) {$d$}; 
          \node[anchor=west,inner sep=5pt] at (3,2.5) {$u$}; 
          \node[anchor=west,inner sep=4pt] at (3,3.5) {$y_u$};
          \node[anchor=south,inner sep=4pt] at (3.5,1) {$x_{dr}$};
          \node[anchor=south,inner sep=4pt] at (3.5,2) {$r$};
          \node[anchor=south,inner sep=4pt] at (3.5,3) {$x_{ur}$};
          \node[anchor=west,inner sep=5pt] at (4,1.5) {$y_{dr}$}; 
          \node[anchor=west,inner sep=5pt] at (4,2.5) {$y_{ur}$}; 
          \node[anchor=south,inner sep=4pt] at (4.5,2) {$x_r$};
      \end{tikzpicture}
  \end{equation*}
  As there is one more completely contracted horizontal term than vertical term, we obtain
  \begin{equation*}
        \begin{tikzpicture}[z={(-4.85mm,-4.85mm)}]
            \draw (-0.5,0) -- (1.5,0);
            \foreach \x in {0,1}{
              \draw (\x,0,-0.5) -- (\x,0,0.5);
              \draw (\x,0) --++ (0,1);
              \node[tensor] at (\x,0) {};
            }
            \draw[fill=gray, rounded corners=1mm] (-0.2,0.4) rectangle (1.2,0.6);  
        \end{tikzpicture} +
        \begin{tikzpicture}[z={(-4.85mm,-4.85mm)}]
            \draw (-0.5,0) -- (1.5,0);
            \foreach \x in {0,1}{
              \draw (\x,0,-0.5) -- (\x,0,0.5);
              \draw (\x,0) --++ (0,0.6);
            }
          \node[tensor,red, label = below right:$\hat L$] at (0,0) {};
          \node[tensor] at (1,0) {};
        \end{tikzpicture} +
        \begin{tikzpicture}[z={(-4.85mm,-4.85mm)}]
            \draw (-0.5,0) -- (1.5,0);
            \foreach \x in {0,1}{
              \draw (\x,0,-0.5) -- (\x,0,0.5);
              \draw (\x,0) --++ (0,0.6);
            }
          \node[tensor] at (0,0) {};
          \node[tensor,red,label = below right:$\hat R$] at (1,0) {};
        \end{tikzpicture} +
        \begin{tikzpicture}[z={(-4.85mm,-4.85mm)}]
            \draw (-0.5,0) -- (1.5,0);
            \foreach \x in {0,1}{
              \draw (\x,0,-1) -- (\x,0,1);
              \draw (\x,0) --++ (0,0.6);
              \node[tensor] at (\x,0) {};
            }
            \filldraw[red,canvas is xz plane at y=0, rounded corners=1mm]  (-0.2,0.4) rectangle (1.2,0.6);
            \node at (0.1,-0.5) {$u_x$};
        \end{tikzpicture}  +
        \begin{tikzpicture}[z={(-4.85mm,-4.85mm)}]
            \draw (-0.5,0) -- (1.5,0);
            \foreach \x in {0,1}{
              \draw (\x,0,-1) -- (\x,0,1);
              \draw (\x,0) --++ (0,0.6);
              \node[tensor] at (\x,0) {};
            }
            \filldraw[red,canvas is xz plane at y=0, rounded corners=1mm]  (-0.2,-0.4) rectangle (1.2,-0.6);
            \node at (0.8,0.5) {$d_x$};
        \end{tikzpicture} = 0, 
  \end{equation*}
  where 
  \begin{align*}
      \begin{tikzpicture}[z ={(4.85mm,4.85mm)}]
          \draw (-0.5, 0) -- (0.5,0);
          \draw (0,-0.5) -- (0,0.5);
          \draw (0,0,0) -- (0,0,0.5);
          \node[tensor,red] at (0,0) {};
          \node[anchor=north west, inner sep=4pt, align=left] at (-0,0) {$\hat L$};
      \end{tikzpicture} &= 
      \begin{tikzpicture}[z ={(4.85mm,4.85mm)}]
          \draw (-0.5, 0) -- (0.5,0);
          \draw (0,-0.5) -- (0,0.5);
          \draw (0,0,0) -- (0,0,0.5);
          \node[tensor,red] at (0,0) {};
          \node[anchor=north west, inner sep=4pt, align=center] at (-0,0) {$l+$\\$u+d$};
      \end{tikzpicture} + 
      \begin{tikzpicture}[z ={(4.85mm,4.85mm)}]
          \draw (-1.5, 0) -- (0.5,0);
          \draw (0,-0.5) -- (0,0.5);
          \draw (0,0,0) -- (0,0,0.5);
          \node[tensor] at (0,0) {};
          \filldraw[red] (-1,0) circle (2pt);
          \node[anchor=south, inner sep=4pt] at (-1,0) {$x_l + y_{ul} + y_{dl}$};
      \end{tikzpicture} + \lambda \  
      \begin{tikzpicture}[z ={(4.85mm,4.85mm)}]
          \draw (-0.5, 0) -- (0.5,0);
          \draw (0,-0.5) -- (0,0.5);
          \draw (0,0,0) -- (0,0,0.5);
          \node[tensor] at (0,0) {};
      \end{tikzpicture} + 
      \begin{tikzpicture}[z ={(4.85mm,4.85mm)}]
          \draw (-0.5, 0) -- (0.5,0);
          \draw (0,-0.5) -- (0,1);
          \draw (0,0,0) -- (0,0,0.5);
          \node[tensor] at (0,0) {};
          \filldraw[red] (0,0.5) circle (2pt);
          \node[anchor=east, inner sep=4pt,align=center] at (0,0.5) {$y_u + x_{ul}$};
      \end{tikzpicture} +
      \begin{tikzpicture}[z ={(4.85mm,4.85mm)}]
          \draw (-0.5, 0) -- (0.5,0);
          \draw (0,0.5) -- (0,-1);
          \draw (0,0,0) -- (0,0,0.5);
          \node[tensor] at (0,0) {};
          \filldraw[red] (0,-0.5) circle (2pt);
          \node[anchor=west, inner sep=4pt,align=center] at (0,-0.5) {$y_d + x_{dl}$};
      \end{tikzpicture} ,\\
      \begin{tikzpicture}[z ={(4.85mm,4.85mm)}]
          \draw (-0.5, 0) -- (0.5,0);
          \draw (0,-0.5) -- (0,0.5);
          \draw (0,0,0) -- (0,0,0.5);
          \node[tensor,red] at (0,0) {};
          \node[anchor=north west, inner sep=4pt, align=left] at (-0,0) {$\hat R$};
      \end{tikzpicture} &= 
      \begin{tikzpicture}[z ={(4.85mm,4.85mm)}]
          \draw (-0.5, 0) -- (0.5,0);
          \draw (0,-0.5) -- (0,0.5);
          \draw (0,0,0) -- (0,0,0.5);
          \node[tensor,red] at (0,0) {};
          \node[anchor=north west, inner sep=4pt, align=center] at (-0,0) {$r+$\\$u+d$};
      \end{tikzpicture} + 
      \begin{tikzpicture}[z ={(4.85mm,4.85mm)}]
          \draw (-1.5, 0) -- (0.5,0);
          \draw (0,-0.5) -- (0,0.5);
          \draw (0,0,0) -- (0,0,0.5);
          \node[tensor] at (0,0) {};
          \filldraw[red] (-1,0) circle (2pt);
          \node[anchor=south, inner sep=4pt] at (-1,0) {$x_r + y_{ur} + y_{dr}$};
      \end{tikzpicture} + 
      \begin{tikzpicture}[z ={(4.85mm,4.85mm)}]
          \draw (-0.5, 0) -- (0.5,0);
          \draw (0,-0.5) -- (0,1);
          \draw (0,0,0) -- (0,0,0.5);
          \node[tensor] at (0,0) {};
          \filldraw[red] (0,0.5) circle (2pt);
          \node[anchor=east, inner sep=4pt,align=center] at (0,0.5) {$y_u + x_{ur}$};
      \end{tikzpicture} +
      \begin{tikzpicture}[z ={(4.85mm,4.85mm)}]
          \draw (-0.5, 0) -- (0.5,0);
          \draw (0,0.5) -- (0,-1);
          \draw (0,0,0) -- (0,0,0.5);
          \node[tensor] at (0,0) {};
          \filldraw[red] (0,-0.5) circle (2pt);
          \node[anchor=west, inner sep=4pt,align=center] at (0,-0.5) {$y_d + x_{dr}$};
      \end{tikzpicture} .
  \end{align*}
  Using now \cref{eq:PEPS_NN_xu_xd} let us define 
  \begin{equation*}
      X = x_u  - x_{ur}  \otimes 1 - 1 \otimes x_{ul}  = - x_d  + x_{dr} \otimes 1 + 1 \otimes x_{dl}.
  \end{equation*}
  We can then write
  \begin{equation*}
        \begin{tikzpicture}[z={(-4.85mm,-4.85mm)}]
            \draw (-0.5,0) -- (1.5,0);
            \foreach \x in {0,1}{
              \draw (\x,0,-0.5) -- (\x,0,0.5);
              \draw (\x,0) --++ (0,1);
              \node[tensor] at (\x,0) {};
            }
            \draw[fill=gray, rounded corners=1mm] (-0.2,0.4) rectangle (1.2,0.6);  
        \end{tikzpicture} =
        \begin{tikzpicture}[z={(-4.85mm,-4.85mm)}]
            \draw (-0.5,0) -- (1.5,0);
            \foreach \x in {0,1}{
              \draw (\x,0,-0.5) -- (\x,0,0.5);
              \draw (\x,0) --++ (0,0.6);
            }
          \node[tensor,red, label = below right:$L$] at (0,0) {};
          \node[tensor] at (1,0) {};
        \end{tikzpicture} +
        \begin{tikzpicture}[z={(-4.85mm,-4.85mm)}]
            \draw (-0.5,0) -- (1.5,0);
            \foreach \x in {0,1}{
              \draw (\x,0,-0.5) -- (\x,0,0.5);
              \draw (\x,0) --++ (0,0.6);
            }
          \node[tensor] at (0,0) {};
          \node[tensor,red,label = below right:$R$] at (1,0) {};
        \end{tikzpicture} -
        \begin{tikzpicture}[z={(-4.85mm,-4.85mm)}]
            \draw (-0.5,0) -- (1.5,0);
            \foreach \x in {0,1}{
              \draw (\x,0,-1) -- (\x,0,1);
              \draw (\x,0) --++ (0,0.6);
              \node[tensor] at (\x,0) {};
            }
            \filldraw[red,canvas is xz plane at y=0, rounded corners=1mm]  (-0.2,0.4) rectangle (1.2,0.6);
            \node at (0.1,-0.5) {$X$};
        \end{tikzpicture}  +
        \begin{tikzpicture}[z={(-4.85mm,-4.85mm)}]
            \draw (-0.5,0) -- (1.5,0);
            \foreach \x in {0,1}{
              \draw (\x,0,-1) -- (\x,0,1);
              \draw (\x,0) --++ (0,0.6);
              \node[tensor] at (\x,0) {};
            }
            \filldraw[red,canvas is xz plane at y=0, rounded corners=1mm]  (-0.2,-0.4) rectangle (1.2,-0.6);
            \node at (0.8,0.5) {$X$};
        \end{tikzpicture}, 
  \end{equation*}
  with
  \begin{align*}
      \begin{tikzpicture}[z ={(4.85mm,4.85mm)}]
          \draw (-0.5, 0) -- (0.5,0);
          \draw (0,-0.5) -- (0,0.5);
          \draw (0,0,0) -- (0,0,0.5);
          \node[tensor,red] at (0,0) {};
          \node[anchor=north west, inner sep=4pt, align=left] at (-0,0) {$L$};
      \end{tikzpicture}& = -  
      \begin{tikzpicture}[z ={(4.85mm,4.85mm)}]
          \draw (-0.5, 0) -- (0.5,0);
          \draw (0,-0.5) -- (0,0.5);
          \draw (0,0,0) -- (0,0,0.5);
          \node[tensor,red] at (0,0) {};
          \node[anchor=north west, inner sep=4pt, align=center] at (-0,0) {$l+$\\$u+d$};
      \end{tikzpicture} -
      \begin{tikzpicture}[z ={(4.85mm,4.85mm)}]
          \draw (-1.5, 0) -- (0.5,0);
          \draw (0,-0.5) -- (0,0.5);
          \draw (0,0,0) -- (0,0,0.5);
          \node[tensor] at (0,0) {};
          \filldraw[red] (-1,0) circle (2pt);
          \node[anchor=south, inner sep=4pt] at (-1,0) {$x_l + y_{ul} + y_{dl}$};
      \end{tikzpicture} - \lambda \  
      \begin{tikzpicture}[z ={(4.85mm,4.85mm)}]
          \draw (-0.5, 0) -- (0.5,0);
          \draw (0,-0.5) -- (0,0.5);
          \draw (0,0,0) -- (0,0,0.5);
          \node[tensor] at (0,0) {};
      \end{tikzpicture} - 
      \begin{tikzpicture}[z ={(4.85mm,4.85mm)}]
          \draw (-0.5, 0) -- (0.5,0);
          \draw (0,-0.5) -- (0,1);
          \draw (0,0,0) -- (0,0,0.5);
          \node[tensor] at (0,0) {};
          \filldraw[red] (0,0.5) circle (2pt);
          \node[anchor=east, inner sep=4pt,align=center] at (0,0.5) {$y_u + $\\$x_{ul} + x_{ur}$};
      \end{tikzpicture} -
      \begin{tikzpicture}[z ={(4.85mm,4.85mm)}]
          \draw (-0.5, 0) -- (0.5,0);
          \draw (0,0.5) -- (0,-1);
          \draw (0,0,0) -- (0,0,0.5);
          \node[tensor] at (0,0) {};
          \filldraw[red] (0,-0.5) circle (2pt);
          \node[anchor=west, inner sep=4pt,align=center] at (0,-0.5) {$y_d +$\\$ x_{dl} + x_{dr}$};
      \end{tikzpicture}, \\
      \begin{tikzpicture}[z ={(4.85mm,4.85mm)}]
          \draw (-0.5, 0) -- (0.5,0);
          \draw (0,-0.5) -- (0,0.5);
          \draw (0,0,0) -- (0,0,0.5);
          \node[tensor,red] at (0,0) {};
          \node[anchor=north west, inner sep=4pt, align=left] at (-0,0) {$R$};
      \end{tikzpicture} &= -
      \begin{tikzpicture}[z ={(4.85mm,4.85mm)}]
          \draw (-0.5, 0) -- (0.5,0);
          \draw (0,-0.5) -- (0,0.5);
          \draw (0,0,0) -- (0,0,0.5);
          \node[tensor,red] at (0,0) {};
          \node[anchor=north west, inner sep=4pt, align=center] at (-0,0) {$r+$\\$u+d$};
      \end{tikzpicture} - 
      \begin{tikzpicture}[z ={(4.85mm,4.85mm)}]
          \draw (1.5, 0) -- (-0.5,0);
          \draw (0,-0.5) -- (0,0.5);
          \draw (0,0,0) -- (0,0,0.5);
          \node[tensor] at (0,0) {};
          \filldraw[red] (1,0) circle (2pt);
          \node[anchor=north, inner sep=4pt] at (1,0) {$x_r + y_{ur} + y_{dr}$};
      \end{tikzpicture} - 
      \begin{tikzpicture}[z ={(4.85mm,4.85mm)}]
          \draw (-0.5, 0) -- (0.5,0);
          \draw (0,-0.5) -- (0,1);
          \draw (0,0,0) -- (0,0,0.5);
          \node[tensor] at (0,0) {};
          \filldraw[red] (0,0.5) circle (2pt);
          \node[anchor=east, inner sep=4pt,align=center] at (0,0.5) {$y_u + $\\$x_{ul} + x_{ur}$};
      \end{tikzpicture} -
      \begin{tikzpicture}[z ={(4.85mm,4.85mm)}]
          \draw (-0.5, 0) -- (0.5,0);
          \draw (0,0.5) -- (0,-1);
          \draw (0,0,0) -- (0,0,0.5);
          \node[tensor] at (0,0) {};
          \filldraw[red] (0,-0.5) circle (2pt);
          \node[anchor=west, inner sep=4pt,align=center] at (0,-0.5) {$y_d +$\\$ x_{dl} + x_{dr}$};
      \end{tikzpicture}.
  \end{align*}
  Using \cref{eq:PEPS_NN_single_tensor}, these expressions simplify:
  \begin{align*}
      \begin{tikzpicture}[z ={(4.85mm,4.85mm)}]
          \draw (-0.5, 0) -- (0.5,0);
          \draw (0,-0.5) -- (0,0.5);
          \draw (0,0,0) -- (0,0,0.5);
          \node[tensor,red] at (0,0) {};
          \node[anchor=north west, inner sep=4pt, align=left] at (-0,0) {$L$};
      \end{tikzpicture}& = 
      \begin{tikzpicture}[z ={(4.85mm,4.85mm)}]
          \draw (-0.5, 0) -- (0.5,0);
          \draw (0,-0.5) -- (0,0.5);
          \draw (0,0,0) -- (0,0,0.5);
          \node[tensor,red] at (0,0) {};
          \node[anchor=north west, inner sep=4pt, align=center] at (-0,0) {$r$};
      \end{tikzpicture} +
      \begin{tikzpicture}[z ={(4.85mm,4.85mm)}]
          \draw (1.5, 0) -- (-0.5,0);
          \draw (0,-0.5) -- (0,0.5);
          \draw (0,0,0) -- (0,0,0.5);
          \node[tensor] at (0,0) {};
          \filldraw[red] (1,0) circle (2pt);
          \node[anchor=north, inner sep=4pt] at (1,0) {$x_r + y_{ur} + y_{dr}$};
      \end{tikzpicture} - \lambda \  
      \begin{tikzpicture}[z ={(4.85mm,4.85mm)}]
          \draw (-0.5, 0) -- (0.5,0);
          \draw (0,-0.5) -- (0,0.5);
          \draw (0,0,0) -- (0,0,0.5);
          \node[tensor] at (0,0) {};
      \end{tikzpicture} , \\
      \begin{tikzpicture}[z ={(4.85mm,4.85mm)}]
          \draw (-0.5, 0) -- (0.5,0);
          \draw (0,-0.5) -- (0,0.5);
          \draw (0,0,0) -- (0,0,0.5);
          \node[tensor,red] at (0,0) {};
          \node[anchor=north west, inner sep=4pt, align=left] at (-0,0) {$R$};
      \end{tikzpicture} &= 
      \begin{tikzpicture}[z ={(4.85mm,4.85mm)}]
          \draw (-0.5, 0) -- (0.5,0);
          \draw (0,-0.5) -- (0,0.5);
          \draw (0,0,0) -- (0,0,0.5);
          \node[tensor,red] at (0,0) {};
          \node[anchor=north west, inner sep=4pt, align=center] at (-0,0) {$l$};
      \end{tikzpicture} +
      \begin{tikzpicture}[z ={(4.85mm,4.85mm)}]
          \draw (-1.5, 0) -- (0.5,0);
          \draw (0,-0.5) -- (0,0.5);
          \draw (0,0,0) -- (0,0,0.5);
          \node[tensor] at (0,0) {};
          \filldraw[red] (-1,0) circle (2pt);
          \node[anchor=south, inner sep=4pt] at (-1,0) {$x_l + y_{ul} + y_{dl}$};
      \end{tikzpicture} .
  \end{align*}

  Let us now consider again the equation $O\ket{\psi} = 0$ and apply the inverse of $A$ on every lattice site except for two vertical neighbors, and then contract all neighboring indices. The result is a contraction of two PEPS tensor of the same shape as $A$. There are 23 terms of the local operator $O$ that don't get completely contracted. The location of them w.r.t.\ the special vertex is depicted below:
  \begin{equation*}
      \begin{tikzpicture}
          \draw (-0.5,-0.5) grid (4.5,5.5);
          \filldraw[red] (2,2) circle (3pt);
          \filldraw[red] (2,3) circle (3pt);
          \foreach \x in {(0.5,2), (0.5,3),
                          (1,1.5), (1,2.5), (1,3.5),
                          (1.5,1), (1.5,2), (1.5,3), (1.5,4),
                          (2,0.5), (2,1.5), (2,2.5), (2,3.5), (2,4.5), 
                          (2.5,1),  (2.5,2), (2.5,3), (2.5,4),
                          (3,1.5), (3,2.5), (3,3.5), 
                          (3.5,2), (3.5,3)} {
            \filldraw \x circle (2pt);
          }
          \node[anchor=south,inner sep=4pt] at (0.5,2) {$x_l$};
          \node[anchor=south,inner sep=4pt] at (0.5,3) {$x_l$};
          \node[anchor=west,inner sep=5pt] at (1,1.5) {$y_{dl}$}; 
          \node[anchor=west,inner sep=5pt] at (1,2.5) {$y_{l}$}; 
          \node[anchor=west,inner sep=5pt] at (1,3.5) {$y_{ul}$}; 
          \node[anchor=south,inner sep=4pt] at (1.5,1) {$x_{dl}$};
          \node[anchor=south,inner sep=4pt] at (1.5,2) {$l$};
          \node[anchor=south,inner sep=4pt] at (1.5,3) {$l$};
          \node[anchor=south,inner sep=4pt] at (1.5,4) {$x_{ul}$};
          \node[anchor=west,inner sep=4pt] at (2,0.5) {$y_d$};
          \node[anchor=west,inner sep=5pt] at (2,1.5) {$d$}; 
          \node[anchor=west,inner sep=5pt] at (2,2.5) {$v$}; 
          \node[anchor=west,inner sep=5pt] at (2,3.5) {$u$}; 
          \node[anchor=west,inner sep=4pt] at (2,4.5) {$y_u$};
          \node[anchor=south,inner sep=4pt] at (2.5,1) {$x_{dr}$};
          \node[anchor=south,inner sep=4pt] at (2.5,2) {$r$};
          \node[anchor=south,inner sep=4pt] at (2.5,3) {$r$};
          \node[anchor=south,inner sep=4pt] at (2.5,4) {$x_{ur}$};
          \node[anchor=west,inner sep=5pt] at (3,1.5) {$y_{dr}$}; 
          \node[anchor=west,inner sep=5pt] at (3,2.5) {$y_r$}; 
          \node[anchor=west,inner sep=4pt] at (3,3.5) {$y_{ur}$};
          \node[anchor=south,inner sep=4pt] at (3.5,2) {$y_{r}$};
          \node[anchor=south,inner sep=4pt] at (3.5,3) {$y_r$};
      \end{tikzpicture}
  \end{equation*}
  As there is one more completely contracted horizontal term than vertical term, we obtain
  \begin{equation*}
              \begin{tikzpicture}[z={(-4.85mm,-4.85mm)}, baseline = -3.5mm]
            \draw (0,0,-0.5) -- (0,0,1.5);
            \foreach \z in {0,1}{
              \draw (-0.5,0,\z) -- (0.5,0,\z);
              \draw (0,0,\z) --++ (0,1);
              \node[tensor] at (0,0,\z) {};
            }
            \draw[fill=gray, rounded corners=1mm,canvas is yz plane at x=0] (0.4,-0.2) rectangle (0.6,1.2);
        \end{tikzpicture} +
        \begin{tikzpicture}[z={(-4.85mm,-4.85mm)}, baseline = -3.5mm]
            \draw (0,0,-0.5) -- (0,0,1.5);
            \foreach \z in {0,1}{
              \draw (-0.5,0,\z) -- (0.5,0,\z);
              \draw (0,0,\z) --++ (0,0.6);
            }
          \node[tensor,red,label=below right:$\hat{U}$] at (0,0,0) {};
          \node[tensor] at (0,0,1) {};
        \end{tikzpicture} +
        \begin{tikzpicture}[z={(-4.85mm,-4.85mm)}, baseline = -3.5mm]
            \draw (0,0,-0.5) -- (0,0,1.5);
            \foreach \z in {0,1}{
              \draw (-0.5,0,\z) -- (0.5,0,\z);
              \draw (0,0,\z) --++ (0,0.6);
            }
          \node[tensor] at (0,0,0) {};
          \node[tensor,red,label=below right:$\hat{D}$] at (0,0,1) {};
        \end{tikzpicture} +
        \begin{tikzpicture}[z={(-4.85mm,-4.85mm)}, baseline = -3.5mm]
            \draw (0,0,-0.5) -- (0,0,1.5);
            \foreach \z in {0,1}{
              \draw (-1,0,\z) -- (1,0,\z);
              \draw (0,0,\z) --++ (0,0.6);
              \node[tensor] at (0,0,\z) {};
            }
            \filldraw[red,canvas is xz plane at y=0, rounded corners=1mm]
              (0.4,-0.2) rectangle (0.6,1.2);
            \node at (0.7,-0.25) {$y_r$};
        \end{tikzpicture}
        +
        \begin{tikzpicture}[z={(-4.85mm,-4.85mm)}, baseline = -3.5mm]
            \draw (0,0,-0.5) -- (0,0,1.5);
            \foreach \z in {0,1}{
              \draw (-1,0,\z) -- (1,0,\z);
              \draw (0,0,\z) --++ (0,0.6);
              \node[tensor] at (0,0,\z) {};
            }
            \filldraw[red,canvas is xz plane at y=0, rounded corners=1mm]
              (-0.4,-0.2) rectangle (-0.6,1.2);
            \node at (-1.2,-0.25) {$y_l$};
        \end{tikzpicture} = 0,
  \end{equation*}
  with 
    \begin{align*}
      \begin{tikzpicture}[z ={(4.85mm,4.85mm)}]
          \draw (-0.5, 0) -- (0.5,0);
          \draw (0,-0.5) -- (0,0.5);
          \draw (0,0,0) -- (0,0,0.5);
          \node[tensor,red] at (0,0) {};
          \node[anchor=north west, inner sep=4pt, align=left] at (-0,0) {$\hat U$};
      \end{tikzpicture} &= 
      \begin{tikzpicture}[z ={(4.85mm,4.85mm)}]
          \draw (-0.5, 0) -- (0.5,0);
          \draw (0,-0.5) -- (0,0.5);
          \draw (0,0,0) -- (0,0,0.5);
          \node[tensor,red] at (0,0) {};
          \node[anchor=north west, inner sep=4pt, align=center] at (-0,0) {$l+$\\$u+r$};
      \end{tikzpicture} + 
      \begin{tikzpicture}[z ={(4.85mm,4.85mm)}]
          \draw (-1.5, 0) -- (0.5,0);
          \draw (0,-0.5) -- (0,0.5);
          \draw (0,0,0) -- (0,0,0.5);
          \node[tensor] at (0,0) {};
          \filldraw[red] (-1,0) circle (2pt);
          \node[anchor=south, inner sep=4pt] at (-1,0) {$x_l + y_{ul}$};
      \end{tikzpicture} - \lambda \  
      \begin{tikzpicture}[z ={(4.85mm,4.85mm)}]
          \draw (-0.5, 0) -- (0.5,0);
          \draw (0,-0.5) -- (0,0.5);
          \draw (0,0,0) -- (0,0,0.5);
          \node[tensor] at (0,0) {};
      \end{tikzpicture} + 
      \begin{tikzpicture}[z ={(4.85mm,4.85mm)}]
          \draw (-0.5, 0) -- (0.5,0);
          \draw (0,-0.5) -- (0,1);
          \draw (0,0,0) -- (0,0,0.5);
          \node[tensor] at (0,0) {};
          \filldraw[red] (0,0.5) circle (2pt);
          \node[anchor=east, inner sep=4pt,align=center] at (0,0.5) {$y_u + x_{ul} + x_{ur}$};
      \end{tikzpicture} +
      \begin{tikzpicture}[z ={(4.85mm,4.85mm)}]
          \draw (-0.5, 0) -- (1.5,0);
          \draw (0,0.5) -- (0,-0.5);
          \draw (0,0,0) -- (0,0,0.5);
          \node[tensor] at (0,0) {};
          \filldraw[red] (1,0) circle (2pt);
          \node[anchor=north, inner sep=4pt,align=center] at (1,0) {$y_r + y_{ur}$};
      \end{tikzpicture}, \\
      \begin{tikzpicture}[z ={(4.85mm,4.85mm)}]
          \draw (-0.5, 0) -- (0.5,0);
          \draw (0,-0.5) -- (0,0.5);
          \draw (0,0,0) -- (0,0,0.5);
          \node[tensor,red] at (0,0) {};
          \node[anchor=north west, inner sep=4pt, align=left] at (-0,0) {$\hat D$};
      \end{tikzpicture} &= 
      \begin{tikzpicture}[z ={(4.85mm,4.85mm)}]
          \draw (-0.5, 0) -- (0.5,0);
          \draw (0,-0.5) -- (0,0.5);
          \draw (0,0,0) -- (0,0,0.5);
          \node[tensor,red] at (0,0) {};
          \node[anchor=north west, inner sep=4pt, align=center] at (-0,0) {$r+$\\$l+d$};
      \end{tikzpicture} + 
      \begin{tikzpicture}[z ={(4.85mm,4.85mm)}]
          \draw (-1.5, 0) -- (0.5,0);
          \draw (0,-0.5) -- (0,0.5);
          \draw (0,0,0) -- (0,0,0.5);
          \node[tensor] at (0,0) {};
          \filldraw[red] (-1,0) circle (2pt);
          \node[anchor=south, inner sep=4pt] at (-1,0) {$x_l + y_{dl}$};
      \end{tikzpicture} + 
      \begin{tikzpicture}[z ={(4.85mm,4.85mm)}]
          \draw (-0.5, 0) -- (1.5,0);
          \draw (0,-0.5) -- (0,0.5);
          \draw (0,0,0) -- (0,0,0.5);
          \node[tensor] at (0,0) {};
          \filldraw[red] (1,0) circle (2pt);
          \node[anchor=north, inner sep=4pt,align=center] at (1,0) {$y_u + y_{dr}$};
      \end{tikzpicture} +
      \begin{tikzpicture}[z ={(4.85mm,4.85mm)}]
          \draw (-0.5, 0) -- (0.5,0);
          \draw (0,0.5) -- (0,-1);
          \draw (0,0,0) -- (0,0,0.5);
          \node[tensor] at (0,0) {};
          \filldraw[red] (0,-0.5) circle (2pt);
          \node[anchor=west, inner sep=4pt,align=center] at (0,-0.5) {$y_d + x_{dl} + x_{dr}$};
      \end{tikzpicture} .
  \end{align*}
  Using now \cref{eq:PEPS_NN_V}, let us define
  \begin{equation*}
    Y =  y_l  - y_{dl} \otimes 1 - 1 \otimes y_{ul} = -y_r + y_{dr} \otimes 1 + 1 \otimes y_{ur}.
  \end{equation*}
  We can then write 
  \begin{equation*}
              \begin{tikzpicture}[z={(-4.85mm,-4.85mm)}, baseline = -3.5mm]
            \draw (0,0,-0.5) -- (0,0,1.5);
            \foreach \z in {0,1}{
              \draw (-0.5,0,\z) -- (0.5,0,\z);
              \draw (0,0,\z) --++ (0,1);
              \node[tensor] at (0,0,\z) {};
            }
            \draw[fill=gray, rounded corners=1mm,canvas is yz plane at x=0] (0.4,-0.2) rectangle (0.6,1.2);
        \end{tikzpicture} = 
        \begin{tikzpicture}[z={(-4.85mm,-4.85mm)}, baseline = -3.5mm]
            \draw (0,0,-0.5) -- (0,0,1.5);
            \foreach \z in {0,1}{
              \draw (-0.5,0,\z) -- (0.5,0,\z);
              \draw (0,0,\z) --++ (0,0.6);
            }
          \node[tensor,red,label=below right:$U$] at (0,0,0) {};
          \node[tensor] at (0,0,1) {};
        \end{tikzpicture} +
        \begin{tikzpicture}[z={(-4.85mm,-4.85mm)}, baseline = -3.5mm]
            \draw (0,0,-0.5) -- (0,0,1.5);
            \foreach \z in {0,1}{
              \draw (-0.5,0,\z) -- (0.5,0,\z);
              \draw (0,0,\z) --++ (0,0.6);
            }
          \node[tensor] at (0,0,0) {};
          \node[tensor,red,label=below right:$D$] at (0,0,1) {};
        \end{tikzpicture} +
        \begin{tikzpicture}[z={(-4.85mm,-4.85mm)}, baseline = -3.5mm]
            \draw (0,0,-0.5) -- (0,0,1.5);
            \foreach \z in {0,1}{
              \draw (-1,0,\z) -- (1,0,\z);
              \draw (0,0,\z) --++ (0,0.6);
              \node[tensor] at (0,0,\z) {};
            }
            \filldraw[red,canvas is xz plane at y=0, rounded corners=1mm]
              (0.4,-0.2) rectangle (0.6,1.2);
            \node at (0.7,-0.25) {$Y$};
        \end{tikzpicture}
        -
        \begin{tikzpicture}[z={(-4.85mm,-4.85mm)}, baseline = -3.5mm]
            \draw (0,0,-0.5) -- (0,0,1.5);
            \foreach \z in {0,1}{
              \draw (-1,0,\z) -- (1,0,\z);
              \draw (0,0,\z) --++ (0,0.6);
              \node[tensor] at (0,0,\z) {};
            }
            \filldraw[red,canvas is xz plane at y=0, rounded corners=1mm]
              (-0.4,-0.2) rectangle (-0.6,1.2);
            \node at (-1.2,-0.25) {$Y$};
        \end{tikzpicture},
  \end{equation*}
  with
  \begin{align*}
      \begin{tikzpicture}[z ={(4.85mm,4.85mm)}]
          \draw (-0.5, 0) -- (0.5,0);
          \draw (0,-0.5) -- (0,0.5);
          \draw (0,0,0) -- (0,0,0.5);
          \node[tensor,red] at (0,0) {};
          \node[anchor=north west, inner sep=4pt, align=left] at (-0,0) {$U$};
      \end{tikzpicture}& = -  
      \begin{tikzpicture}[z ={(4.85mm,4.85mm)}]
          \draw (-0.5, 0) -- (0.5,0);
          \draw (0,-0.5) -- (0,0.5);
          \draw (0,0,0) -- (0,0,0.5);
          \node[tensor,red] at (0,0) {};
          \node[anchor=north west, inner sep=4pt, align=center] at (-0,0) {$l+$\\$u+r$};
      \end{tikzpicture} -
      \begin{tikzpicture}[z ={(4.85mm,4.85mm)}]
          \draw (-1.5, 0) -- (0.5,0);
          \draw (0,-0.5) -- (0,0.5);
          \draw (0,0,0) -- (0,0,0.5);
          \node[tensor] at (0,0) {};
          \filldraw[red] (-1,0) circle (2pt);
          \node[anchor=south, inner sep=4pt] at (-1,0) {$x_l + y_{ul} + y_{dl}$};
      \end{tikzpicture} + \lambda \  
      \begin{tikzpicture}[z ={(4.85mm,4.85mm)}]
          \draw (-0.5, 0) -- (0.5,0);
          \draw (0,-0.5) -- (0,0.5);
          \draw (0,0,0) -- (0,0,0.5);
          \node[tensor] at (0,0) {};
      \end{tikzpicture} - 
      \begin{tikzpicture}[z ={(4.85mm,4.85mm)}]
          \draw (-0.5, 0) -- (0.5,0);
          \draw (0,-0.5) -- (0,1);
          \draw (0,0,0) -- (0,0,0.5);
          \node[tensor] at (0,0) {};
          \filldraw[red] (0,0.5) circle (2pt);
          \node[anchor=east, inner sep=4pt,align=center] at (0,0.5) {$y_u + x_{ul} + x_{ur}$};
      \end{tikzpicture} -
      \begin{tikzpicture}[z ={(4.85mm,4.85mm)}]
          \draw (-0.5, 0) -- (1.5,0);
          \draw (0,0.5) -- (0,-0.5);
          \draw (0,0,0) -- (0,0,0.5);
          \node[tensor] at (0,0) {};
          \filldraw[red] (1,0) circle (2pt);
          \node[anchor=north, inner sep=4pt,align=center] at (1,0) {$y_r + y_{ur} + y_{dr}$};
      \end{tikzpicture}, \\
      \begin{tikzpicture}[z ={(4.85mm,4.85mm)}]
          \draw (-0.5, 0) -- (0.5,0);
          \draw (0,-0.5) -- (0,0.5);
          \draw (0,0,0) -- (0,0,0.5);
          \node[tensor,red] at (0,0) {};
          \node[anchor=north west, inner sep=4pt, align=left] at (-0,0) {$D$};
      \end{tikzpicture} &= -
      \begin{tikzpicture}[z ={(4.85mm,4.85mm)}]
          \draw (-0.5, 0) -- (0.5,0);
          \draw (0,-0.5) -- (0,0.5);
          \draw (0,0,0) -- (0,0,0.5);
          \node[tensor,red] at (0,0) {};
          \node[anchor=north west, inner sep=4pt, align=center] at (-0,0) {$r+$\\$l+d$};
      \end{tikzpicture} -
      \begin{tikzpicture}[z ={(4.85mm,4.85mm)}]
          \draw (-1.5, 0) -- (0.5,0);
          \draw (0,-0.5) -- (0,0.5);
          \draw (0,0,0) -- (0,0,0.5);
          \node[tensor] at (0,0) {};
          \filldraw[red] (-1,0) circle (2pt);
          \node[anchor=south, inner sep=4pt] at (-1,0) {$x_l + y_{dl}+y_{ul}$};
      \end{tikzpicture} - 
      \begin{tikzpicture}[z ={(4.85mm,4.85mm)}]
          \draw (-0.5, 0) -- (1.5,0);
          \draw (0,-0.5) -- (0,0.5);
          \draw (0,0,0) -- (0,0,0.5);
          \node[tensor] at (0,0) {};
          \filldraw[red] (1,0) circle (2pt);
          \node[anchor=north, inner sep=4pt,align=center] at (1,0) {$y_u + y_{dr} + y_{ur}$};
      \end{tikzpicture} -
      \begin{tikzpicture}[z ={(4.85mm,4.85mm)}]
          \draw (-0.5, 0) -- (0.5,0);
          \draw (0,0.5) -- (0,-1);
          \draw (0,0,0) -- (0,0,0.5);
          \node[tensor] at (0,0) {};
          \filldraw[red] (0,-0.5) circle (2pt);
          \node[anchor=west, inner sep=4pt,align=center] at (0,-0.5) {$y_d + x_{dl} + x_{dr}$};
      \end{tikzpicture} .
  \end{align*}
  Using \cref{eq:PEPS_NN_single_tensor}, these expressions simplify:
  \begin{align*}
      \begin{tikzpicture}[z ={(4.85mm,4.85mm)}]
          \draw (-0.5, 0) -- (0.5,0);
          \draw (0,-0.5) -- (0,0.5);
          \draw (0,0,0) -- (0,0,0.5);
          \node[tensor,red] at (0,0) {};
          \node[anchor=north west, inner sep=4pt, align=left] at (-0,0) {$L$};
      \end{tikzpicture}& = 
      \begin{tikzpicture}[z ={(4.85mm,4.85mm)}]
          \draw (-0.5, 0) -- (0.5,0);
          \draw (0,-0.5) -- (0,0.5);
          \draw (0,0,0) -- (0,0,0.5);
          \node[tensor,red] at (0,0) {};
          \node[anchor=north west, inner sep=4pt, align=center] at (-0,0) {$d$};
      \end{tikzpicture} +
      \begin{tikzpicture}[z ={(4.85mm,4.85mm)}]
          \draw (0.5, 0) -- (-0.5,0);
          \draw (0,-1.0) -- (0,0.5);
          \draw (0,0,0) -- (0,0,0.5);
          \node[tensor] at (0,0) {};
          \filldraw[red] (0,-0.5) circle (2pt);
          \node[anchor=west, inner sep=4pt] at (0,-0.5) {$y_d + x_{dr} + x_{dl}$};
      \end{tikzpicture} + \lambda \  
      \begin{tikzpicture}[z ={(4.85mm,4.85mm)}]
          \draw (-0.5, 0) -- (0.5,0);
          \draw (0,-0.5) -- (0,0.5);
          \draw (0,0,0) -- (0,0,0.5);
          \node[tensor] at (0,0) {};
      \end{tikzpicture} , \\
      \begin{tikzpicture}[z ={(4.85mm,4.85mm)}]
          \draw (-0.5, 0) -- (0.5,0);
          \draw (0,-0.5) -- (0,0.5);
          \draw (0,0,0) -- (0,0,0.5);
          \node[tensor,red] at (0,0) {};
          \node[anchor=north west, inner sep=4pt, align=left] at (-0,0) {$R$};
      \end{tikzpicture} &= 
      \begin{tikzpicture}[z ={(4.85mm,4.85mm)}]
          \draw (-0.5, 0) -- (0.5,0);
          \draw (0,-0.5) -- (0,0.5);
          \draw (0,0,0) -- (0,0,0.5);
          \node[tensor,red] at (0,0) {};
          \node[anchor=north west, inner sep=4pt, align=center] at (-0,0) {$u$};
      \end{tikzpicture} +
      \begin{tikzpicture}[z ={(4.85mm,4.85mm)}]
          \draw (-0.5, 0) -- (0.5,0);
          \draw (0,-0.5) -- (0,1);
          \draw (0,0,0) -- (0,0,0.5);
          \node[tensor] at (0,0) {};
          \filldraw[red] (0,0.5) circle (2pt);
          \node[anchor=west, inner sep=4pt] at (0,0.5) {$y_u + x_{ul} + x_{ur}$};
      \end{tikzpicture} .
  \end{align*}
  Finally notice that by \cref{eq:PEPS_NN_single_tensor}, $L+R+U+D=0$.
\end{proof}

\subsection{Two-body interaction in the hexagonal lattice.}

\begin{theorem}
    Let $O$ be a nearest neighbor operator on an hexagonal lattice: $O = \sum_i h_i + \sum_j l_j + \sum_m k_m$, where each term correspond to a different edge. Let $\ket{\psi}$ be a translationally invariant PEPS generated by two injective trivalent tensors $A_1$ and $A_2$ that we both depict in black such that $O \ket{\psi} = 0$. Then, there are four tensors $B,Y,R,W$ satisfying: 
\begin{align*}
        \begin{tikzpicture}[scale=0.7]
            \node[tensor,label=below :$A_2$] (l) at (1,0) {};
            \draw (l)--++(0,1);
            \draw (l)--++(60:0.7);
            \draw (l)--++(-60:0.7);
            \node[tensor,label=below:$A_1$] (r) at (0,0) {};
            \draw (r)--++(0,1);
            \draw (r)--++(120:0.7);
            \draw (r)--++(-120:0.7);
            \draw (l)--(r);
            \draw[line width=2.1mm,line cap=round] (0,0.5) -- (1,0.5);
            \draw[line width=1.9mm,line cap=round,gray] (0,0.5) -- (1,0.5);
            \node[anchor=south,inner sep=5pt] at (0.5,0.5) {$h$};
        \end{tikzpicture}  & = \ 
         \begin{tikzpicture}[scale=0.7]
            \node[tensor,fill=red,label=below :$R$] (l) at (1,0) {};
            \draw (l)--++(0,0.7);
            \draw (l)--++(60:0.7);
            \draw (l)--++(-60:0.7);
            \node[tensor,label=below :$A_1$] (r) at (0,0) {};
            \draw (r)--++(0,0.7);
            \draw (r)--++(120:0.7);
            \draw (r)--++(-120:0.7);
            \draw (l)--(r);
        \end{tikzpicture} +\
        \begin{tikzpicture}[scale=0.7]
            \node[tensor,label=below :$A_2$] (l) at (1,0) {};
            \draw (l)--++(0,0.7);
            \draw (l)--++(60:0.7);
            \draw (l)--++(-60:0.7);
            \node[tensor,fill=yellow,label=below :$Y$] (r) at (0,0) {};
            \draw (r)--++(0,0.7);
            \draw (r)--++(120:0.7);
            \draw (r)--++(-120:0.7);
            \draw (l)--(r);
        \end{tikzpicture} \ ,
\\
        \begin{tikzpicture}[scale=0.7]
            \node[tensor,label=below :$A_2$] (l) at (0,0) {};
            \draw (l)--++(0,1);
            \draw (l)--++(60:0.7);
            \draw (l)--++(-180:0.7);
            \draw (l)--++(-60:1);
            \node[tensor,label=below :$A_1$] (r) at (-60:1) {};
            \draw (r)--++(0,1);
            \draw (r)--++(0:0.7);
            \draw (r)--++(-120:0.7);
            \draw (l)--(r);
            \draw[line width=2.1mm,line cap=round] (0,0.65) --++ (-60:1);
            \draw[line width=1.9mm,line cap=round,gray] (0,0.65) --++ (-60:1);
            \node[anchor=south,inner sep=5pt] at (0.5,0.5) {$l$};
        \end{tikzpicture} &= \ 
        \begin{tikzpicture}[scale=0.7]
            \node[tensor,fill=white, label=below :$W$] (l) at (0,0) {};
            \draw (l)--++(0,0.7);
            \draw (l)--++(60:0.7);
            \draw (l)--++(-180:0.7);
            \draw (l)--++(-60:1);
            \node[tensor,label=below :$A_1$] (r) at (-60:1) {};
            \draw (r)--++(0,0.7);
            \draw (r)--++(0:0.7);
            \draw (r)--++(-120:0.7);
            \draw (l)--(r);
            \end{tikzpicture}\ - \
            \begin{tikzpicture}[scale=0.7]
            \node[tensor,label=below :$A_2$] (l) at (0,0) {};
            \draw (l)--++(0,0.7);
            \draw (l)--++(60:0.7);
            \draw (l)--++(-180:0.7);
            \draw (l)--++(-60:1);
            \node[tensor,fill=green,label=below right :$B+Y$] (r) at (-60:1) {};
            \draw (r)--++(0,0.7);
            \draw (r)--++(0:0.7);
            \draw (r)--++(-120:0.7);
            \draw (l)--(r);
            \end{tikzpicture} \ ,
\\
        \begin{tikzpicture}[scale=0.7]
            \node[tensor,label=below :$A_2$] (l) at (0,0) {};
            \draw (l)--++(0,1);
            \draw (l)--++(180:0.7);
            \draw (l)--++(-60:0.7);
            \node[tensor,label=below :$A_1$] (r) at (60:1) {};
            \draw (r)--++(0,1);
            \draw (r)--++(0:0.7);
            \draw (r)--++(120:0.7);
            \draw (l)--(r);
            \draw[line width=2.1mm,line cap=round] (0,0.65) --++ (60:1);
            \draw[line width=1.9mm,line cap=round,gray] (0,0.65) --++ (60:1);
            \node[anchor=south,inner sep=5pt] at (-0.3,0.7) {$k$};
        \end{tikzpicture} 
        & = \ - \
        \begin{tikzpicture}[scale=0.7]
    \node[tensor,fill=pink,label=below left :$R+W$] (l) at (0,0) {};
            \draw (l)--++(0,0.7);
            \draw (l)--++(180:0.7);
            \draw (l)--++(-60:0.7);
            \node[tensor,label=below :$A_1$] (r) at (60:1) {};
            \draw (r)--++(0,0.7);
            \draw (r)--++(0:0.7);
            \draw (r)--++(120:0.7);
            \draw (l)--(r);
        \end{tikzpicture} + \ 
        \begin{tikzpicture}[scale=0.7]
            \node[tensor,label=below :$A_2$] (l) at (0,0) {};
            \draw (l)--++(0,0.7);
            \draw (l)--++(180:0.7);
            \draw (l)--++(-60:0.7);
            \node[tensor,fill=blue,label=below :$B$] (r) at (60:1) {};
            \draw (r)--++(0,0.7);
            \draw (r)--++(0:0.7);
            \draw (r)--++(120:0.7);
            \draw (l)--(r);
        \end{tikzpicture}\ .
\end{align*} 
\end{theorem}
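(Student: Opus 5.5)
We follow the strategy used for the square lattice: start from $O\ket{\psi}=0$, apply the inverse tensors of $A_1$ and $A_2$ (both injective, trivalent) on all but a few sites, and read off local identities. The honeycomb lattice is bipartite with a two-site unit cell $(A_1,A_2)$. Each site has three edges, one of each type $h$ (horizontal), $l$, $k$. Fully contracting with inverses on every site leaves three scalars $\lambda_h,\lambda_l,\lambda_k$, the values of a single $h$-, $l$- or $k$-term sandwiched between tensors and inverses. Since $O\ket\psi=0$, we get $\lambda_h+\lambda_l+\lambda_k=0$ per unit cell. As in the square-lattice proofs, these constants can be absorbed into the auxiliary tensors at the end.

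Next we leave one site uncontracted: first an $A_1$ site, then an $A_2$ site. By the same locality as before, the only terms that survive are those acting on the open site or on one of its neighbours. The open-$A_1$ equation has the form
\[
r_1+y_1+b_1 + (\text{neighbour contributions}) = 0 .
\]
Here $r_1$, $y_1$, $b_1$ are the rank-4 tensors coming from the three edges incident to the open site, and the neighbour contributions are tensors built from $A_1$ with a virtual matrix on one leg. We then leave open a pair of sites joined by an $h$-edge. This gives
\[
(\text{$h$ on }A_1A_2) + \hat Y \cdot A_2 + A_1\cdot \hat R + (\text{virtual matrices on outer legs}) = 0 ,
\]
where $\hat Y$ collects the terms touching only the $A_1$ site and $\hat R$ those touching only the $A_2$ site. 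The analogous equations for open $l$-pairs and $k$-pairs give corresponding hatted tensors.

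We then set $Y:=-\hat Y$ and $R:=-\hat R$, define $W$ from the $l$-pair equation, and define $B$ from the $k$-pair equation. Virtual matrices on outer legs are absorbed into the neighbouring site tensor: a matrix $M$ on a leg of $A_2$ is rewritten as a new rank-4 tensor $A_2M$. We then use the single-site equations to identify the pieces that different pair equations share at a common vertex. In particular, the single-$A_1$-site equation says the three edge contributions at an $A_1$ vertex sum to zero after absorption. This is why the $A_1$ tensor in the $l$-equation is $-(B+Y)$ once the $h$-edge piece $Y$ and the $k$-edge piece $B$ are fixed. Likewise, the single-$A_2$-site equation forces the $A_2$ tensor in the $k$-equation to be $-(R+W)$. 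The constants $\lambda$ are distributed using $\lambda_h+\lambda_l+\lambda_k=0$, as $\lambda$ was in the nearest-neighbour square-lattice proof.

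The main obstacle is this bookkeeping. We must verify that the outer-leg virtual matrices from the three pair equations can be assigned consistently, so that the same $Y$, $R$, $W$, $B$ appear in all three equations and the relations $-(B+Y)$, $-(R+W)$ come out exactly. If they cannot, we would first try using the uniqueness-up-to-$\lambda A$ freedom, as in the redundancy theorem for $B$, to fix them. We would do this by subtracting the single-site equations, exactly as $L+R+U+D=0$ was obtained on the square lattice.
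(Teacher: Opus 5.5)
Your plan follows the paper's proof. Full contraction gives $\lambda_h+\lambda_l+\lambda_k=0$. The single-site contractions show that the three grouped edge pieces at each vertex type sum to zero; a grouped piece is the edge term plus the two neighbour terms behind it, absorbed as a matrix on that leg. The three pair contractions then give the statement with $Y,R,W,B$ equal, up to multiples of $A_1$ or $A_2$, to minus the sum of the two grouped pieces not on the pair's edge, as you describe.

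The bookkeeping you flag goes through without any fallback, for two reasons. First, the honeycomb lattice has no cycles shorter than six, so the outer-leg matrices in each pair contraction are exactly the neighbour contributions already present in the single-site contractions. Second, each pair contraction has one extra uncontracted term of its own edge type, so it contributes the constant $-\lambda_h A_1A_2$ (resp.\ $-\lambda_l$, $-\lambda_k$). The scalar shifts of $Y,R,W,B$ are therefore fixed by three linear conditions whose consistency is exactly $\lambda_h+\lambda_l+\lambda_k=0$.
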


\begin{proof}
We first apply the inverse of $A_1$ and $A_2$ on every lattice site in the equation $O\ket{\psi} =0$, and then fully contract the resulting tensor network. On a lattice with $N$ hexagons we obtain the equation $N(\eta + \lambda+ \kappa ) = 0$, or equivalently,
\begin{equation}\label{eq:hexagon_hkl}
    \eta + \lambda+ \kappa  = 0,
\end{equation}
where $\eta$ is defined as 
\begin{align*} 
    \eta = 
    \begin{tikzpicture}[scale=0.7]
        \node[tensor,label=below :$A_2$] (l) at (1,0) {};
        \node[tensor,label=above :$A_2^{-1}$] (lu) at (1,1) {};
        \draw (l)--++(0,1);
        \draw (l)--++(60:0.7) --++ (0,1);
        \draw (l)--++(-60:0.7) --++ (0,1);
        \draw (lu)--++(60:0.7);
        \draw (lu)--++(-60:0.7);
        \node[tensor,label=below:$A_1$] (r) at (0,0) {};
        \node[tensor,label=above:$A_1^{-1}$] (ru) at (0,1) {};
        \draw (r)--++(0,1);
        \draw (r)--++(120:0.7) --++ (0,1);
        \draw (r)--++(-120:0.7) --++ (0,1);
        \draw (ru)--++(120:0.7);
        \draw (ru)--++(-120:0.7);
        \draw (l)--(r);
        \draw (lu)--(ru);
        \draw[line width=2mm,line cap=round] (0,0.5) -- (1,0.5);
        \node[anchor=south,inner sep=5pt] at (0.5,0.5) {$h$};
    \end{tikzpicture},
\end{align*} 
and $\lambda$ and $\kappa$ are defined analogously for the Hamiltonian terms $l$ and $k$, repsectively. 

Next we apply the inverse of $A_1$ and $A_2$ everywhere except on a single site site in the equation $O\ket{\psi} =0$ and contract the resulting tensor network. When the site left out is at the position of the tensor $A_1$, the result is a single tensor of the same shape as $A_1$. Using \cref{eq:hexagon_hkl}, we obtain an equation containing $9$ terms. The position of the Hamiltonians from which these 9 terms originate as well as the resulting eqution are depicted below: 
\begin{equation*}
        \begin{tikzpicture}[scale=0.7]
  \node[tensor,fill=red] (i) at (0,0) {};
            \draw (i)--++(0:1)--++(60:1)--++(120:1)--++(180:1)--++(240:1);
            \draw (i)--++(120:1)--++(180:1)--++(240:1)--++(300:1)--++(0:1);
            \draw (i)--++(240:1)--++(300:1)--++(0:1)--++(60:1)--++(120:1);
            \filldraw (0:0.5) circle (2pt);
          \node[anchor=north,inner sep=5pt] at (0:0.5) {$V$};
          \filldraw (120:0.5) circle (2pt);
          \node[anchor=west,inner sep=5pt] at (120:0.5) {$C$};
          \filldraw (240:0.5) circle (2pt);
          \node[anchor=east,inner sep=5pt] at (240:0.5) {$D$};
        \filldraw ($(0:1)+(60:0.5)$) circle (2pt);
          \node[anchor=north,inner sep=5pt] at ($(0:1)+(60:0.5)$) {$v_1$};
          \filldraw ($(0:1)+(-60:0.5)$) circle (2pt);
          \node[anchor=north,inner sep=5pt] at ($(0:1)+(-60:0.5)$) {$v_2$};
          \filldraw ($(120:1)+(180:0.5)$) circle (2pt);
          \node[anchor=north,inner sep=5pt] at ($(120:1)+(180:0.5)$) {$c_1$};
          \filldraw ($(120:1)+(60:0.5)$) circle (2pt);
          \node[anchor=north,inner sep=5pt] at ($(120:1)+(60:0.5)$) {$c_2$};
          \filldraw ($(240:1)+(180:0.5)$) circle (2pt);
          \node[anchor=north,inner sep=5pt] at ($(240:1)+(180:0.5)$) {$d_1$};
          \filldraw ($(240:1)+(300:0.5)$) circle (2pt);
          \node[anchor=north,inner sep=5pt] at ($(240:1)+(300:0.5)$) {$d_2$};
        \end{tikzpicture} \quad , \quad
        \begin{tikzpicture}[scale=0.7]
            \node[tensor,fill=red,label=below:$V+C+D$] (r) at (0,0) {};
            \draw (0,0.5)--(r)--++(120:0.7);
            \draw (0.7,0)--(r)--++(-120:0.7);
        \end{tikzpicture}
        +
             \begin{tikzpicture}[scale=0.7]
            \node[tensor,label=left:$A_1$] (r) at (0,0) {};
            \draw (0,0.5)--(r)--++(120:0.7);
            \draw (0.7,0)--(r)--++(-120:0.7);
            \node[tensor,scale=0.7,fill=red,label= below:$v_1+v_2$] at (0.5,0) {};
        \end{tikzpicture}
                +
             \begin{tikzpicture}[scale=0.7]
            \node[tensor,label=left:$A_1$] (r) at (0,0) {};
            \draw (0,0.5)--(r)--++(120:0.7);
            \draw (0.7,0)--(r)--++(-120:0.7);
            \node[tensor,scale=0.7,fill=red,label= right:$d_1+d_2$] at (240:0.5) {};
        \end{tikzpicture}
        +
             \begin{tikzpicture}[scale=0.7]
            \node[tensor,label=below:$A_1$] (r) at (0,0) {};
            \draw (0,0.5)--(r)--++(120:0.7);
            \draw (0.7,0)--(r)--++(-120:0.7);
            \node[tensor,scale=0.7,fill=red,label= right:$c_1+c_2$] at (120:0.5) {};
        \end{tikzpicture}
        = \ 0 \ .
        \end{equation*}
Similarly, if we apply the inverse of $A_1$ and $A_2$ on every lattice site except for one that is at a position of a tensor $A_2$, we obtain:
    \begin{equation*}
        \begin{tikzpicture}[scale=0.7]
  \node[tensor,fill=red] (i) at (0,0) {};
            \draw (i)--++(180:1)--++(240:1)--++(300:1)--++(0:1)--++(60:1);
            \draw (i)--++(60:1)--++(120:1)--++(180:1)--++(240:1)--++(300:1);
            \draw (i)--++(300:1)--++(0:1)--++(60:1)--++(120:1)--++(180:1);
            \filldraw (180:0.5) circle (2pt);
          \node[anchor=north,inner sep=5pt] at (180:0.5) {$E$};
        \filldraw ($(180:1)+(120:0.5)$) circle (2pt);
          \node[anchor=north,inner sep=5pt] at ($(180:1)+(120:0.5)$) {$e_1$};
          \filldraw ($(180:1)+(-120:0.5)$) circle (2pt);
          \node[anchor=north,inner sep=5pt] at ($(180:1)+(-120:0.5)$) {$e_2$};
        \filldraw (60:0.5) circle (2pt);
          \node[anchor=north west,inner sep=2pt] at (60:0.5) {$F$};
            \filldraw ($(60:1)+(120:0.5)$) circle (2pt);
        \node[anchor=south,inner sep=4pt] at ($(60:1)+(120:0.5)$) {$f_1$};
          \filldraw ($(60:1)+(0:0.5)$) circle (2pt);
          \node[anchor=south,inner sep=4pt] at ($(60:1)+(0:0.5)$) {$f_2$};
          \filldraw (300:0.5) circle (2pt);
          \node[anchor=north,inner sep=5pt] at (300:0.5) {$G$};
          \filldraw ($(300:1)+(0:0.5)$) circle (2pt);
          \node[anchor=north,inner sep=5pt] at ($(300:1)+(0:0.5)$) {$g_1$};
          \filldraw ($(300:1)+(240:0.5)$) circle (2pt);
          \node[anchor=north,inner sep=5pt] at ($(300:1)+(240:0.5)$) {$g_2$};
        \end{tikzpicture}\quad , \quad
        \begin{tikzpicture}[scale=0.7]
            \node[tensor,fill=red,label=below:$E+F+G$] (r) at (0,0) {};
            \draw (0,0.5)--(r)--++(60:0.7);
            \draw (-0.7,0)--(r)--++(-60:0.7);
        \end{tikzpicture}
        +
        \begin{tikzpicture}[scale=0.7]
            \node[tensor,label=right:$A_2$] (r) at (0,0) {};
            \draw (0,0.5)--(r)--++(60:0.7);
            \draw (-0.7,0)--(r)--++(-60:0.7);
            \node[tensor,scale=0.7,fill=red,label= below:$e_1+e_2$] at (-0.5,0) {};
        \end{tikzpicture}
         +
        \begin{tikzpicture}[scale=0.7]
            \node[tensor,label=below left:$A_2$] (r) at (0,0) {};
            \draw (0,0.5)--(r)--++(60:0.7);
            \draw (-0.7,0)--(r)--++(-60:0.7);
            \node[tensor,scale=0.7,fill=red,label= right:$f_1+f_2$] at (60:0.5) {};
        \end{tikzpicture}
        +
        \begin{tikzpicture}[scale=0.7]
            \node[tensor,label=below left:$A_2$] (r) at (0,0) {};
            \draw (0,0.5)--(r)--++(60:0.7);
            \draw (-0.7,0)--(r)--++(-60:0.7);
            \node[tensor,scale=0.7,fill=red,label= right:$g_1+g_2$] at (-60:0.5) {};
        \end{tikzpicture}
        = \ 0 \ .
        \end{equation*}
We define $X_1= X+x_1+x_2$ with $X\in\{V,C,D,E,F,G \}$. Then:
$$ E_1+F_1+G_1=0 \quad , \quad  V_1+C_1+D_1=0$$
Now we apply the inverse everywhere except on two consecutive sites, in all three directions. Using \cref{eq:hexagon_hkl} in all three cases, there are 13 terms of $O$ that contribute non-trivially and two terms contributing as the identity. Below we depict the terms that contribute non-trivially and the resulting equations:
    \begin{equation*}
        \begin{tikzpicture}[scale=0.7]
  \node[tensor,fill=red] (i) at (0,0) {};
            \draw (i)--++(180:1)--++(240:1)--++(300:1)--++(0:1)--++(60:1);
            \draw (i)--++(60:1)--++(120:1)--++(180:1)--++(240:1)--++(300:1);
            \draw (i)--++(300:1)--++(0:1)--++(60:1)--++(120:1)--++(180:1);
        \node[tensor,fill=red] (f) at (60:1) {};
        \draw (f)--++(0:1)--++(60:1)--++(120:1)--++(180:1)--++(240:1);
        \filldraw (180:0.5) circle (2pt);
          \node[anchor=north,inner sep=5pt] at (180:0.5) {$E$};
        \filldraw ($(180:1)+(120:0.5)$) circle (2pt);
          \node[anchor=north,inner sep=5pt] at ($(180:1)+(120:0.5)$) {$e_1$};
          \filldraw ($(180:1)+(-120:0.5)$) circle (2pt);
          \node[anchor=north,inner sep=5pt] at ($(180:1)+(-120:0.5)$) {$e_2$};
        \filldraw (60:0.5) circle (2pt);
            \filldraw ($(60:1)+(120:0.5)$) circle (2pt);
        \node[anchor=west,inner sep=4pt] at ($(60:1)+(120:0.5)$) {$C$};
        \filldraw ($(60:1)+(120:1)+(180:0.5)$) circle (2pt);
        \node[anchor=north,inner sep=5pt] at ($(60:1)+(120:1)+(180:0.5)$) {$c_1$};
        \filldraw ($(60:1.5)+(120:1)$) circle (2pt);
        \node[anchor=north,inner sep=5pt] at ($(60:1.5)+(120:1)$) {$c_2$};

          \filldraw (300:0.5) circle (2pt);
          \node[anchor=north,inner sep=5pt] at (300:0.5) {$G$};
          \filldraw ($(300:1)+(0:0.5)$) circle (2pt);
          \node[anchor=north,inner sep=5pt] at ($(300:1)+(0:0.5)$) {$g_1$};
          \filldraw ($(300:1)+(240:0.5)$) circle (2pt);
          \node[anchor=north,inner sep=5pt] at ($(300:1)+(240:0.5)$) {$g_2$};
        \filldraw ($(60:1)+(0:0.5)$) circle (2pt);
          \node[anchor=north,inner sep=5pt] at ($(60:1)+(0:0.5)$) {$V$};
        \filldraw ($(60:1.5)+(0:1)$) circle (2pt);
        \node[anchor=north,inner sep=5pt] at ($(60:1.5)+(0:1)$) {$v_1$};
          \filldraw ($(60:1)+(0:1)+(300:0.5)$) circle (2pt);
          \node[anchor=north,inner sep=5pt] at ($(60:1)+(0:1)+(300:0.5)$) {$v_2$};
        \end{tikzpicture}
\quad , \quad 
 \begin{tikzpicture}[scale=0.7]
            \node[tensor,label=below :$A_2$] (l) at (0,0) {};
            \draw (l)--++(0,1);
            \draw (l)--++(180:0.7);
            \draw (l)--++(-60:0.7);
            \node[tensor,label=below :$A_1$] (r) at (60:1) {};
            \draw (r)--++(0,1);
            \draw (r)--++(0:0.7);
            \draw (r)--++(120:0.7);
            \draw (l)--(r);
            \draw[line width=2mm,line cap=round] (0,0.65) --++ (60:1);
            \node[anchor=south,inner sep=5pt] at (-0.3,0.7) {$k$};
        \end{tikzpicture} 
      = \ - \
        \begin{tikzpicture}[scale=0.7]
            \node[tensor,fill=red,label=below left :$E_1+G_1$] (l) at (0,0) {};
            \draw (l)--++(0,0.7);
            \draw (l)--++(180:0.7);
            \draw (l)--++(-60:0.7);
            \node[tensor,label=below :$A_1$] (r) at (60:1) {};
            \draw (r)--++(0,0.7);
            \draw (r)--++(0:0.7);
            \draw (r)--++(120:0.7);
            \draw (l)--(r);
        \end{tikzpicture} - \ 
        \begin{tikzpicture}[scale=0.7]
            \node[tensor,label=below :$A_2$] (l) at (0,0) {};
            \draw (l)--++(0,0.7);
            \draw (l)--++(180:0.7);
            \draw (l)--++(-60:0.7);
            \node[tensor,fill=red,label=below right :$C_1+V_1$] (r) at (60:1) {};
            \draw (r)--++(0,0.7);
            \draw (r)--++(0:0.7);
            \draw (r)--++(120:0.7);
            \draw (l)--(r);
        \end{tikzpicture}\ - (\eta + \lambda)
        \begin{tikzpicture}[scale=0.7]
            \node[tensor,label=below :$A_2$] (l) at (0,0) {};
            \draw (l)--++(0,0.7);
            \draw (l)--++(180:0.7);
            \draw (l)--++(-60:0.7);
            \node[tensor,label=below right :$A_1$] (r) at (60:1) {};
            \draw (r)--++(0,0.7);
            \draw (r)--++(0:0.7);
            \draw (r)--++(120:0.7);
            \draw (l)--(r);
        \end{tikzpicture}\ .
        \end{equation*}

    \begin{equation*}
        \begin{tikzpicture}[scale=0.7]
  \node[tensor,fill=red] (i) at (0,0) {};
            \draw (i)--++(180:1)--++(240:1)--++(300:1)--++(0:1)--++(60:1);
            \draw (i)--++(60:1)--++(120:1)--++(180:1)--++(240:1)--++(300:1);
            \draw (i)--++(300:1)--++(0:1)--++(60:1)--++(120:1)--++(180:1);
        \node[tensor,fill=red] (f) at (300:1) {};
        \draw (f)--++(240:1)--++(300:1)--++(0:1)--++(60:1)--++(120:1);
        \filldraw (180:0.5) circle (2pt);
          \node[anchor=north,inner sep=5pt] at (180:0.5) {$E$};
        \filldraw ($(180:1)+(120:0.5)$) circle (2pt);
          \node[anchor=north,inner sep=5pt] at ($(180:1)+(120:0.5)$) {$e_1$};
          \filldraw ($(180:1)+(-120:0.5)$) circle (2pt);
          \node[anchor=north,inner sep=5pt] at ($(180:1)+(-120:0.5)$) {$e_2$};
        \filldraw (60:0.5) circle (2pt);
          \node[anchor=north west,inner sep=2pt] at (60:0.5) {$F$};
            \filldraw ($(60:1)+(120:0.5)$) circle (2pt);
        \node[anchor=south,inner sep=5pt] at ($(60:1)+(120:0.5)$) {$f_1$};
          \filldraw ($(60:1)+(0:0.5)$) circle (2pt);
          \node[anchor=south,inner sep=5pt] at ($(60:1)+(0:0.5)$) {$f_2$};
          \filldraw (300:0.5) circle (2pt);
          \filldraw ($(300:1)+(0:0.5)$) circle (2pt);
          \node[anchor=north,inner sep=5pt] at ($(300:1)+(0:0.5)$) {$V$};
          \filldraw ($(300:1)+(0:1)+(60:0.5)$) circle (2pt);
          \node[anchor=north,inner sep=5pt] at ($(300:1)+(0:1)+(60:0.5)$) {$v_1$};
          \filldraw ($(300:1.5)+(0:1)$) circle (2pt);
          \node[anchor=north,inner sep=5pt] at ($(300:1.5)+(0:1)$) {$v_2$};
          \filldraw ($(300:1)+(240:0.5)$) circle (2pt);
          \node[anchor=north,inner sep=5pt] at ($(300:1)+(240:0.5)$) {$D$};
          \filldraw ($(300:1.5)+(240:1)$) circle (2pt);
          \node[anchor=north,inner sep=5pt] at ($(300:1.5)+(240:1)$) {$d_2$};
          \filldraw ($(300:1)+(240:1)+(180:0.5)$) circle (2pt);
          \node[anchor=north,inner sep=5pt] at ($(300:1)+(240:1)+(180:0.5)$) {$d_1$};
        \end{tikzpicture} \quad , \quad 
\begin{tikzpicture}[scale=0.7]
            \node[tensor,label=below :$A_2$] (l) at (0,0) {};
            \draw (l)--++(0,1);
            \draw (l)--++(60:0.7);
            \draw (l)--++(-180:0.7);
            \draw (l)--++(-60:1);
            \node[tensor,label=below :$A_1$] (r) at (-60:1) {};
            \draw (r)--++(0,1);
            \draw (r)--++(0:0.7);
            \draw (r)--++(-120:0.7);
            \draw (l)--(r);
            \draw[line width=2mm,line cap=round] (0,0.65) --++ (-60:1);
            \node[anchor=south,inner sep=5pt] at (0.5,0.5) {$l$};
        \end{tikzpicture} = \ - \ 
        \begin{tikzpicture}[scale=0.7]
            \node[tensor,fill=red, label=below left :$E_1+F_1$] (l) at (0,0) {};
            \draw (l)--++(0,0.7);
            \draw (l)--++(60:0.7);
            \draw (l)--++(-180:0.7);
            \draw (l)--++(-60:1);
            \node[tensor,label=below :$A_1$] (r) at (-60:1) {};
            \draw (r)--++(0,0.7);
            \draw (r)--++(0:0.7);
            \draw (r)--++(-120:0.7);
            \draw (l)--(r);
            \end{tikzpicture}\ - \
            \begin{tikzpicture}[scale=0.7]
            \node[tensor,label=below :$A_2$] (l) at (0,0) {};
            \draw (l)--++(0,0.7);
            \draw (l)--++(60:0.7);
            \draw (l)--++(-180:0.7);
            \draw (l)--++(-60:1);
            \node[tensor,fill=red,label=below right :$D_1+V_1$] (r) at (-60:1) {};
            \draw (r)--++(0,0.7);
            \draw (r)--++(0:0.7);
            \draw (r)--++(-120:0.7);
            \draw (l)--(r);
            \end{tikzpicture} \ - (\eta + \kappa) 
            \begin{tikzpicture}[scale=0.7]
            \node[tensor,label=below :$A_2$] (l) at (0,0) {};
            \draw (l)--++(0,0.7);
            \draw (l)--++(60:0.7);
            \draw (l)--++(-180:0.7);
            \draw (l)--++(-60:1);
            \node[tensor,fill=red,label=below right :$A_1$] (r) at (-60:1) {};
            \draw (r)--++(0,0.7);
            \draw (r)--++(0:0.7);
            \draw (r)--++(-120:0.7);
            \draw (l)--(r);
            \end{tikzpicture} \ .
        \end{equation*}

\begin{equation*}
        \begin{tikzpicture}[scale=0.7]
  \node[tensor,fill=red] (i) at (0,0) {};
            \draw (i)--++(0:1)--++(60:1)--++(120:1)--++(180:1)--++(240:1);
            \draw (i)--++(120:1)--++(180:1)--++(240:1)--++(300:1)--++(0:1);
            \draw (i)--++(240:1)--++(300:1)--++(0:1)--++(60:1)--++(120:1);
        \node[tensor,fill=red] (f) at (1,0) {};
        \draw (f)--++(300:1)--++(0:1)--++(60:1)--++(120:1)--++(180:1);
        \filldraw (0.5,0) circle (2pt);
          \filldraw (120:0.5) circle (2pt);
          \node[anchor=west,inner sep=5pt] at (120:0.5) {$C$};
          \filldraw ($(120:1)+(180:0.5)$) circle (2pt);
          \node[anchor=north,inner sep=5pt] at ($(120:1)+(180:0.5)$) {$c_1$};
          \filldraw ($(120:1)+(60:0.5)$) circle (2pt);
          \node[anchor=north,inner sep=5pt] at ($(120:1)+(60:0.5)$) {$c_2$};
          \filldraw (240:0.5) circle (2pt);
          \node[anchor=north,inner sep=5pt] at (240:0.5) {$D$};
          \filldraw ($(240:1)+(180:0.5)$) circle (2pt);
          \node[anchor=north,inner sep=5pt] at ($(240:1)+(180:0.5)$) {$d_1$};
          \filldraw ($(240:1)+(300:0.5)$) circle (2pt);
          \node[anchor=north,inner sep=5pt] at ($(240:1)+(300:0.5)$) {$d_2$};
           \filldraw ($(0:1)+(60:0.5)$) circle (2pt);
          \node[anchor=north west,inner sep=2pt] at ($(0:1)+(60:0.5)$) {$F$};
          \filldraw ($(0:1.5)+(60:1)$) circle (2pt);
          \node[anchor=south,inner sep=5pt] at ($(0:1.5)+(60:1)$) {$f_2$};
          \filldraw ($(0:1)+(60:1)+(120:0.5)$) circle (2pt);
          \node[anchor=south,inner sep=5pt] at ($(0:1)+(60:1)+(120:0.5)$) {$f_1$};
           \filldraw ($(0:1)+(300:0.5)$) circle (2pt);
          \node[anchor=north,inner sep=5pt] at ($(0:1)+(300:0.5)$) {$G$};
          \filldraw ($(0:1.5)+(300:1)$) circle (2pt);
          \node[anchor=north,inner sep=5pt] at ($(0:1.5)+(300:1)$) {$g_1$};
          \filldraw ($(0:1)+(300:1)+(240:0.5)$) circle (2pt);
          \node[anchor=north,inner sep=5pt] at ($(0:1)+(300:1)+(240:0.5)$) {$g_2$};
        \end{tikzpicture} 
        \quad , \quad 
         \begin{tikzpicture}[scale=0.7]
            \node[tensor,label=below :$A_2$] (l) at (1,0) {};
            \draw (l)--++(0,1);
            \draw (l)--++(60:0.7);
            \draw (l)--++(-60:0.7);
            \node[tensor,label=below:$A_1$] (r) at (0,0) {};
            \draw (r)--++(0,1);
            \draw (r)--++(120:0.7);
            \draw (r)--++(-120:0.7);
            \draw (l)--(r);
            \draw[line width=2mm,line cap=round] (0,0.5) -- (1,0.5);
            \node[anchor=south,inner sep=5pt] at (0.5,0.5) {$h$};
        \end{tikzpicture}   = \ - \
         \begin{tikzpicture}[scale=0.7]
            \node[tensor,fill=red,label=right :$F_1+G_1$] (l) at (1,0) {};
            \draw (l)--++(0,0.7);
            \draw (l)--++(60:0.7);
            \draw (l)--++(-60:0.7);
            \node[tensor,label=below :$A_1$] (r) at (0,0) {};
            \draw (r)--++(0,0.7);
            \draw (r)--++(120:0.7);
            \draw (r)--++(-120:0.7);
            \draw (l)--(r);
        \end{tikzpicture} - \
        \begin{tikzpicture}[scale=0.7]
            \node[tensor,label=below :$A_2$] (l) at (1,0) {};
            \draw (l)--++(0,0.7);
            \draw (l)--++(60:0.7);
            \draw (l)--++(-60:0.7);
            \node[tensor,fill=red,label=left :$C_1+D_1$] (r) at (0,0) {};
            \draw (r)--++(0,0.7);
            \draw (r)--++(120:0.7);
            \draw (r)--++(-120:0.7);
            \draw (l)--(r);
        \end{tikzpicture} \ - (\lambda + \kappa) 
        \begin{tikzpicture}[scale=0.7]
            \node[tensor,label=below :$A_2$] (l) at (1,0) {};
            \draw (l)--++(0,0.7);
            \draw (l)--++(60:0.7);
            \draw (l)--++(-60:0.7);
            \node[tensor,label=left :$A_1$] (r) at (0,0) {};
            \draw (r)--++(0,0.7);
            \draw (r)--++(120:0.7);
            \draw (r)--++(-120:0.7);
            \draw (l)--(r);
        \end{tikzpicture} \ .
        \end{equation*}

By using $\eta + \lambda + \kappa = 0$ and defining $Y=-(C_1+D_1+\kappa A_1)$, $R=-(F_1+G_1+ \lambda A_2 )$, $W=-(E_1+F_1+ \eta A_2)$ and $B=-( C_1+V_1)$ we obtain the desired equations.
\end{proof}

\bibliography{bibliography}

\end{document}